\newtheorem{theorem}{Theorem}
\newtheorem{lemma}{Lemma}
\newtheorem{proposition}{Proposition}
\newtheorem{remark}{Remark}
\newenvironment{proof}{{\indent \indent \it Proof:}}{\hfill $\blacksquare$\par}
\newcolumntype{P}[1]{>{\centering\arraybackslash}p{#1}} % centered p
\newcommand{\tr}{\operatorname{Tr}}
\begin{document}

% \title{Linear Complexity WMMSE Approaches for Massive MU-MIMO Systems}
\title{Rethinking WMMSE: Can Its Complexity Scale Linearly With the Number of BS Antennas?}

\author{Xiaotong Zhao,
 Siyuan Lu, 
Qingjiang Shi, and Zhi-Quan Luo
        % <-this % stops a space
        
%         Part of this paper has been
% published in IEEE 21th International Workshop on Signal Processing Advances
% in Wireless Communications (SPAWC 2020) [1].

% \thanks{This paper was presented in part
% at the IEEE Global Communications Conference (GLOBECOM), Taipei, Taiwan, Dec. 2020 \cite{lu2020learning}.}
% <-this % stops a space\emph{(Corresponding author: Qingjiang Shi.)} 
\thanks{Xiaotong Zhao is with the School of Software Engineering,
Tongji University, Shanghai 201804, China (e-mail: xiaotongzhao@tongji.edu.cn).}
\thanks{Siyuan Lu was with the School of Software Engineering,
Tongji University, Shanghai 201804, China. He is now with Huawei Technology (e-mail: lusiyuan666@163.com).}
\thanks{Qingjiang Shi is with the School of Software Engineering, Tongji University,
Shanghai 201804, China, and also with Shenzhen Research Institute of Big Data, Shenzhen 518172, China (e-mail: shiqj@tongji.edu.cn).}

\thanks{Zhi-Quan Luo is with the Chinese University of Hong Kong, Shenzhen 518172, China, and also with Shenzhen Research Institute of Big Data, Shenzhen 518172, China (e-mail: luozq@cuhk.edu.cn).}}

% The paper headers
% \markboth{Journal of \LaTeX\ Class Files,~Vol.~14, No.~8, August~2021}%
% {Shell \MakeLowercase{\textit{et al.}}: A Sample Article Using IEEEtran.cls for IEEE Journals}

\IEEEpubid{0000--0000/00\$00.00~\copyright~2021 IEEE}
% \IEEEpubidadjcol
% Remember, if you use this you must call \IEEEpubidadjcol in the second
% column for its text to clear the IEEEpubid mark.

\maketitle

\begin{abstract}
Precoding design for maximizing weighted sum-rate (WSR) is a fundamental problem for downlink of massive multi-user multiple-input multiple-output (MU-MIMO) systems. It is well-known that this problem is generally NP-hard due to the presence of multi-user interference. The weighted minimum mean-square error (WMMSE) algorithm is a popular approach for WSR maximization. However, its computational complexity is cubic in the number of base station (BS) antennas, which is unaffordable when the BS is equipped with a large antenna array. In this paper, we consider the WSR maximization problem with either a sum-power constraint (SPC) or per-antenna power constraints (PAPCs). For the former, we prove that any nontrivial stationary point must have a low-dimensional subspace structure, and then propose a reduced-WMMSE (R-WMMSE) with linear complexity by exploiting the solution structure. For the latter, we propose a linear-complexity WMMSE approach, named PAPC-WMMSE, by using a novel recursive design of the algorithm. Both R-WMMSE and PAPC-WMMSE have simple closed-form updates and guaranteed convergence to stationary points. Simulation results verify the efficacy of the proposed designs, especially the much lower complexity as compared to the state-of-the-art approaches for massive MU-MIMO systems.
\end{abstract}
\begin{IEEEkeywords}
Massive MU-MIMO, downlink precoding, weighted MMSE, linear complexity, sum power constraint, per-antenna power constraints.
\end{IEEEkeywords}

\section{Introduction}
\IEEEPARstart{M}{assive} MU-MIMO is one of the key enabling technologies for the fifth-generation and next-generation networks\cite{zhang2020prospective,marzetta2016fundamentals,wang2019overview,de2022overview}. For MU-MIMO systems, a fundamental problem is to design transmit precoders that maximize the system weighted sum-rate (WSR) subject to power constraints. However, it is well-known that the WSR maximization problem is NP-hard\cite{luo2008dynamic,liu2011coordinated}. Meanwhile, since the base station (BS) would have hundreds or thousands of antennas in massive MU-MIMO systems\cite{zhang2020prospective,marzetta2016fundamentals,wang2019overview,de2022overview}, the computational complexity of precoding remains a big challenge. Therefore, it is highly desirable to have an efficient precoding algorithm with a complexity that scales linearly with the number of BS antennas at most while still having guaranteed convergence to stationary points of the WSR maximization problem. This is the focus of this paper.

There are two classes power constraints for the WSR maximization problems, i.e., \emph{sum power constraint} (SPC) and \emph{per-antenna power constraints} (PAPCs).

\subsubsection{\textbf{WSR Maximization With SPC}} Most of the existing works on WSR maximization considered SPC, i.e., the total transmit power of all BS antennas is not allowed to exceed a given power budget. Among these works, a few are dedicated to globally solving the WSR problems, e.g., \cite{joshi2012weighted} for MU-MISO systems, \cite{liu2012achieving} for Gaussian interference channels, etc. However, these global methods have exponential computational complexity, which is prohibitive for practical MU-MIMO systems. In practice, sub-optimal precoding methods with low complexity are preferred. Some of them are well-recognized in practical MU-MIMO systems, such as maximum ratio transmission (MRT) \cite{kammoun2014linear}, zero-forcing (ZF) \cite{gao2011linear}, regularized ZF precoding (RZF) \cite{nguyen2019multi}, etc. Although these suboptimal methods have low complexity, they are not aimed to directly solve the WSR maximization problem and generally come with significant performance loss. In contrast to suboptimal methods, iterative precoding algorithms try to directly solve the WSR maximization problem and thus can achieve a good balance between the WSR maximization and the computational complexity.

\IEEEpubidadjcol

There are mainly two classes of iterative algorithms for the WSR maximization problem. One is the successive convex approximation (SCA) method. The authors in \cite{shi2009monotonic} and \cite{kim2011optimal} sequentially constructed convex surrogates of the non-convex WSR objective and solved the resulting convex problems to increase the WSR. The SCA-based algorithm ensures convergence to a stationary point of the WSR maximization problem. 
Several variants of the SCA-based method were also proposed for various scenarios in \cite{ng2010linear,tran2012fast,nguyen2014sum}. The other class of iterative precoding methods is the classical weighted minimum mean-square error (WMMSE) algorithm \cite{christensen2008TWC,shi2011iteratively}. The idea behind the WMMSE is the relationship between mean-square error (MSE) and signal to interference plus noise ratio (SINR). By using the MSE-SINR relationship, the WSR maximization problem can be equivalently transformed into a weighted sum-MSE minimization problem, which is then iteratively solved by applying the block coordinate descent (BCD) method, leading to the WMMSE algorithm with three closed-form updates. The WMMSE algorithm is guaranteed to reach a stationary point of the original WSR maximization problem. Moreover, the WMMSE generally performs much faster than the SCA-based algorithm because of the simple closed-form updates which allow parallel implementation. As a result, the WMMSE algorithm has been widely used in spectral efficiency optimization of various communication systems \cite{sun2018learning,zhao2021exploiting,ghauch2017sum}. However, due to the required matrix inversion operation for precoders update, the computational complexity of WMMSE is \emph{cubic} in the number of BS antennas and thus is unaffordable for practical massive MU-MIMO systems with hundreds or thousands of antennas. To the best of our knowledge, there exists no linear-complexity precoding algorithm that can guarantee convergence to stationary points of the WSR maximization problem under SPC.

\subsubsection{\textbf{WSR Maximization With PAPCs}} Since the power amplifier of each BS antenna has its own power budget \cite{yu2007transmitter}, it is more practical to consider the WSR maximization problem with PAPCs. A straightforward way to tackle this problem is to first derive the precoders under the SPC and then downscale them to satisfy the PAPCs\cite{lee2013zero}. However, such a strategy is rather heuristic and may incur a significant performance degradation. Some other works \cite{li2015parallel,jang2015low,pham2017efficient} were dedicated to developing variants of ZF precoders under PAPCs for MU-MISO or MU-MIMO systems. Since ZF can remove the multi-user interference, the corresponding WSR maximization problem with PAPCs becomes convex and thus can be globally solved. However, these ZF-based variants can not achieve the maximum spectral efficiency because they are not aimed to directly solve the original WSR maximization problem.

In addition to the above suboptimal methods, researchers have also proposed some iterative algorithms to directly solve the non-convex WSR maximization problem with PAPCs. The authors in \cite{Mao2019rate} investigated the WSR maximization problem for rate-splitting-multiple-access-based downlink coordinated multi-point joint transmission networks subject to both the individual quality of service rate constraints and the PAPCs. They used the same technique as in WMMSE \cite{shi2011iteratively} to transform the original WSR problem into a weighted sum-MSE minimization problem and solved it using BCD. In the BCD iteration for the precoder update under PAPCs, the authors applied the interior-point method to solve the corresponding subproblem, a convex quadratically constrained quadratic program (QCQP). In \cite{shi2008per}, the authors considered jointly optimizing transmit powers, transmit filters, and receive filters for MU-MIMO systems by leveraging the well-known SINR-MSE relationship. Due to the coupling of variables, the problem was decomposed into several subproblems, whereby the power optimization subproblem was globally solved by a geometric program while the transmitter optimization subproblem was solved by a second-order cone program, resulting in high computational complexity. In \cite{thomas2018hybrid}, the MU-MIMO hybrid beamforming (HBF) problem for WSR maximization under PAPCs was investigated by majorization-minimization and alternating optimization, involving high-dimensional matrix operation with cubic complexity in the number of BS antennas.

To summarize, the existing iterative precoding algorithms for WSR maximization with SPC or PAPCs generally require high complexity operations (e.g., high dimensional matrix inversion), resulting in a cubic complexity in the number of BS antennas. This makes the existing algorithms unsuitable for implementation in MIMO systems with a large antenna array. Is there an MU-MIMO precoding method with a linear complexity (in the number of BS antennas) and a theoretical convergence guarantee to stationary points of the WSR maximization problem? This paper answers this question in the affirmative. Specifically, we investigate the WSR maximization problem with either SPC or PAPCs, and propose linear-complexity WMMSE approaches to solve them with guaranteed convergence to stationary points. Here, `linear complexity' means that the computational complexity of the proposed approaches is linear in the number of BS antennas.

The main contributions of this paper are two-fold.

\begin{itemize}
\item[1)] {\textbf{R-WMMSE With Linear Complexity for the SPC Case:}}
For an MU-MIMO system with $M\gg N \geq D\geq K$, where $M$ denotes the number of BS antennas, $N$ denotes the total number of user antennas, and $D$ denotes the total number of data streams sent from the BS to $K$ users, we prove the first key property, termed  \emph{low-dimensional subspace property}, that any nontrivial stationary point of the WSR maximization problem under SPC must lie in the range space of $\mathbf{H}^H$, where $\mathbf{H}\in\mathbb{C}^{N\times M}$ denotes the channel matrix between the BS and users. Further, we prove the second key property that any nontrivial stationary point must satisfy the SPC with equality, which is termed {\em full power property}. Using the low-dimensional subspace property, we reduce the WSR problem to a much lower dimensional decision space, thus successfully avoiding high dimensional matrix operations of the WMMSE algorithm. Furthermore, by exploiting the full power property, the reduced WSR maximization problem can be further transformed into an unconstrained problem, making the follow-up BCD iteration easier to implement. The obtained variant of WMMSE is named reduced-WMMSE (R-WMMSE). The complexity of R-WMMSE is in the order of $\mathcal{O}(M)$, and the per-iteration complexity is even independent of $M$. Moreover, by using block matrix operations, it is shown that the R-WMMSE requires only $D$-dimensional matrix inversions during each iteration. Similar to the WMMSE algorithm, the R-WMMSE is provably convergent to at least stationary points of the original WSR maximization problem. Numerical experiments show that the R-WMMSE achieves the same WSR performance as the WMMSE, but performs significantly faster than the WMMSE when the BS is equipped with a large antenna array. For example, when $M=1024$, the R-WMMSE is $100+$ times faster than the WMMSE.

% We propose a reduced WMMSE algorithm, named R-WMMSE, to achieve linear complexity for solving the WSR maximization with SPC. The proposed R-WMMSE leverages two key structural properties of the optimal precoders, namely, the {\em low-dimensional subspace property} and the {\em full power property}. By discovering and exploiting these two structural properties, the optimization space for the precoder can be significantly reduced and the bisection in the classical WMMSE-based solution can be eliminated. To fully exploit the advantages of these two properties, we first transform the original WSR maximization with SPC to an unconstrained optimization problem over a low-dimensional subspace, then use the BCD method to update the auxiliary variables and precoder. As a result, the precoder sub-problem is an unconstrained optimization problem over a low-dimensional subspace, which has a low-complexity closed-form solution. Finally, a linear-complexity algorithm is obtained.

\item[2)] {\textbf{PAPC-WMMSE With Linear Complexity for the PAPCs Case:}}
For the WSR maximization problem with PAPCs, we first use the WMMSE framework to transform the problem into the weighted sum-MSE minimization problem and then apply BCD to solve it, yielding three subproblems including transmitter optimization (i.e., precoder optimization), receiver optimization and weight matrix optimization. Due to the PAPCs, the subproblem of precoder optimization becomes a convex quadratic program with multiple quadratic constraints, which appears difficult at first glance. By applying BCD with the precoder variables further split into $M$ blocks (each corresponding to per antenna) and exploiting the particular problem structure, we obtain a closed-form update for each block variable. Furthermore, by a novel recursive design of the algorithm, we eventually obtain a linear-complexity algorithm, termed PAPC-WMMSE, with guaranteed convergence to stationary points. Finally, numerical experiments show that the proposed PAPC-WMMSE algorithm performs much better than the state-of-the-art algorithms in terms of both WSR performance and CPU time.

% By further splitting the precoder variables into $M$ blocks each corresponding to per antenna, we sequentially update $M$ block variables and yield closed-form updates by exploiting the problem structure.       

% To efficiently solve the WSR maximization with PAPCs, we propose a PAPC-WMMSE algorithm that can achieve linear complexity in terms of the number of the BS antennas. The proposed PAPC-WMMSE is implemented in a block-wise update manner, and a {\em scaled-norm property} is utilized to enable a linear-complexity algorithm.  
% \item {\textbf{Convergence and Complexity Analysis:}}
% We establish the convergence theories for the proposed R-WMMSE algorithm and the PAPC-WMMSE algorithm, which show that the R-WMMSE algorithm and the PAPC-WMMSE algorithm can converge to the stationary point of the WSR maximization problem with SPC and PAPCs, respectively. In addition, the holistic computational complexity analysis for both algorithms is also provided.
% \item {\textbf{Efficient and Effective Implementation in MU-MIMO Systems:}}
% We provide extensive experimental results with practical MU-MIMO settings under both SPC and PAPCs. The results show that the proposed algorithms achieve a near-optimal performance with a lower complexity compared with the state-of-the-art baselines. 
\end{itemize} 

The rest of the paper is organized as follows. Section II presents the downlink system model and problem formulation. Section III briefly reviews the classical WMMSE framework. Section IV proposes R-WMMSE under SPC and provides the corresponding convergence result.
Section V proposes PAPC-WMMSE under PAPCs and analyzes its convergence result. Comprehensive numerical experiments are provided in Section VI. Finally, Section VII concludes the paper.

\emph{Notation:} Throughout this paper, scalars are denoted by both lower and upper case letters, while vectors and matrices are denoted by boldface lower case and boldface upper case letters, respectively. $\Re e(a)$ is the real part of a complex scalar $a$. The space of $M\times N$ complex matrices is expressed as $\mathbb{C}^{M\times N}$. For a matrix $\mathbf{A}$, $\mathbf{A}^{T}$, $\mathbf{A}^{\ast}$, $\mathbf{A}^{H}$, $\mathbf{A}^{-1}$, $\text{Tr}(\mathbf{A})$, and $R(\mathbf{A})$ signify its transpose, conjugate, conjugate transpose, inverse, trace, and range space, respectively. The Euclidean norm of a vector $\mathbf{a}$ is defined as $\|\mathbf{a}\|_{2}=\sqrt{\mathbf{a}^{H}\mathbf{a}}$. The Frobenius norm of a matrix $\mathbf{A}$ is defined by $\|\mathbf{A}\|_{F}=\sqrt{\text{Tr}\left(\mathbf{A}^{H}\mathbf{A}\right)}$. $\mathbf{I}$ denotes the identity matrix, and $\text{blkdiag}(\mathbf{A}_1,\ldots,\mathbf{A}_K)$ denotes a block diagonal matrix with $\mathbf{A}_1,\ldots,\mathbf{A}_K$ as the diagonal blocks. Finally, the distribution of a circularly symmetric complex Gaussian random vector with mean $\mu$ and covariance matrix $\mathbf{\Sigma}$ is represented by $\mathcal{CN}(\mu,\mathbf{\Sigma})$.

\section{System Model and Problem Formulation}

\subsection{Downlink System Model}
Consider a downlink massive MU-MIMO system where a BS with $M$ transmit antennas simultaneously serves $K$ users each with $N_k$ receive antennas. Let $\mathbf{s}_k \in \mathbb{C}^{D_k\times 1}$ denote the symbol vector intended for user $k$ and $\mathbf{P}_k \in \mathbb{C}^{M \times D_k}$ denote the linear precoder for user $k$. Then the transmitted signal of BS can be expressed as
\begin{equation}\label{BS_signal}
    \mathbf{x}\triangleq\sum_{k=1}^{K}\mathbf{P}_k\mathbf{s}_k,
\end{equation}
where it is assumed that $\mathbf{s}_k \sim \mathcal{CN}\left(\mathbf{0},\mathbf{I}\right)$.

Under the flat-fading channel model assumption, the received signal at user $k$ can be expressed by
\begin{equation}\label{system_model_yk}
\begin{aligned}
\mathbf{y}_{k} &=\mathbf{H}_{k} \mathbf{x}+\mathbf{n}_{k} \\
&=\underbrace{\mathbf{H}_{k} \mathbf{P}_{k} \mathbf{s}_{k}}_{\text {desired signal of user } k}+\underbrace{\sum_{j=1, j \neq k}^{K} \mathbf{H}_{k} \mathbf{P}_{j} \mathbf{s}_{j}}_{\text {multi-user interference }}+\mathbf{n}_{k}, \ \forall k,
\end{aligned}
\end{equation}
where $\mathbf{H}_{k} \in \mathbb{C}^{N_k \times M}$ denotes the channel matrix from the BS to user $k$, $\mathbf{n}_{k} \in \mathbb{C}^{N_k \times 1}$ is the additive white Gaussian noise vector with distribution $\mathcal{CN}\left(\mathbf{0},\sigma_k^2\mathbf{I}\right)$. Moreover, it is assumed that the transmitted symbol vectors for different users are independent of each other as well as the noise vectors.

Let us define $N\triangleq \sum_{k=1}^{K}N_k$ and $D\triangleq\sum_{k=1}^{K}D_k\geq K$. Furthermore, define $\mathbf{y}\triangleq[\mathbf{y}_{1}^{T},\mathbf{y}_{2}^{T},\ldots,\mathbf{y}_{K}^{T}]^{T}\in \mathbb{C}^{N\times 1}$, $\mathbf{H}\triangleq[\mathbf{H}_{1}^{T}, \mathbf{H}_{2}^{T},$ $\ldots,\mathbf{H}_{K}^{T}]^{T}\in \mathbb{C}^{N\times M}$, $\mathbf{P}\triangleq[\mathbf{P}_{1},\mathbf{P}_{2},\ldots,\mathbf{P}_{K}]\in \mathbb{C}^{M\times D}$, $\mathbf{s}\triangleq[\mathbf{s}_{1}^{T},\mathbf{s}_{2}^{T},\ldots,\mathbf{s}_{K}^{T}]^{T}\in \mathbb{C}^{D\times 1}$, and $\mathbf{n}\triangleq[\mathbf{n}_{1}^{T},$ $\mathbf{n}_{2}^{T},\ldots,\mathbf{n}_{K}^{T}]^{T}\in \mathbb{C}^{N\times 1}$. Then Eq. \eqref{system_model_yk} can be written in a more compact form as follows
\begin{equation}\label{system_model_y_compact}
\mathbf{y}=\mathbf{H}\mathbf{P} \mathbf{s}+\mathbf{n}. 
\end{equation}
We make a very mild assumption that $\mathbf{H}$ has full row rank throughout the rest of this paper. 

\begin{remark}\!\!\emph{:}
For a massive MU-MIMO system, the total number of BS antennas is much larger than the total number of user antennas, i.e., we have $M\gg N\geq D\geq K$, where the last two inequality follows directly. Hence, when the BS is equipped with an extremely large antenna array, an efficient precoding algorithm is expected to have computational complexity linear in $M$ or even independent of $M$.
% However, as mentioned previously, the complexity of the existing precoding methods (e.g., the popular  WMMSE\cite{shi2011iteratively}) is generally in the order of $\mathcal{O}(M^3)$.
\end{remark}

\subsection{Problem Formulation}
A fundamental problem is to find the set of optimal precoders $\{\mathbf{P}_k\}_{k=1}^K$ that maximizes the system WSR subject to power constraints. The WSR is given by
\begin{equation}\label{WSR_objective_function}
\begin{aligned}
R=\sum_{k=1}^{K} \alpha_{k} R_k, 
\end{aligned}
\end{equation}
where the weight $\alpha_{k}$ denotes the priority of user $k$, and $R_k$ is the achievable rate of user $k$ given by
\begin{equation}\label{single_user_rate}
\begin{aligned}
R_k\triangleq\log &\operatorname{det}\Bigg(\mathbf{I}+\mathbf{H}_{k} \mathbf{P}_{k} \mathbf{P}_{k}^{H} \mathbf{H}_{k}^{H} \\
&\times \Big(\sum_{j \neq k} \mathbf{H}_{k} \mathbf{P}_{j} \mathbf{P}_{j}^{H} \mathbf{H}_{k}^{H}+\sigma_{k}^{2} \mathbf{I}\Big)^{-1}\Bigg). 
\end{aligned}
\end{equation}

There are two kinds of power constraints, i.e., SPC and PAPCs. They give rise to the following two WSR maximization problems.

\emph{1) WSR Maximization With SPC:}
Most works focus on SPC that limits the total power consumed by all antennas at the BS. Under SPC, the WSR problem can be formulated as follows.
\begin{subequations}\label{wsr_total_problem}
\begin{align}
\max_{\left\{\mathbf{P}_{k}\right\}}\  \ &\sum_{k=1}^{K} \alpha_{k} R_k \label{wsr_total_objective}\\ 
\text { s.t. } \  \ & \sum_{k=1}^{K} \operatorname{Tr}\left(\mathbf{P}_{k} \mathbf{P}_{k}^{H}\right) \leq P_{\max},\label{wsr_total_constraints}
\end{align}
\end{subequations}
where $ P_{\max}$ represents the total transmit power budget of BS.
%The iterative WMMSE \cite{shi2011iteratively} algorithm is one of the most representative precoding algorithms for solving problem \eqref{wsr_total_problem}, which is guaranteed to converge to a stationary point.

\emph{2) WSR Maximization With PAPCs:}
Although SPC is widely considered in the literature, considering that the power amplifier of each antenna has its own limit on transmit power, we also consider the following WSR maximization with PAPCs.
\begin{subequations}\label{wsr_PAPC_problem}
\begin{align}
\max_{\left\{\mathbf{P}_{k}\right\}}\  \ & \sum_{k=1}^{K} \alpha_{k} R_{k} \label{WSR_PAPC}\\
\text { s.t. }\  \ & \sum_{k=1}^{K} \left[\mathbf{P}_{k} \mathbf{P}_{k}^{H}\right]_{m,m} \leq P_{m},\ \forall m, \label{eq_PAPC}
\end{align}
\end{subequations}
where $\left[\mathbf{A}\right]_{m,m}$ denotes the $m$-th diagonal element of matrix $\mathbf{A}$. Constraints \eqref{eq_PAPC} indicate that the transmit power at the $m$-th transmit antenna of the BS is not allowed to exceed $P_m$.

Clearly, problem \eqref{wsr_PAPC_problem} is more difficult than problem \eqref{wsr_total_problem} due to the presence of more quadratic constraints. However, the main difficulty of these two problems arises from the highly nonlinear and nonconvex WSR objective function. Moreover, following \cite{luo2008dynamic}, it can be shown that both problems are NP-hard, stated in the following proposition.
\begin{proposition}\label{lemma_NP_hard}
\emph{(WSR Maximization is NP-hard):} Both problem \eqref{wsr_total_problem} and problem \eqref{wsr_PAPC_problem} are NP-hard. 
\end{proposition}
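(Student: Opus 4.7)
The plan is to establish NP-hardness by a polynomial-time reduction from a classical NP-hard combinatorial problem, following the construction in \cite{luo2008dynamic}, which proves NP-hardness of weighted sum-rate maximization for scalar Gaussian interference channels by embedding the combinatorial instance into the channel gains, noise variances, and user weights of a $K$-user interference channel.

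I would first restrict both problems \eqref{wsr_total_problem} and \eqref{wsr_PAPC_problem} to the special case $N_k = D_k = 1$ for all $k$, so that each user has one receive antenna and is served by a single data stream. The precoder $\mathbf{P}_k$ then reduces to a beamforming vector $\mathbf{p}_k \in \mathbb{C}^{M\times 1}$, and both problems become MU-MISO downlink WSR problems. I would then take $M = K$ and choose the stacked channel matrix $\mathbf{H}\in\mathbb{C}^{K\times K}$, the weights $\{\alpha_k\}$, and the noise variances $\{\sigma_k^2\}$ exactly as in the reduction of \cite{luo2008dynamic}, so that an instance of the underlying NP-hard problem is encoded into the resulting MU-MISO WSR instance.

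For problem \eqref{wsr_PAPC_problem} with PAPCs, I would set the per-antenna budget $P_m$ equal to the $m$-th transmitter's power in the interference-channel instance of \cite{luo2008dynamic}; the feasible set of \eqref{wsr_PAPC_problem} then coincides with the distributed per-transmitter power constraints of that instance, and the reduction follows essentially immediately (observing that any beamforming vector $\mathbf{p}_k$ with support beyond the $k$-th coordinate can be shown to be suboptimal under the chosen channel structure). For problem \eqref{wsr_total_problem} with SPC, I would set $P_{\max}=\sum_m P_m$ and argue that the reduction gadget can be tuned so that the optimal precoders possess a ``diagonal'' structure (each user served from a dedicated antenna), so that any SPC-optimal allocation is in fact the same per-antenna allocation as in the interference-channel instance, yielding equality of optimal values.

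The main obstacle is this last step for the SPC case: since SPC is a single scalar constraint that is strictly weaker than PAPCs, one must rule out non-diagonal precoders that could in principle increase the WSR by shifting power across antennas. This is the nontrivial part of adapting the Luo--Zhang interference-channel construction to the downlink broadcast model under a single sum-power constraint, and I would handle it by a careful design of the gadget, for example by tuning the weights $\{\alpha_k\}$ or by appending auxiliary interfering virtual users to penalize non-diagonal allocations and force the optimum onto the target support pattern. Once this alignment is established, both \eqref{wsr_total_problem} and \eqref{wsr_PAPC_problem} inherit NP-hardness from the interference-channel WSR problem of \cite{luo2008dynamic}.
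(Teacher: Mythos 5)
The paper itself does not actually supply a proof of this proposition --- it only asserts that the result follows from \cite{luo2008dynamic} --- so the comparison here is between your sketch and the reduction strategy that citation is meant to stand for. Your overall plan (specialize to a MISO instance with $N_k=D_k=1$ and import the interference-channel gadget of \cite{luo2008dynamic}) is the intended route, but your sketch has a genuine gap, and it afflicts \emph{both} cases, not only the SPC case where you acknowledge it. The difficulty is that in the downlink model every precoding vector $\mathbf{p}_k$ may use all $M$ BS antennas: the PAPCs \eqref{eq_PAPC} only bound the total power $\sum_k\|\mathbf{p}_{k,m}\|_2^2$ radiated from antenna $m$, they do not force $\mathbf{p}_k$ to be supported on the $k$-th coordinate. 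Consequently, with $M=K$ the broadcast instance is a strict relaxation of the interference-channel instance: whenever the $K\times K$ gadget matrix $\mathbf{H}$ is nonsingular, the BS can (for example) zero-force across antennas and decouple the users entirely, so the optimal WSR of the constructed instance of \eqref{wsr_PAPC_problem} or \eqref{wsr_total_problem} need no longer encode the combinatorial quantity (e.g., the independence number) that makes the interference-channel problem hard. Your parenthetical claim that under PAPCs ``any beamforming vector with support beyond the $k$-th coordinate can be shown to be suboptimal'' is therefore doing all the work of the proof and is false as stated for the imported gadget; establishing equality (or a computable monotone relation) between the two optimal values is precisely the missing step, and it is missing under PAPCs just as much as under SPC.

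A correct argument cannot rely on forcing a diagonal support pattern on the optimal precoders; it must make multi-user interference unavoidable for \emph{every} linear precoder. The standard device (cf.\ \cite{liu2011coordinated} for the multi-antenna setting) is to encode the graph instance in the \emph{geometry} of the users' channel vectors --- adjacent users receive (nearly) parallel channels, so no choice of beamformers can serve one without strongly interfering with the other, while non-adjacent users receive orthogonal channels --- and then to prove quantitatively that the optimal WSR of the resulting downlink instance determines the maximum independent set. Your proposed repairs for the SPC case (re-tuning the weights $\alpha_k$, appending virtual users) point in a plausible direction but are not developed, and as written the proposal does not establish the proposition.
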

%\begin{proof}
%It can be proved by following %\cite{luo2008dynamic}.
%\end{proof}

% \begin{remark}
% \emph{(Relationship between PAPCs and SPC):}
% SPC is a single quadratic constraint, while PAPCs have $M$ quadratic constraints, which are more difficult to tackle than SPC. Furthermore, we find that if precoders satisfy PAPCs in \eqref{eq_PAPC}, they are also meeting an SPC with $P_{\max}=\sum_{m=1}^{M}P_{m}$. In contrast, the precoders satisfying SPC with $P_{\max}=\sum_{m=1}^{M}P_{m}$ are not necessary to meet the PAPCs. Thus we conclude that PAPCs are more strict than SPC.
% Note that the WMMSE algorithm in \cite{shi2011iteratively} is only designed for WSR maximization with SPC rather than PAPCs. To solve the PAPCs problem \eqref{wsr_PAPC_problem}, an intuitive way is to first solve the WSR problem with SPC in \eqref{wsr_total_problem} by exploiting the conventional precoding algorithm (e.g., WMMSE in \cite{shi2011iteratively}), and then simply downscale the obtained precoding matrices to satisfy the PAPCs. However, such a scheme usually does not exploit the full available power at each amplifier, resulting in significant performance degradation.
% \end{remark}

\section{ The Classical WMMSE: A Revisit}\label{section_WMMSE}
The WMMSE framework is widely used for WSR maximization problems. In this section, we introduce the classical WMMSE framework \cite{shi2011iteratively,shi2015secure} from a new perspective, i.e., without the need for a physically meaningful definition of MSE, as stated below.

The key idea behind the WMMSE algorithm is to transform the nonconvex WSR maximization problem into another equivalent tractable weighted MSE minimization problem by introducing auxiliary variables, which can be solved by the BCD method \cite{shi2011iteratively}. Originally, the equivalence between WSR maximization and weighted MSE minimization is established by the MSE-SINR relationship with explicit physical meaning. To derive the equivalence, we need to first define MSE in terms of the channel model and then write down the weighted MSE minimization problem. However, this is unnecessary for the equivalence establishment. Actually, we have the following essential facts summarized in Lemma \ref{lemma_WMMSE}.

\begin{lemma}\label{lemma_WMMSE}
\emph{(Principle Behind the WMMSE Framework \cite{shi2015secure}):} Given matrices $\mathbf{A}\in\mathbb{C}^{n\times p}$, $\mathbf{B}\in\mathbb{C}^{p\times l}$ and any positive definite matrix $\mathbf{N}\in\mathbb{C}^{n\times n}$, we have
\begin{equation}\label{eq_WMMSE_nature}
\begin{aligned}
\log&\det( \mathbf{I}+\mathbf{A} \mathbf{B}\mathbf{B}^H\mathbf{A}^H\mathbf{N}^{-1}) \\
&=\max_{\mathbf{\Omega}\succ \mathbf{0},\mathbf{\Gamma}} \log\det\left(\mathbf{\Omega}\right) -\tr\left(\mathbf{\Omega}\mathbf{E}\left(\mathbf{\Gamma},\mathbf{B}\right)\right)+l,
\end{aligned}
\end{equation}
where $\mathbf{\Gamma}\in\mathbb{C}^{n\times l}$ and positive definite matrix $\mathbf{\Omega}\in\mathbb{C}^{l\times l}$ are two auxiliary variables, and
\begin{equation}
\mathbf{E}\left(\mathbf{\Gamma},\mathbf{B}\right) \triangleq   \left(\mathbf{I}-\mathbf{\Gamma}^H\mathbf{A}\mathbf{B}\right)\left(\mathbf{I}-\mathbf{\Gamma}^H\mathbf{A}\mathbf{B}\right)^H + \mathbf{\Gamma}^H\mathbf{N}\mathbf{\Gamma}\label{eq:MSE}
\end{equation}
is an $l\times l$ matrix function. Meanwhile, the optimal $\mathbf{\Gamma}$ and $\mathbf{\Omega}$ for the right-hand side of \eqref{eq_WMMSE_nature} are respectively given by
\begin{equation}\label{eq_lemma_optimal_U}
\hat{\mathbf{\Gamma}}=\left(\mathbf{N}+\mathbf{A}\mathbf{B}\mathbf{B}^H\mathbf{A}^H\right)^{-1}\mathbf{A}\mathbf{B}
\end{equation}
and
\begin{equation}\label{eq_lemma_optimal_W}
\hat{\mathbf{\Omega}}=\left(\mathbf{E}\left(\hat{\mathbf{\Gamma}},\mathbf{B}\right)\right)^{-1}=\left(\mathbf{I}-\hat{\mathbf{\Gamma}}^H\mathbf{A}\mathbf{B}\right)^{-1}.
\end{equation}
\end{lemma}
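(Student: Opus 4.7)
The plan is to prove the variational identity by performing the inner maximization in two steps, first eliminating $\mathbf{\Omega}$ and then $\mathbf{\Gamma}$, and finally recognizing the resulting determinant via a Sylvester/Woodbury-type identity.

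First I would fix $\mathbf{\Gamma}$ and maximize over $\mathbf{\Omega}\succ\mathbf{0}$. Since $\mathbf{E}(\mathbf{\Gamma},\mathbf{B})$ is Hermitian positive definite (the sum of a positive semidefinite outer product and $\mathbf{\Gamma}^H\mathbf{N}\mathbf{\Gamma}$ plus potentially a shift — actually one must also note that $\mathbf{E}$ is PD because $\mathbf{N}\succ\mathbf{0}$ whenever $\mathbf{\Gamma}$ has full column rank, but $\mathbf{E}$ is always PD thanks to the $(\mathbf{I}-\mathbf{\Gamma}^H\mathbf{A}\mathbf{B})(\cdot)^H$ piece being PSD and we can argue PD by continuity/perturbation or just handle the general PSD case), the function $\mathbf{\Omega}\mapsto \log\det(\mathbf{\Omega})-\tr(\mathbf{\Omega}\mathbf{E})$ is strictly concave. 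Setting the gradient to zero gives the well-known stationary condition $\mathbf{\Omega}^{-1}=\mathbf{E}$, so $\hat{\mathbf{\Omega}}=\mathbf{E}(\mathbf{\Gamma},\mathbf{B})^{-1}$, and plugging back yields the partial maximum $-\log\det\mathbf{E}(\mathbf{\Gamma},\mathbf{B})$ (the $-\tr(\mathbf{I})+l$ terms cancel).

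Next I would minimize $\log\det\mathbf{E}(\mathbf{\Gamma},\mathbf{B})$ over $\mathbf{\Gamma}$. Expanding,
\begin{equation*}
\mathbf{E}(\mathbf{\Gamma},\mathbf{B})=\mathbf{I}-\mathbf{\Gamma}^H\mathbf{A}\mathbf{B}-\mathbf{B}^H\mathbf{A}^H\mathbf{\Gamma}+\mathbf{\Gamma}^H\mathbf{K}\mathbf{\Gamma},
\end{equation*}
with $\mathbf{K}\triangleq\mathbf{N}+\mathbf{A}\mathbf{B}\mathbf{B}^H\mathbf{A}^H\succ\mathbf{0}$. Completing the square in $\mathbf{\Gamma}$ gives
\begin{equation*}
\mathbf{E}(\mathbf{\Gamma},\mathbf{B})=(\mathbf{\Gamma}-\mathbf{K}^{-1}\mathbf{A}\mathbf{B})^H\mathbf{K}(\mathbf{\Gamma}-\mathbf{K}^{-1}\mathbf{A}\mathbf{B})+\mathbf{I}-\mathbf{B}^H\mathbf{A}^H\mathbf{K}^{-1}\mathbf{A}\mathbf{B}.
\end{equation*}
Since $\mathbf{K}\succ\mathbf{0}$, the first term is PSD and (by monotonicity of $\log\det$ on PD matrices) the minimizer is $\hat{\mathbf{\Gamma}}=\mathbf{K}^{-1}\mathbf{A}\mathbf{B}$, matching \eqref{eq_lemma_optimal_U}. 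Moreover $\mathbf{E}(\hat{\mathbf{\Gamma}},\mathbf{B})=\mathbf{I}-\hat{\mathbf{\Gamma}}^H\mathbf{A}\mathbf{B}$ (using $\mathbf{K}=\mathbf{K}^H$), so the two equivalent expressions for $\hat{\mathbf{\Omega}}$ in \eqref{eq_lemma_optimal_W} both drop out of this same identity.

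Finally I would identify the resulting $-\log\det\mathbf{E}(\hat{\mathbf{\Gamma}},\mathbf{B})$ with the left-hand side of \eqref{eq_WMMSE_nature}. Using Sylvester's determinant identity $\det(\mathbf{I}_l-\mathbf{X}^H\mathbf{Y})=\det(\mathbf{I}_n-\mathbf{Y}\mathbf{X}^H)$ with $\mathbf{X}=\mathbf{A}\mathbf{B}$ and $\mathbf{Y}=\mathbf{K}^{-1}\mathbf{A}\mathbf{B}$,
\begin{equation*}
\det\mathbf{E}(\hat{\mathbf{\Gamma}},\mathbf{B})=\det(\mathbf{I}_n-\mathbf{K}^{-1}\mathbf{A}\mathbf{B}\mathbf{B}^H\mathbf{A}^H)=\frac{\det\mathbf{N}}{\det(\mathbf{N}+\mathbf{A}\mathbf{B}\mathbf{B}^H\mathbf{A}^H)},
\end{equation*}
and one more Sylvester step gives $\det(\mathbf{N}+\mathbf{A}\mathbf{B}\mathbf{B}^H\mathbf{A}^H)/\det\mathbf{N}=\det(\mathbf{I}+\mathbf{A}\mathbf{B}\mathbf{B}^H\mathbf{A}^H\mathbf{N}^{-1})$, closing the identity. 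The main obstacle, insofar as there is one, is the last step: choosing the right determinant manipulation (Sylvester on the $l\times l$ versus $n\times n$ form) so that the matrix-inversion lemma-style cancellation produces exactly $\log\det(\mathbf{I}+\mathbf{A}\mathbf{B}\mathbf{B}^H\mathbf{A}^H\mathbf{N}^{-1})$; the remaining steps are a standard two-block BCD optimization with closed-form updates.
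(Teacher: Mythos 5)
Your proof is correct and follows the standard two-block argument (optimal $\mathbf{\Omega}=\mathbf{E}^{-1}$ reduces the objective to $-\log\det\mathbf{E}$, completing the square in $\mathbf{\Gamma}$ gives the MMSE receiver, and Sylvester's identity recovers $\log\det(\mathbf{I}+\mathbf{A}\mathbf{B}\mathbf{B}^H\mathbf{A}^H\mathbf{N}^{-1})$), which is exactly the derivation the paper defers to via its citation of \cite{shi2015secure} rather than reproving. The only tidy-up needed is your hedging about positive definiteness of $\mathbf{E}$: it is always strictly positive definite without any rank or perturbation argument, since $v^H\mathbf{E}(\mathbf{\Gamma},\mathbf{B})v=0$ forces $\mathbf{\Gamma}v=\mathbf{0}$ (because $\mathbf{N}\succ\mathbf{0}$) and then $\left(\mathbf{I}-\mathbf{B}^H\mathbf{A}^H\mathbf{\Gamma}\right)v=v=\mathbf{0}$.
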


Lemma \ref{lemma_WMMSE} can be used to quickly derive the equivalent problem of the WSR maximization problem and the corresponding WMMSE algorithm. Specifically, first, by comparing $R_k$ in \eqref{single_user_rate}  with the left-hand side of \eqref{eq_WMMSE_nature}, we define $\mathbf{A}_k\triangleq\mathbf{H}_k$, $\mathbf{B}_k\triangleq\mathbf{P}_k$, and $\mathbf{N}_k\triangleq\sum_{j \neq k} \mathbf{H}_{k} \mathbf{P}_{j} \mathbf{P}_{j}^{H} \mathbf{H}_{k}^{H}+\sigma_{k}^{2} \mathbf{I}$. Then by applying Lemma \ref{lemma_WMMSE} to the WSR maximization problems, we obtain an equivalent problem as follows
% WMMSE precoding algorithm for WSR maximization. Meanwhile, it also has a physical implication behind the mathematical equation, which is given in the following remark.
% \begin{remark}\label{remark_WMMSE}
% \emph{(Physical Implications Behind Lemma \ref{lemma_WMMSE}):}  If we consider the following channel model:
% \begin{equation}\notag
% \tilde{\mathbf{y}}=\mathbf{A}\mathbf{B}\tilde{\mathbf{s}} + \tilde{\mathbf{n}},
% \end{equation}
% where $\mathbf{A}$ and $\mathbf{B}$ are known at the receiver, $\tilde{\mathbf{s}}\sim \mathcal{CN}\left(\mathbf{0},\mathbf{I}\right)$, and noise $\tilde{\mathbf{n}}\sim \mathcal{CN}\left(\mathbf{0},\mathbf{N}\right)$. 
% The Shannon capacity of $\tilde{\mathbf{s}}$ is given by $\log\det( \mathbf{I}+\mathbf{A} \mathbf{B}\mathbf{B}^H\mathbf{A}^H\mathbf{N}^{-1})$. While $\mathbf{E}\left(\mathbf{\mathbf{\Gamma}},\mathbf{B}\right)$ can be viewed as the MSE matrix when MMSE receiver applied to the receiver. Thus Lemma \ref{lemma_WMMSE} reveals the relation between the Shannon capacity and weighted MSE.
% \end{remark}
% Following Lemma \ref{lemma_WMMSE}, the WSR maximization problem in \eqref{wsr_total_problem} or \eqref{wsr_PAPC_problem} are equivalent to the weighted sum-MSE minimization problem \eqref{eq_transform_problem_WMMSE} shown below, in the sense that the optimal solution $\mathbf{P}^{\star}$ for the two problems are identical.
\begin{equation}\label{eq_transform_problem_WMMSE}
\begin{aligned}
\min_{\mathbf{W},\mathbf{U},\mathbf{P}}\  \ & \sum_{k=1}^{K} \alpha_{k} \left(\operatorname{Tr}\left(\mathbf{W}_{k}\mathbf{E}_{k}\right) - \log\det\left(\mathbf{W}_{k}\right) \right) \\
\text { s.t.~~}\  \ & \mathbf{P} \in \mathcal{P}, 
\end{aligned}
\end{equation}
where $\mathbf{P} \in \mathcal{P}$ denotes either SPC or PAPCs, $\mathbf{W}\triangleq\left\{\mathbf{W}_{k}\right\}_{k=1}^{K}$ and $\mathbf{U}\triangleq\left\{\mathbf{U}_{k}\right\}_{k=1}^{K}$ are  auxiliary variables, which play the same roles of $\mathbf{\Omega}$ and $\mathbf{\Gamma}$ in Lemma \ref{lemma_WMMSE}, and
\begin{equation}\label{MSE_matrix_WMMSE}
\begin{aligned}
\mathbf{E}_{k}  \triangleq &(\mathbf{I}-\mathbf{U}_{k}^H \mathbf{H}_{k} \mathbf{P}_{k})(\mathbf{I}- \mathbf{U}_{k}^H \mathbf{H}_{k}\mathbf{P}_{k})^H \\
  & + \mathbf{U}_{k}^H\left(\sum_{j\neq k}  \mathbf{H}_{k} \mathbf{P}_{j}\mathbf{P}_{j}^H \mathbf{H}_{k}^H 
 + \sigma_k^2\mathbf{I}\right) \mathbf{U}_{k}
\end{aligned}
\end{equation}
is obtained by using \eqref{eq:MSE} rather than by a physically meaningful definition of MSE, despite the same form as MSE.
%corresponds to the MSE matrix of user $k$. 

Although problem \eqref{eq_transform_problem_WMMSE} is jointly non-convex over $\left(\mathbf{U},\mathbf{W},\mathbf{P} \right)$, it is convex with respect to each individual variable $\mathbf{U},\mathbf{W},\mathbf{P}$. Thus, the BCD method can be applied to iteratively minimize the weighted sum-MSE cost function, yielding the WMMSE algorithm. Specifically, by invoking the results of Lemma \ref{lemma_WMMSE}, the update of $\mathbf{U}$ while fixing $\mathbf{W}$ and $\mathbf{P}$ is given by
\begin{equation}\label{U_update_WMMSE}
\mathbf{U}_{k} = \left(\sum_{j=1}^K \mathbf{H}_{k}\mathbf{P}_{j}\mathbf{P}_{j}^{H}\mathbf{H}_{k}^{H}+\sigma_{k}^{2} \mathbf{I} \right)^{-1}\mathbf{H}_{k}\mathbf{P}_{k},\; \forall\; k,
\end{equation}
and the update of $\mathbf{W}$ while fixing the other two block variables is given by
\begin{equation}\label{W_update_WMMSE}
\mathbf{W}_{k} = \left(\mathbf{I}-\mathbf{U}_{k}^H\mathbf{H}_k\mathbf{P}_k \right)^{-1},\; \forall\; k.
\end{equation}
While fixing $\mathbf{U}$ and $\mathbf{W}$, the precoder update is obtained by solving the following problem
\begin{equation}\label{eq_P_update_WMMSE}
    \begin{aligned}
        \min_{\mathbf{P}}  ~   & \sum_{k=1}^{K} \alpha_{k}\operatorname{Tr}\left(\mathbf{W}_{k}\left(\mathbf{I}-\mathbf{U}_{k}^{H}\mathbf{H}_{k} \mathbf{P}_{k}\right)\left(\mathbf{I}-\mathbf{U}_{k}^{H}\mathbf{H}_{k} \mathbf{P}_{k}\right)^H  \right)\\
       &+ \sum_{k=1}^{K}\alpha_{k}\operatorname{Tr}\left(\mathbf{W}_{k}\sum_{j\neq k} \mathbf{U}_{k}^{H}\mathbf{H}_{k} \mathbf{P}_{j}\mathbf{P}_{j}^{H}\mathbf{H}_{k}^{H}\mathbf{U}_{k}\right)     \\
        \text{ s.t.~} \     & \mathbf{P} \in \mathcal{P}.
    \end{aligned}
\end{equation}

By integrating the above updates, a detailed description of the WMMSE algorithm is given in Algorithm \ref{alg_WMMSE}. It is seen that line 3 and line 4 require only small-scale matrix inversion operations because both the number of symbols $D_k$  and the number of receive antennas $N_k$ of user $k$ are small (typically, we have $D_k\leq N_k\leq 4$ for user equipment in mobile communications). However, line 5 requires an $M$-dimensional matrix inversion operation for the SPC case (see \cite{shi2011iteratively} for more details) or calling interior-point methods, resulting in a computational complexity of at least $\mathcal{O}(M^3)$, which is unaffordable when $M$ is extremely large in massive MU-MIMO systems.

%We refer the reader to \cite{shi2011iteratively} for more details on the WMMSE algorithm for the SPC case.

\begin{algorithm}[!tb]
    \renewcommand{\algorithmicrequire}{\textbf{Initialization:}}
    \renewcommand{\algorithmicensure}{\textbf{Output:}}
    \caption{The WMMSE Framework}  \label{alg_WMMSE}
    \begin{algorithmic}[1]
    \REQUIRE Initialize $\mathbf{P}$ such that $\mathbf{P} \in \mathcal{P}$ and $\mathbf{W}_k=\mathbf{I},\forall k$. Set the tolerance of accuracy $\epsilon$.
    \REPEAT
    \STATE $\mathbf{W}'_{k} = \mathbf{W}_{k},\; \forall\; k$;\\
    \STATE  $\mathbf{U}_{k} = \left(\sum_{j=1}^K \mathbf{H}_{k}\mathbf{P}_{j}\mathbf{P}_{j}^{H}\mathbf{H}_{k}^{H}+\sigma_{k}^{2} \mathbf{I} \right)^{-1}\mathbf{H}_{k}\mathbf{P}_{k},\; \forall\; k$;
    \STATE   $\mathbf{W}_{k} = \left(\mathbf{I}-\mathbf{U}_{k}^H\mathbf{H}_k\mathbf{P}_k \right)^{-1},\; \forall\; k$;
     \STATE Update $\mathbf{P}$ by solving problem \eqref{eq_P_update_WMMSE};
    \UNTIL $|\sum_k \alpha_k\log\det(\mathbf{W}_k)-\sum_k \alpha_k\log\det(\mathbf{W}'_k)|\leq \epsilon$.
    \ENSURE $\mathbf{P}_k,\forall k$.
    \end{algorithmic}
\end{algorithm}

% Since the power constraints set $\mathcal{P}$ is often a convex set, thus problem \eqref{eq_P_update_WMMSE} is convex and can be handled by IPM. However, it will incur a complexity of $\mathcal{O}\left(M^{3}D^3\right)$ \cite{wang2014outage}. Generally, there exists a more efficient method for solving problem \eqref{eq_P_update_WMMSE}. For example, for the SPC case, $\mathbf{P}$ can be obtained by introducing the Lagrange multiplier and performing a bisection search \cite{shi2011iteratively}. Nevertheless, the computational complexity in this case is still $\mathcal{O}(M^3)$, since it involves a high-dimensional matrix inversion with the complexity. Furthermore, the bisection search will also bring additional complexity. For the PAPCs case, no other efficient solution other than IPM has been proposed in the existing works. These facts will make the computational complexity of WMMSE unbearable in a practical massive MU-MIMO system since the number of BS is equipped with a large-scale antennas array. Therefore, designing low-complexity schemes for WSR objective with both SPC and PAPCs is in urgent demand.

\section{The Proposed R-WMMSE for the SPC Case}
In this section, we investigate the WSR maximization problem with SPC and propose a variant of WMMSE with linear computational complexity by exploring the structure of stationary points of problem \eqref{wsr_total_problem}. In what follows, we first state two important properties as well as an equivalent problem reformulation. Then we present the R-WMMSE algorithm and study its convergence.
\subsection{Important Properties and Problem Reformulation}
As mentioned in Section \ref{section_WMMSE}, the original WMMSE algorithm for the SPC case in \cite{shi2011iteratively} requires a high-dimensional matrix inversion operation in each iteration, resulting in cubic computational complexity of $\mathcal{O}\left(M^3\right)$. Fortunately, by investigating the structure of stationary points of problem \eqref{wsr_total_problem}, we can finally derive a linear-complexity WMMSE algorithm. Particularly, to distinguish stationary points, we give the following definition. 

\emph{Definition 1} \emph{(Trivial Stationary Point):}
We say a point $\mathbf{P}$ satisfying $\mathbf{H}_k\mathbf{P}_k=\mathbf{0}, \forall k$, which results in a zero WSR, is a trivial stationary point of problem \eqref{wsr_total_problem}.

With the definition of a trivial stationary point, we have the following proposition for nontrivial stationary points of problem \eqref{wsr_total_problem},  which states a critical property.
\begin{proposition}\label{proposition_column_space}
\emph{(Low-Dimensional Subspace Property):} Any nontrivial stationary point $\{\mathbf{P}_{k}^{\star}\}$ of problem \eqref{wsr_total_problem} must lie in the range space of $\mathbf{H}^{H}$, i.e., $\mathbf{P}_{k}^{\star}=\mathbf{H}^{H}\mathbf{X}_{k}$, with some $\mathbf{X}_{k} \in \mathbb{C}^{ N \times  D_{k}}, \forall k $.
\end{proposition}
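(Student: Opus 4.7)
The plan is to use KKT stationarity together with a structural property of the gradient, and the key observation is that $\nabla_{\mathbf{P}_k^*}R$ always lies in $R(\mathbf{H}^H)$. First I would write $\mathbf{J}_k \triangleq \sum_{j\neq k}\mathbf{H}_k\mathbf{P}_j\mathbf{P}_j^H\mathbf{H}_k^H + \sigma_k^2\mathbf{I}$ and $\mathbf{T}_k \triangleq \mathbf{J}_k + \mathbf{H}_k\mathbf{P}_k\mathbf{P}_k^H\mathbf{H}_k^H$, so that $R_k = \log\det(\mathbf{T}_k)-\log\det(\mathbf{J}_k)$, and differentiate via the standard Wirtinger/$\log\det$ formula to obtain
\begin{equation*}
\nabla_{\mathbf{P}_k^*} R = \alpha_k\mathbf{H}_k^H\mathbf{T}_k^{-1}\mathbf{H}_k\mathbf{P}_k + \sum_{j\neq k}\alpha_j\mathbf{H}_j^H(\mathbf{T}_j^{-1}-\mathbf{J}_j^{-1})\mathbf{H}_j\mathbf{P}_k.
\end{equation*}
Every term is left-multiplied by some $\mathbf{H}_j^H$, hence the gradient lies in $R(\mathbf{H}^H)$.

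Next I would invoke KKT. The only constraint $g(\mathbf{P})=\sum_k\tr(\mathbf{P}_k\mathbf{P}_k^H)-P_{\max}$ has gradient $\mathbf{P}_k$, which is nonzero whenever $\mathbf{P}\neq\mathbf{0}$, so LICQ holds at any candidate stationary point and there exists $\lambda\geq 0$ with $\nabla_{\mathbf{P}_k^*}R=\lambda\mathbf{P}_k$ for all $k$. Combined with the previous step, $\lambda\mathbf{P}_k\in R(\mathbf{H}^H)$. When $\lambda>0$, dividing gives $\mathbf{P}_k=\lambda^{-1}\nabla_{\mathbf{P}_k^*}R\in R(\mathbf{H}^H)$, so $\mathbf{P}_k=\mathbf{H}^H\mathbf{X}_k$ for a suitable $\mathbf{X}_k$ and we are done.

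The main obstacle is the inactive-constraint branch $\lambda=0$: there stationarity gives only $\nabla_{\mathbf{P}_k^*}R=0$ and puts no direct structural restriction on $\mathbf{P}_k$, so I must show this branch forces triviality. The device I would use is to take the Hermitian inner product $\sum_k\tr(\mathbf{P}_k^H\nabla_{\mathbf{P}_k^*}R)$; substituting the gradient formula, applying $\mathbf{T}_k-\mathbf{J}_k=\mathbf{H}_k\mathbf{P}_k\mathbf{P}_k^H\mathbf{H}_k^H$ in the first piece and the identity $\sum_{k\neq j}\mathbf{H}_j\mathbf{P}_k\mathbf{P}_k^H\mathbf{H}_j^H=\mathbf{J}_j-\sigma_j^2\mathbf{I}$ in the second, the mixed $\tr(\mathbf{T}_k^{-1}\mathbf{J}_k)$ terms cancel and the sum collapses to
\begin{equation*}
\sum_k\tr\bigl(\mathbf{P}_k^H\nabla_{\mathbf{P}_k^*}R\bigr)=\sum_j\alpha_j\sigma_j^2\,\tr\bigl(\mathbf{J}_j^{-1}-\mathbf{T}_j^{-1}\bigr).
\end{equation*}
The right-hand side is nonnegative because $\mathbf{T}_j\succeq\mathbf{J}_j$ implies $\mathbf{J}_j^{-1}\succeq\mathbf{T}_j^{-1}$, and equals zero iff $\mathbf{T}_j=\mathbf{J}_j$ for every $j$, i.e., iff $\mathbf{H}_k\mathbf{P}_k=\mathbf{0}$ for all $k$. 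But $\lambda=0$ makes the left-hand side vanish, so the stationary point must be trivial in the sense of Definition 1. Consequently, at any nontrivial stationary point $\lambda>0$ and $\mathbf{P}_k^\star=\mathbf{H}^H\mathbf{X}_k$, as claimed.
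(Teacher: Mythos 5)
Your proposal is correct and follows essentially the same route as the paper's proof in Appendix A: the same gradient computation showing $\nabla_{\mathbf{P}_k}R$ is a sum of terms each prefixed by some $\mathbf{H}_j^H$, the same inner-product/trace collapse to $\sum_k \alpha_k\sigma_k^2\,\tr(\mathbf{J}_k^{-1}-\mathbf{T}_k^{-1})$ to rule out $\lambda=0$ at a nontrivial point (the paper isolates this as Lemma 3), and the same final step of dividing the stationarity equation by $\lambda>0$. The only cosmetic difference is that you make the positive-semidefiniteness argument for why the trace identity forces $\mathbf{H}_k\mathbf{P}_k=\mathbf{0}$ explicit, which the paper leaves implicit.
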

\begin{proof}
See Appendix \ref{appendix_proposition_column_space}.
\end{proof}

\begin{remark}
\emph{(Classical Precoding Methods Obey the Low-Dimensional Subspace Property):}
In fact, many classical precoding algorithms conform to our optimal precoding structure in Proposition \ref{proposition_column_space}, e.g., the MRT precoding $\mathbf{P}_{\text{MRT}}=\mathbf{H}^{H}$, the ZF precoding $\mathbf{P}_{\text{ZF}}=\mathbf{H}^{H}\left(\mathbf{H}\mathbf{H}^{H}\right)^{-1}$ \cite{parfait2014performance}, the RZF precoding $\mathbf{P}_{\text{RZF}}=\mathbf{H}^{H}\left(\mathbf{H}\mathbf{H}^{H}+\mu \mathbf{I}\right)^{-1}$\cite{peel2005vector}, with $\mu$ is a regularization parameter. In addition, we find that the eigen zero-forcing (EZF) \cite{sun2010eigen} precoding, which is widely used in real-world MU-MIMO systems, also has this structure, as shown in Appendix \ref{appendix_ezf}.
\end{remark}

Proposition \ref{proposition_column_space}  means that the dimension of the decision variable $\mathbf{P}\in \mathbb{C}^{M\times D}$ can be greatly reduced to the size of $\mathbf{X}\triangleq[\mathbf{X}_{1},\mathbf{X}_{2},\ldots,\mathbf{X}_{K}]\in \mathbb{C}^{N\times D}$ given $M\gg N$. That is, by using the low-dimensional subspace property, problem \eqref{wsr_total_problem} can be solved by equivalently solving 
\begin{equation}\label{eq: WSR-X}
\begin{aligned}
\max_{\mathbf{X}}~~ &\sum_{k=1}^K \alpha_k \log \operatorname{det}\Bigg(\mathbf{I}+\mathbf{H}_{k} \mathbf{H}^H\mathbf{X}_{k} \mathbf{X}_{k}^{H}\mathbf{H} \mathbf{H}_{k}^{H} \\
&\times \Big(\sum_{j \neq k} \mathbf{H}_{k} \mathbf{H}^H\mathbf{X}_{j} \mathbf{X}_{j}^{H}\mathbf{H} \mathbf{H}_{k}^{H}+\sigma_{k}^{2} \mathbf{I}\Big)^{-1}\Bigg)\\
\text { ~s.t. } \  \ & \sum_{k=1}^{K} \operatorname{Tr}\left(\mathbf{H}^H\mathbf{X}_{k} \mathbf{X}_{k}^{H}\mathbf{H}\right) \leq P_{\max},
\end{aligned}
\end{equation}
which has a smaller decision space. Although the dimension of the decision space is greatly reduced by using Proposition \ref{proposition_column_space}, the reduced problem \eqref{eq: WSR-X} is still difficult because we need to use the Bisection method to tackle the power constraint when the WMMSE framework is applied to the problem. Fortunately, we find another important property that can be used to eliminate the power constraint. For a clear illustration, we first focus on the WSR maximization problem with SPC and state the full power property in the following proposition.

\begin{proposition}\label{proposition_optimal_with_equality}
\emph{(Full Power Property):}
Any nontrivial stationary point of problem \eqref{wsr_total_problem} must satisfy the sum power constraint \eqref{wsr_total_constraints} with equality.
\end{proposition}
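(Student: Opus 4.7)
The plan is a contradiction argument based on a uniform scaling perturbation of all precoders. Suppose, for contradiction, that $\{\mathbf{P}_k^{\star}\}$ is a nontrivial stationary point of \eqref{wsr_total_problem} whose SPC is strictly inactive, i.e., $\sum_k \tr(\mathbf{P}_k^{\star}\mathbf{P}_k^{\star H}) < P_{\max}$. Writing the KKT conditions for \eqref{wsr_total_problem} with a multiplier $\lambda^{\star} \geq 0$, complementary slackness forces $\lambda^{\star} = 0$, so every Wirtinger partial gradient $\partial R / \partial \mathbf{P}_k^{*}$ vanishes at $\mathbf{P}^{\star}$. In particular, the directional derivative of $R$ at $\mathbf{P}^{\star}$ along the direction $(\mathbf{P}_1^{\star},\ldots,\mathbf{P}_K^{\star})$ must equal zero.

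The second step is to compute this directional derivative explicitly and show it is, in fact, strictly positive. Introduce the scalar function $\phi(t) \triangleq R(t\mathbf{P}_1^{\star},\ldots,t\mathbf{P}_K^{\star})$ and abbreviate $\mathbf{S}_k \triangleq \mathbf{H}_k\mathbf{P}_k^{\star}\mathbf{P}_k^{\star H}\mathbf{H}_k^{H}$ and $\mathbf{J}_k \triangleq \sum_{j\neq k}\mathbf{H}_k\mathbf{P}_j^{\star}\mathbf{P}_j^{\star H}\mathbf{H}_k^{H}$. One rewrites
\[
\phi(t) = \sum_{k}\alpha_{k}\Bigl[\log\det\bigl(t^{2}(\mathbf{J}_{k}+\mathbf{S}_{k}) + \sigma_{k}^{2}\mathbf{I}\bigr) - \log\det\bigl(t^{2}\mathbf{J}_{k} + \sigma_{k}^{2}\mathbf{I}\bigr)\Bigr].
\]
Differentiating and applying the identity $\mathbf{Y}(\mathbf{Y}+\sigma_{k}^{2}\mathbf{I})^{-1} = \mathbf{I} - \sigma_{k}^{2}(\mathbf{Y}+\sigma_{k}^{2}\mathbf{I})^{-1}$ to cancel the $N_{k}$ terms produces the compact expression
\[
\phi'(1) = 2\sum_{k}\alpha_{k}\sigma_{k}^{2}\Bigl[\tr\bigl((\mathbf{J}_{k}+\sigma_{k}^{2}\mathbf{I})^{-1}\bigr) - \tr\bigl((\mathbf{J}_{k}+\mathbf{S}_{k}+\sigma_{k}^{2}\mathbf{I})^{-1}\bigr)\Bigr].
\]

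The crux is then to show each summand is nonnegative with at least one being strictly positive. Because $\mathbf{S}_{k}\succeq \mathbf{0}$, operator monotonicity of matrix inversion on the positive-definite cone gives $(\mathbf{J}_{k}+\sigma_{k}^{2}\mathbf{I})^{-1}\succeq (\mathbf{J}_{k}+\mathbf{S}_{k}+\sigma_{k}^{2}\mathbf{I})^{-1}$, so every bracket is $\geq 0$. Nontriviality (Definition 1) furnishes some index $k$ with $\mathbf{H}_{k}\mathbf{P}_{k}^{\star}\neq \mathbf{0}$, hence $\mathbf{S}_{k}\neq \mathbf{0}$, and the corresponding bracket is then strictly positive since the above operator inequality is strict in at least one eigendirection. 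Therefore $\phi'(1) > 0$, contradicting the zero directional derivative forced by the KKT conditions, which completes the argument.

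The main obstacle I anticipate is minor and essentially bookkeeping: one must handle the Wirtinger calculus carefully so that the vanishing of $\partial R/\partial \mathbf{P}_k^{*}$ really implies $\phi'(1)=0$, and verify that the scaling direction $(\mathbf{P}_k^{\star})$ is interior-feasible when the SPC is strictly inactive (which is obvious, since a small enlargement of the scalar $t$ keeps $\sum_k t^2\tr(\mathbf{P}_k^{\star}\mathbf{P}_k^{\star H})$ strictly below $P_{\max}$). Once this is in place, the remainder of the proof is routine matrix calculus plus the monotonicity of $\mathbf{A}\mapsto\mathbf{A}^{-1}$.
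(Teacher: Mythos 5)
Your proof is correct and is essentially the paper's own argument in a different wrapper: the paper (Lemma~\ref{lemma_multiplier} in Appendix~\ref{appendix_proposition_column_space}) rules out $\lambda^{\star}=0$ by left-multiplying the stationarity condition \eqref{reduced_kkt} by $(\mathbf{P}_k^{\star})^H$, summing over $k$, and taking traces, which is exactly your directional derivative $\phi'(1)$ along the scaling ray, and it reaches the same conclusion via the identity $\tr(\mathbf{A}(\mathbf{A}+\mathbf{I})^{-1})=\tr(\mathbf{I})-\tr((\mathbf{A}+\mathbf{I})^{-1})$ and the same trace-monotonicity-of-the-inverse inequality you invoke. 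Your homogeneity framing of $\phi(t)=R(t\mathbf{P}^{\star})$ is a cleaner presentation of the identical computation, so there is nothing to object to.
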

\begin{proof}
This immediately follows from Lemma \ref{lemma_multiplier} in Appendix \ref{appendix_proposition_column_space} and the complementary slackness condition given by \eqref{complementary_slackness}.
\end{proof}

By using Proposition \ref{proposition_optimal_with_equality} and exploiting the fractional structure of SINR, we can reduce problem \eqref{wsr_total_problem} to the  following unconstrained problem.
\begin{equation}\label{unconstrained_WSR_problem}
\begin{aligned}
      & \max_{\{\mathbf{P}_{k}\}}  \  \  \sum_{k=1}^{K} \alpha_{k}\log \operatorname{det}\Bigg(\mathbf{I}+ \mathbf{H}_{k} \mathbf{P}_{k} \mathbf{P}_{k}^{H} \mathbf{H}_{k}^{H}
       \\ &\ \ \Big( \sum_{j\neq k} \mathbf{H}_{k} \mathbf{P}_{j} \mathbf{P}_{j}^{H} \mathbf{H}_{k}^{H}+\frac{\sigma_{k}^{2}}{P_{\max}}\sum_{i=1}^{K} \operatorname{Tr}( \mathbf{P}_{i} \mathbf{P}_{i}^{H}) \mathbf{I}\Big)^{-1}\Bigg).
\end{aligned}
\end{equation}
 
The relationship between problems \eqref{wsr_total_problem} and \eqref{unconstrained_WSR_problem} is established in the following proposition.

\begin{proposition}\label{proposition_unconstrained}\!\!\!\emph{:}
For any nontrivial stationary point $\{\mathbf{P}_{k}^{\star}\}$ of problem \eqref{wsr_total_problem}, there exists a stationary point $\{\mathbf{P}_{k}^{\ddagger}\}$ of the unconstrained problem \eqref{unconstrained_WSR_problem} such that  $\mathbf{P}_{k}^{\star}=\sqrt{\omega}\mathbf{P}_{k}^{\ddagger},\forall k$, where $\omega\triangleq\frac{P_{\max}}{\sum_{k=1}^{K}\operatorname{Tr}\left(\mathbf{P}_{k}^{\ddagger} (\mathbf{P}_{k}^{\ddagger})^{H}\right) }$ is a scaling factor, and vice versa.
\end{proposition}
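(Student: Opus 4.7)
The plan is to exploit the scale invariance of the unconstrained objective in \eqref{unconstrained_WSR_problem}, denoted $g(\mathbf{P})$, together with the full-power property of Proposition~\ref{proposition_optimal_with_equality}. Replacing every $\mathbf{P}_k$ by $t\mathbf{P}_k$ scales both the numerator and denominator inside every $\log\det$ in $g$ by $t^2$, so $g(t\mathbf{P}) = g(\mathbf{P})$ for all $t>0$; moreover, whenever $Q(\mathbf{P}) \triangleq \sum_i \operatorname{Tr}(\mathbf{P}_i\mathbf{P}_i^H) = P_{\max}$, the noise term collapses to $\sigma_k^2 \mathbf{I}$ and $g(\mathbf{P})$ coincides with the constrained objective $R(\mathbf{P})$ of \eqref{wsr_total_objective}. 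These two observations reduce the equivalence between the two stationarity conditions to a short chain-rule computation.

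First I would set up a bridge identity between $\nabla g$ and $\nabla R$ on the sphere $Q(\mathbf{P})=P_{\max}$. Writing $h(\mathbf{P},q)$ for the function obtained from $g$ by replacing the coupling $\sigma_k^2 Q(\mathbf{P})/P_{\max}$ by an independent scalar $\sigma_k^2 q/P_{\max}$, the chain rule yields
\[
\nabla_{\mathbf{P}_k} g(\mathbf{P}) = \nabla_{\mathbf{P}_k} h(\mathbf{P}, Q(\mathbf{P})) + 2(\partial_q h)\,\mathbf{P}_k,
\]
and on the sphere $\nabla_{\mathbf{P}_k} h(\mathbf{P}, P_{\max}) = \nabla_{\mathbf{P}_k} R(\mathbf{P})$. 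Because $h$ is monotonically nonincreasing in $q$ (larger noise cannot raise any rate), $\partial_q h \le 0$. I will also record Euler's homogeneity identity, which follows immediately from the scale invariance of $g$: $\sum_k \operatorname{Re}\langle \nabla_{\mathbf{P}_k} g(\mathbf{P}), \mathbf{P}_k\rangle = 0$ for every $\mathbf{P}$.

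For the forward direction I would take $\mathbf{P}^\ddagger = \mathbf{P}^\star$; Proposition~\ref{proposition_optimal_with_equality} gives $Q(\mathbf{P}^\star) = P_{\max}$, so $\omega=1$. The KKT stationarity condition of \eqref{wsr_total_problem} reads $\nabla_{\mathbf{P}_k} R(\mathbf{P}^\star) = 2\lambda^\star \mathbf{P}_k^\star$ with $\lambda^\star\ge 0$; substituting into the bridge identity gives $\nabla_{\mathbf{P}_k} g(\mathbf{P}^\star) = 2(\lambda^\star + \partial_q h)\mathbf{P}_k^\star$, and Euler's identity together with $Q(\mathbf{P}^\star)>0$ forces $\lambda^\star = -\partial_q h$, whence $\nabla g(\mathbf{P}^\star)=\mathbf{0}$. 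Conversely, given a stationary $\mathbf{P}^\ddagger$ of \eqref{unconstrained_WSR_problem}, set $\mathbf{P}_k^\star = \sqrt{\omega}\,\mathbf{P}_k^\ddagger$ so that $Q(\mathbf{P}^\star)=P_{\max}$. Scale invariance preserves the stationarity of $g$, so the bridge identity produces $\nabla_{\mathbf{P}_k} R(\mathbf{P}^\star) = -2(\partial_q h)\mathbf{P}_k^\star = 2\lambda^\star\mathbf{P}_k^\star$ with $\lambda^\star := -\partial_q h \ge 0$; combined with the now-active sum-power equality this is exactly the KKT system of \eqref{wsr_total_problem}.

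The main obstacle I anticipate is bookkeeping with the complex Wirtinger gradients so that the KKT stationarity condition, Euler's identity, and the inner products contracting them are all written with a consistent convention; a secondary point is to argue strictly that $\partial_q h < 0$ at nontrivial points, so that the recovered multiplier $\lambda^\star$ is nonzero and the scaling factor $\omega$ is finite and positive.
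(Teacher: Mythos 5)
Your proposal is correct and follows essentially the same route as the paper's Appendix C: both exploit the scale invariance of the unconstrained objective, its coincidence with the WSR on the sphere $\sum_k\operatorname{Tr}(\mathbf{P}_k\mathbf{P}_k^H)=P_{\max}$, and the identification of the Lagrange multiplier with the derivative of the objective with respect to the noise-coupling scalar (your $-\partial_q h$ is exactly the paper's $\gamma=\sum_i\frac{\alpha_i\sigma_i^2}{P_{\max}}\left(\operatorname{Tr}(\mathbf{F}_i^{-1})-\operatorname{Tr}(\mathbf{D}_i^{-1})\right)$, whose sign follows from $\mathbf{D}_i\succeq\mathbf{F}_i$ just as your ``larger noise cannot raise any rate'' remark asserts). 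Your use of Euler's identity to pin down $\lambda^{\star}=-\partial_q h$ is a tidy way of making explicit the direction the paper dispatches with ``reversing the steps.''
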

\begin{proof}
See Appendix \ref{appendix_proposition_unconstrained}.
\end{proof}

Now let us turn our attention back to problem \eqref{eq: WSR-X}. In a similar way to Proposition \ref{proposition_optimal_with_equality} and problem \eqref{unconstrained_WSR_problem}, we can recast problem \eqref{eq: WSR-X} as the following unconstrained problem. 
\begin{equation}\label{unconstrained_WSR_problem_X}
\begin{aligned}
      & \max_{\mathbf{X}}  \  \  \sum_{k=1}^{K} \alpha_{k}\log \operatorname{det}\Bigg(\mathbf{I}+ \bar{\mathbf{H}}_{k} \mathbf{X}_{k} \mathbf{X}_{k}^{H} \bar{\mathbf{H}}_{k}^{H}
       \\ & \ \Big( \sum_{j\neq k} \bar{\mathbf{H}}_{k} \mathbf{X}_{j} \mathbf{X}_{j}^{H} \bar{\mathbf{H}}_{k}^{H}+\frac{\sigma_{k}^{2}}{P_{\max}}\sum_{i=1}^{K} \operatorname{Tr}( \bar{\mathbf{H}}\mathbf{X}_{i} \mathbf{X}_{i}^{H}) \mathbf{I}\Big)^{-1}\Bigg),
\end{aligned}
\end{equation}
where  $\bar{\mathbf{H}} \triangleq \mathbf{H H}^{H} \in \mathbb{C}^{N \times N}$, and $\bar{\mathbf{H}}_{k} \triangleq \mathbf{H}_{k}\mathbf{H}^{H} \in \mathbb{C}^{ N_{k} \times N}$ is the $k$-th submatrix of $\bar{\mathbf{H}}$. Furthermore, similar to Proposition \ref{proposition_unconstrained}, we have  the following Proposition.

\begin{proposition}\label{proposition_reduced_unconstrained}\!\!\emph{:}
For any nontrivial stationary point of problem \eqref{wsr_total_problem} $\mathbf{P}^{\star}=[\mathbf{P}_{1}^{\star},\mathbf{P}_{2}^{\star},\ldots,\mathbf{P}_{K}^{\star}]$, there exists a stationary point $\mathbf{X}^{\star}=[\mathbf{X}_{1}^{\star},\mathbf{X}_{2}^{\star},\ldots,\mathbf{X}_{K}^{\star}]$ of problem \eqref{unconstrained_WSR_problem_X}, such that
 $\mathbf{P}_{k}^{\star}=\sqrt{\beta}\mathbf{H}^{H}\mathbf{X}_{k}^{\star},\forall k$, where $\beta=\frac{P_{\max}}{\sum_{k=1}^{K}\operatorname{Tr}\left(\bar{\mathbf{H}}\mathbf{X}_{k}^{\star} (\mathbf{X}_{k}^{\star})^{H}\right)}$ is a scaling factor, and vice versa.
\end{proposition}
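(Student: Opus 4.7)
The plan is to mirror the proof of Proposition \ref{proposition_unconstrained}, but in the reduced variable $\mathbf{X}$, by exploiting the linear bijection $\mathbf{X}\mapsto\mathbf{H}^{H}\mathbf{X}$ from $\mathbb{C}^{N\times D}$ onto $\{\mathbf{P}\in\mathbb{C}^{M\times D}:\mathbf{P}_{k}\in R(\mathbf{H}^{H})\ \forall k\}$; injectivity follows from the full-row-rank assumption on $\mathbf{H}$. The bridge is the algebraic identity $\tr(\mathbf{H}^{H}\mathbf{X}_{k}\mathbf{X}_{k}^{H}\mathbf{H})=\tr(\bar{\mathbf{H}}\mathbf{X}_{k}\mathbf{X}_{k}^{H})$ together with $\mathbf{H}_{k}\mathbf{P}_{j}=\bar{\mathbf{H}}_{k}\mathbf{X}_{j}$, which shows that the substitution $\mathbf{P}_{k}=\mathbf{H}^{H}\mathbf{X}_{k}$ turns \eqref{wsr_total_problem} into \eqref{eq: WSR-X} and \eqref{unconstrained_WSR_problem} into \eqref{unconstrained_WSR_problem_X}. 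This identity is what will allow stationarity statements to be transported back and forth between the $\mathbf{P}$- and $\mathbf{X}$-formulations.

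For the forward direction, let $\mathbf{P}^{\star}$ be a nontrivial stationary point of \eqref{wsr_total_problem}. Proposition \ref{proposition_column_space} gives $\mathbf{P}^{\star}_{k}\in R(\mathbf{H}^{H})$ for every $k$, and Proposition \ref{proposition_unconstrained} supplies a stationary point $\mathbf{P}^{\ddagger}$ of \eqref{unconstrained_WSR_problem} with $\mathbf{P}^{\star}_{k}=\sqrt{\omega}\,\mathbf{P}^{\ddagger}_{k}$. Scaling preserves membership in $R(\mathbf{H}^{H})$, so $\mathbf{P}^{\ddagger}_{k}=\mathbf{H}^{H}\mathbf{X}^{\star}_{k}$ for a unique $\mathbf{X}^{\star}_{k}$. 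Plugging this substitution into the stationarity equations of \eqref{unconstrained_WSR_problem} and applying the chain rule (using injectivity of $\mathbf{X}\mapsto\mathbf{H}^{H}\mathbf{X}$) yields that $\mathbf{X}^{\star}$ is a stationary point of \eqref{unconstrained_WSR_problem_X}. Evaluating the scaling factor then gives
\begin{equation*}
\omega=\frac{P_{\max}}{\sum_{k}\tr\!\left(\mathbf{P}^{\ddagger}_{k}(\mathbf{P}^{\ddagger}_{k})^{H}\right)}=\frac{P_{\max}}{\sum_{k}\tr\!\left(\bar{\mathbf{H}}\mathbf{X}^{\star}_{k}(\mathbf{X}^{\star}_{k})^{H}\right)}=\beta,
\end{equation*}
so $\mathbf{P}^{\star}_{k}=\sqrt{\beta}\,\mathbf{H}^{H}\mathbf{X}^{\star}_{k}$ as claimed.

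For the converse, start from a stationary point $\mathbf{X}^{\star}$ of \eqref{unconstrained_WSR_problem_X} and set $\mathbf{P}^{\ddagger}_{k}\triangleq\mathbf{H}^{H}\mathbf{X}^{\star}_{k}$; then invoke the vice-versa part of Proposition \ref{proposition_unconstrained} to produce $\mathbf{P}^{\star}=\sqrt{\omega}\,\mathbf{P}^{\ddagger}$, with $\omega=\beta$ as above. I expect the main obstacle to be justifying that $\mathbf{P}^{\ddagger}$ is stationary for \eqref{unconstrained_WSR_problem} on the full space $\mathbb{C}^{M\times D}$, not merely on the subspace $R(\mathbf{H}^{H})^{K}$ inherited from $\mathbf{X}^{\star}$. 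The key observation is that the gradient of the objective of \eqref{unconstrained_WSR_problem} with respect to $\mathbf{P}_{l}$ is a sum of terms of the form $\mathbf{H}_{k}^{H}(\cdots)\mathbf{H}_{k}\mathbf{P}_{l}$ plus a scalar multiple of $\mathbf{P}_{l}$ arising from the homogeneous noise term $\frac{\sigma_{k}^{2}}{P_{\max}}\sum_{i}\tr(\mathbf{P}_{i}\mathbf{P}_{i}^{H})\mathbf{I}$; since $\mathbf{P}^{\ddagger}_{l}\in R(\mathbf{H}^{H})$ and $R(\mathbf{H}^{H})=\sum_{k}R(\mathbf{H}_{k}^{H})$, every such summand lies in $R(\mathbf{H}^{H})$. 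Consequently, subspace stationarity pulled back from $\mathbf{X}^{\star}$ automatically lifts to full-space stationarity, and the argument closes symmetrically with the forward direction.
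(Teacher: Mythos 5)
Your argument is correct and takes essentially the same route as the paper, whose entire proof is the one-line remark that the result ``follows immediately from Proposition~\ref{proposition_column_space} and Proposition~\ref{proposition_unconstrained}.'' The one substantive detail you add---that in the converse direction the gradient of the objective of \eqref{unconstrained_WSR_problem} at a point with columns in $R(\mathbf{H}^{H})$ itself lies in $R(\mathbf{H}^{H})$, so that the subspace stationarity inherited from $\mathbf{X}^{\star}$ lifts to full-space stationarity---is a legitimate filling-in of a step the paper leaves implicit.
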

\begin{proof}
The proof follows immediately from Proposition \ref{proposition_column_space} and Proposition \ref{proposition_unconstrained}.
\end{proof}

Proposition \ref{proposition_reduced_unconstrained} implies that problem \eqref{wsr_total_problem} can be solved by solving \eqref{unconstrained_WSR_problem_X}. By applying Lemma \ref{lemma_WMMSE}, we have Theorem \ref{thm_transform}.
\begin{theorem}\label{thm_transform}\!\!\emph{:}
Problem \eqref{unconstrained_WSR_problem_X} is equivalent to the unconstrained weighted sum-MSE minimization problem \eqref{transform_problem_X} shown below.
\begin{equation}\label{transform_problem_X}
    \min_{\mathbf{U},\mathbf{W}, \mathbf{X}}~~ \sum_{k=1}^{K}\alpha_{k}\left(\operatorname{Tr}\left(\mathbf{W}_{k} \mathbf{E}_{k}^{\prime}\right)-\log \operatorname{det}\left(\mathbf{W}_{k}\right)\right),
\end{equation}
where $\mathbf{U}=\{\mathbf{U}_{k}\}_{k=1}^{K}$ and $\mathbf{W}=\{\mathbf{W}_{k}\}_{k=1}^{K}$ are introduced auxiliary variables, and  $\mathbf{E}_{k}^{\prime}$ is defined by
\begin{equation}\label{eq_Ek_prime}
    \begin{aligned}
        \mathbf{E}_{k}^{\prime}  \triangleq
                       & \left(\mathbf{I}-\mathbf{U}_{k}^{H} \bar{\mathbf{H}}_{k} \mathbf{X}_{k}\right)\left(\mathbf{I}-\mathbf{U}_{k}^{H} \bar{\mathbf{H}}_{k} \mathbf{X}_{k}\right)^{H}\\
&+\mathbf{U}_k^H\mathbf{N}_k^{\prime}\mathbf{U}_k
%                       &  +\sum_{i=1}^{K} \frac{\sigma_{k}^{2}}{P_{\max }} \operatorname{Tr}\left(\bar{\mathbf{H}}\mathbf{X}_{i} \mathbf{X}_{i}^{H}\right) \mathbf{U}_{k}^{H} \mathbf{U}_{k},~\forall k.
    \end{aligned}
\end{equation}
with 
$\mathbf{N}_k^{\prime}=\sum_{j\neq k} \bar{\mathbf{H}}_{k} \mathbf{X}_{j} \mathbf{X}_{j}^{H} \bar{\mathbf{H}}_{k}^{H}+\frac{\sigma_{k}^{2}}{P_{\max}}\sum_{i=1}^{K} \operatorname{Tr}( \bar{\mathbf{H}}\mathbf{X}_{i} \mathbf{X}_{i}^{H}) \mathbf{I}$.
\end{theorem}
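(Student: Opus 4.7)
The plan is to prove Theorem \ref{thm_transform} by a direct application of Lemma \ref{lemma_WMMSE} to each per-user rate term in the objective of \eqref{unconstrained_WSR_problem_X} and then aggregating over users. The key observation is that the $k$-th summand of \eqref{unconstrained_WSR_problem_X} is exactly of the form $\log\det(\mathbf{I} + \bar{\mathbf{H}}_k \mathbf{X}_k \mathbf{X}_k^H \bar{\mathbf{H}}_k^H (\mathbf{N}_k')^{-1})$, which matches the left-hand side of \eqref{eq_WMMSE_nature} once one identifies $\mathbf{A} \leftrightarrow \bar{\mathbf{H}}_k$, $\mathbf{B} \leftrightarrow \mathbf{X}_k$, $\mathbf{N} \leftrightarrow \mathbf{N}_k'$, $\mathbf{\Gamma} \leftrightarrow \mathbf{U}_k$, and $\mathbf{\Omega} \leftrightarrow \mathbf{W}_k$; the matrix $\mathbf{E}_k'$ in \eqref{eq_Ek_prime} is precisely $\mathbf{E}(\mathbf{\Gamma},\mathbf{B})$ of Lemma \ref{lemma_WMMSE} under this identification.

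First I would verify the positive-definiteness hypothesis on $\mathbf{N}_k'$: this holds at every non-trivial $\mathbf{X}$ because $\bar{\mathbf{H}} = \mathbf{H}\mathbf{H}^H \succ \mathbf{0}$ (by the full-row-rank assumption on $\mathbf{H}$), so the scaled-identity term $\frac{\sigma_k^2}{P_{\max}}\sum_i \tr(\bar{\mathbf{H}}\mathbf{X}_i \mathbf{X}_i^H)\,\mathbf{I}$ is strictly positive whenever some $\mathbf{X}_i \neq \mathbf{0}$. Invoking \eqref{eq_WMMSE_nature} then gives, for each $k$ and each $\mathbf{X}$,
\[
\alpha_k R_k(\mathbf{X}) = \max_{\mathbf{W}_k \succ \mathbf{0},\, \mathbf{U}_k}\ \alpha_k\!\left[\log\det(\mathbf{W}_k) - \tr(\mathbf{W}_k \mathbf{E}_k')\right] + \alpha_k D_k.
\]
Since the inner maxima decouple across the per-user blocks $(\mathbf{W}_k, \mathbf{U}_k)$, summing over $k$ lets me collect them into a single joint maximum over $(\mathbf{W}, \mathbf{U})$. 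Absorbing the additive constant $\sum_k \alpha_k D_k$ and flipping signs turns the outer maximization in \eqref{unconstrained_WSR_problem_X} into the joint minimization in \eqref{transform_problem_X}, establishing equivalence of optimal values.

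For the equivalence of stationary points, which is the sense needed for the subsequent R-WMMSE derivation, I would invoke the closed-form optimizers \eqref{eq_lemma_optimal_U}--\eqref{eq_lemma_optimal_W} as smooth functions $\hat{\mathbf{W}}_k(\mathbf{X})$, $\hat{\mathbf{U}}_k(\mathbf{X})$; a standard envelope argument gives $\nabla_{\mathbf{X}} R_k = \nabla_{\mathbf{X}}\!\left[\log\det \hat{\mathbf{W}}_k - \tr(\hat{\mathbf{W}}_k \mathbf{E}_k')\right]$, so any stationary $\mathbf{X}^{\star}$ of \eqref{unconstrained_WSR_problem_X} paired with $\hat{\mathbf{W}}_k(\mathbf{X}^{\star})$ and $\hat{\mathbf{U}}_k(\mathbf{X}^{\star})$ is stationary for \eqref{transform_problem_X}, and conversely the first-order conditions in $\mathbf{W}$ and $\mathbf{U}$ at any stationary point of \eqref{transform_problem_X} force these blocks to equal the closed forms, after which $\mathbf{X}$-stationarity transfers to \eqref{unconstrained_WSR_problem_X}. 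The only real subtlety — not really an obstacle — is the self-coupling of $\mathbf{N}_k'$ to $\mathbf{X}$ through both the cross-user terms and the trace $\sum_i \tr(\bar{\mathbf{H}}\mathbf{X}_i \mathbf{X}_i^H)$ that includes user $k$ itself; this does not affect the applicability of Lemma \ref{lemma_WMMSE}, which holds pointwise for every positive-definite $\mathbf{N}$, and only alters the explicit form of the $\mathbf{X}$-gradient in the envelope step without disturbing the equivalence.
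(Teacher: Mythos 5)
Your proposal is correct and follows essentially the same route as the paper, which establishes Theorem \ref{thm_transform} by directly invoking Lemma \ref{lemma_WMMSE} term-by-term with the identification $\mathbf{A}\leftrightarrow\bar{\mathbf{H}}_k$, $\mathbf{B}\leftrightarrow\mathbf{X}_k$, $\mathbf{N}\leftrightarrow\mathbf{N}_k^{\prime}$ (the paper states this in one line, and your decoupling-then-aggregating argument plus the positive-definiteness check of $\mathbf{N}_k^{\prime}$ is exactly the detail it leaves implicit). Your envelope argument for transferring stationarity is likewise the same device the paper uses later in Appendix~\ref{appendix_theorem2}.
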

% \begin{proof}
% It can be obtained by applying Lemma \ref{lemma_WMMSE} to problem \eqref{unconstrained_WSR_problem_X} directly.
% \end{proof}

\subsection{The Proposed R-WMMSE Algorithm}
Now we are ready to propose the R-WMMSE algorithm by applying BCD to problem \eqref{transform_problem_X}. Note that the objective function of problem \eqref{transform_problem_X} is a convex function of each block variable $\mathbf{U}$, $\mathbf{W}$, and $\mathbf{X}$, respectively. Hence, in the BCD applied to \eqref{transform_problem_X}, i.e., solve for one block variable in each iteration while fixing the others, the subproblem with respect to each block variable can be globally solved in closed-form by the first-order optimality, leading to the R-WMMSE algorithm with the following  closed-form updates in each iteration:
 \begin{enumerate}
\item {\bf Update $\mathbf{U}$} by
\begin{equation}\label{U_update}
\begin{aligned}
\mathbf{U}_{k}=\Bigg(&\sum_{i=1}^{K} \frac{\sigma_{k}^{2}}{P_{\max }} \operatorname{Tr}\left(\bar{\mathbf{H}} \mathbf{X}_{i} \mathbf{X}_{i}^{H}\right) \mathbf{I}\\
&+\sum_{j=1}^{K} \bar{\mathbf{H}}_{k} \mathbf{X}_{j} \mathbf{X}_{j}^{H} \bar{\mathbf{H}}_{k}^{H}\Bigg)^{-1}
        \bar{\mathbf{H}}_{k} \mathbf{X}_{k},\; \forall\; k.
\end{aligned}
\end{equation}
\item {\bf Update $\mathbf{W}$} by
\begin{equation}\label{W_update}
 \mathbf{W}_{k}=\left(\mathbf{I}-\mathbf{U}_{k}^{H} \bar{\mathbf{H}}_{k} \mathbf{X}_{k}\right)^{-1},\; \forall\; k.
\end{equation}
 \item {\bf Update $\mathbf{X}$} by 
\begin{equation}\label{Xk_update}
    \begin{aligned}
        \mathbf{X}_{k}=\Bigg(&\sum_{i=1}^{K} \frac{\sigma_{i}^{2}}{P_{\max }} \alpha_{i} \operatorname{Tr}\left(\mathbf{M}_{i}\right) \bar{\mathbf{H}}\\
        &+\sum_{j=1}^{K} \alpha_{j} \bar{\mathbf{H}}_{j}^{H} \mathbf{M}_{j} \bar{\mathbf{H}}_{j}\Bigg)^{-1}
        \alpha_{k} \bar{\mathbf{H}}_{k}^{H} \mathbf{U}_{k} \mathbf{W}_{k},\; \forall\; k,
    \end{aligned}
\end{equation}
where $\mathbf{M}_{k} \triangleq \mathbf{U}_{k} \mathbf{W}_{k} \mathbf{U}_{k}^{H}$.
 \end{enumerate}

The step to update $\mathbf{X}$ requires an $N$-dimensional matrix inversion operation whose complexity is independent of $M$. However, we show below that the computational complexity can be further reduced when $D< N$. Specifically, let us define $\eta\triangleq \sum_{i=1}^{K} \frac{\sigma_{i}^{2}}{P_{\max }} \alpha_{i}\operatorname{Tr}\left(\mathbf{M}_{i}\right)$, $\hat{\mathbf{W}}\triangleq\text{blkdiag}\left(\alpha_{1}\mathbf{W}_1,\ldots,\alpha_{K}\mathbf{W}_K\right)$, $\hat{\mathbf{U}}\triangleq\text{blkdiag}\left(\mathbf{U}_1,\ldots,\mathbf{U}_K\right)$. Then, we can write \eqref{Xk_update} in a more compact form as follows
\begin{equation}\label{Xk_update_low_dimensional}
    \begin{aligned}
        \mathbf{X}&=\left(\eta\bar{\mathbf{H}}+ \bar{\mathbf{H}} \hat{\mathbf{U}} \hat{\mathbf{W}}  \hat{\mathbf{U}}^{H} \bar{\mathbf{H}} \right)^{-1}
       \bar{\mathbf{H}} \hat{\mathbf{U}} \hat{\mathbf{W}}\\
        &=\left(\eta\mathbf{I}+  \hat{\mathbf{U}} \hat{\mathbf{W}} \hat{\mathbf{U}}^{H} \bar{\mathbf{H}} \right)^{-1}
       \hat{\mathbf{U}} \hat{\mathbf{W}}\\
        &=\hat{\mathbf{U}} \hat{\mathbf{W}}\left(\eta\mathbf{I}+  \hat{\mathbf{U}}^{H} \bar{\mathbf{H}} \hat{\mathbf{U}} \hat{\mathbf{W}}  \right)^{-1}\\
         &=\hat{\mathbf{U}} \left(\eta\hat{\mathbf{W}}^{-1}+  \hat{\mathbf{U}}^{H} \bar{\mathbf{H}}\hat{\mathbf{U}} \right)^{-1},
    \end{aligned}
\end{equation}
where the second equality follows from the assumption of full row rank $\mathbf{H}$, the third equality is due to the identity $\left(\mathbf{I}+\mathbf{A}\mathbf{B}\right)^{-1}\mathbf{A}=\mathbf{A}\left(\mathbf{I}+\mathbf{B}\mathbf{A}\right)^{-1}$, and the last equality holds because $\hat{\mathbf{W}}$ is invertible. As compared to \eqref{Xk_update}, \eqref{Xk_update_low_dimensional} involves only a $D$-dimensional matrix inversion operation.

Using \eqref{Xk_update_low_dimensional} instead of \eqref{Xk_update}, the R-WMMSE algorithm is summarized in Algorithm \ref{alg_R-WMMSE}. Once the convergence is reached, the final precoders are obtained by $\mathbf{P}_k=\sqrt{\beta}\mathbf{H}^{H}\mathbf{X}_k, \forall k$ with $\beta$ defined in Proposition \ref{proposition_reduced_unconstrained}. 
\begin{algorithm}[!tb]
    \renewcommand{\algorithmicrequire}{\textbf{Initialization:}}
    \renewcommand{\algorithmicensure}{\textbf{Output:}}
    \caption{The Proposed R-WMMSE Algorithm}  \label{alg_R-WMMSE}
    \begin{algorithmic}[1]
    \REQUIRE Compute $\bar{\mathbf{H}}=\mathbf{H}\mathbf{H}^H$. Initialize $\mathbf{X}$ with $\sum_{k=1}^{K}\operatorname{Tr}\left(\bar{\mathbf{H}} \mathbf{X}_{k} \mathbf{X}_{k}^{H}\right)\leq P_{\max }$ and $\mathbf{W}_k=\mathbf{I},\forall k$. Set the tolerance of accuracy $\epsilon$.
    \REPEAT
    \STATE $\mathbf{W}'_{k} = \mathbf{W}_{k},\; \forall\; k$;\\
    \STATE Update $\mathbf{U}_{k}$'s by \eqref{U_update};
    \STATE  Update $\mathbf{W}_{k}$'s by \eqref{W_update};
     \STATE Update $\mathbf{X}$ by \eqref{Xk_update_low_dimensional};
    \UNTIL $|\sum_k \alpha_k\log\det(\mathbf{W}_k)-\sum_k \alpha_k\log\det(\mathbf{W}'_k)|\leq \epsilon$.
    \ENSURE $\mathbf{P}_k{=}\sqrt{\beta}\mathbf{H}^H\mathbf{X}_k,\forall k$ such that the SPC with equality.
    \end{algorithmic}
\end{algorithm}

Interestingly, computing $\bar{\mathbf{H}}$ has complexity $\mathcal{O}(MN^2)$, which is linear in $M$. Moreover, once $\bar{\mathbf{H}}$ is determined, each iteration of R-WMMSE is independent of $M$. Furthermore, the computational complexity of the proposed R-WMMSE algorithm is dominated by the matrix inversion operation for update of $\mathbf{X}$   (i.e., line 5 of Algorithm \ref{alg_R-WMMSE}), which is in the order of $\mathcal{O}(D^3)$, much lower than the complexity $\mathcal{O}(M^3)$ of the classical WMMSE algorithm \cite{shi2011iteratively}. Therefore, due to the computation of $\bar{\mathbf{H}}$, the proposed R-WMMSE has a complexity of $\mathcal{O}(MN^2)$ in total and thus is more suitable than WMMSE for real-time implementation in massive MU-MIMO systems. %The detailed analysis of complexity is provided in Section \ref{section_complexity_analysis}.

\begin{remark}\!\!\emph{:}
It is seen that, there is no apparent complexity reduction by using \eqref{Xk_update_low_dimensional} instead of \eqref{Xk_update} when $N=D$. In this case, we propose to update $\mathbf{X}_k$ in \eqref{Xk_update} with a computationally efficient operation of matrix inversion. Specifically, the efficient inversion is made based on the Woodbury matrix identity\cite{petersen2008matrix}, namely
\begin{equation}\label{Woodbury}
\begin{aligned}
&\left(\mathbf{A}+\mathbf{C B C}^{H}\right)^{-1}\\
=&\mathbf{A}^{-1}-\mathbf{A}^{-1} \mathbf{C}\left(\mathbf{B}^{-1}
+\mathbf{C}^{H} \mathbf{A}^{-1} \mathbf{C}\right)^{-1} \mathbf{C}^{H} \mathbf{A}^{-1}.
\end{aligned}
\end{equation}
By applying the Woodbury identity, the inversion operation required by the update of $\mathbf{X}_k$ in \eqref{Xk_update} can be realized by the following recursion ($l=1,2,\ldots K$): 
\begin{equation}
\left\{\begin{array}{l}\label{simplified_Xk}
\mathbf{A}_{l}^{-1}=\left(\mathbf{A}_{l-1}^{-1}+\mathbf{C}_{l} \mathbf{B}_{l} \mathbf{C}_{l}^{H}\right)^{-1}, \\
\mathbf{C}_{l}=\bar{\mathbf{H}}_{l}^{H}  \mathbf{U}_{l}, \\
\mathbf{B}_{l}=\alpha_l\mathbf{W}_{l},
\end{array}\right.
\end{equation}
\begin{equation}
\mathbf{A}_{0}^{-1}=\frac{1}{\sum_{i=1}^{K}\frac{\sigma_{i}^{2}}{P_{\max }} \alpha_{i} \operatorname{Tr}\left(\mathbf{M}_{i} \right)}\bar{\mathbf{H}}^{-1}.
\end{equation}
Clearly, $\mathbf{A}_{K}^{-1}$ gives the desired matrix inversion in \eqref{Xk_update}, i.e., $(\sum_{i=1}^{K} \frac{\sigma_{i}^{2}}{P_{\max }} \alpha_{i} \operatorname{Tr}\left(\mathbf{M}_{i}\right) \bar{\mathbf{H}}+\sum_{j=1}^{K} \alpha_{j} \bar{\mathbf{H}}_{j}^{H} \mathbf{M}_{j} \bar{\mathbf{H}}_{j})^{-1}$. It is emphasized that, by the above recursion, only one large-scale matrix inversion operation, i.e, $\bar{\mathbf{H}}^{-1}\in \mathbb{C}^{N\times N}$, is needed in the entire iterative process of the R-WMMSE algorithm. %The remaining matrix inverse operations are with a much lower dimension $D_k\times D_k$
\end{remark}

Finally, we establish the convergence of the R-WMMSE algorithm in  Theorem \ref{th_convergence_result}.

\begin{theorem}\label{th_convergence_result}
\emph{(Convergence Results for the R-WMMSE Algorithm):}
Any limit point $\left(\mathbf{U}^{\star},\mathbf{W}^{\star},\mathbf{X}^{\star}\right)$ of the iterative sequence generated by the R-WMMSE algorithm is a stationary point of problem
\eqref{transform_problem_X}, and the corresponding $\mathbf{P}^{\star}=\sqrt{\beta}\mathbf{H}^{H}\mathbf{X}^{\star}$ is a nontrivial stationary point of problem \eqref{wsr_total_problem} with $\beta$ defined in Proposition \ref{proposition_reduced_unconstrained}.
% Conversely, if $\mathbf{P}^{\star}$ is a stationary point of \eqref{wsr_total_problem}, then the point $\left(\mathbf{U}^{\star},\mathbf{W}^{\star},\mathbf{X}^{\star}\right)$ with $\mathbf{U}^{\star}_{k}$ and $\mathbf{W}^{\star}_{k}$ defined by \eqref{U_update} and \eqref{W_update}, respectively, is a stationary point of \eqref{transform_problem_X}, where $\mathbf{X}^{\star}$ is any matrix satisfying $\mathbf{P}^{\star}=\sqrt{\beta}\mathbf{H}^{H}\mathbf{X}^{\star}$.
\end{theorem}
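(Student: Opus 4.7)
The plan is to prove the theorem in three stages: first, apply a standard three-block BCD convergence result to the transformed problem \eqref{transform_problem_X}; second, use the equivalence established in Theorem \ref{thm_transform} to conclude that the $\mathbf{X}$-component of such a limit point is stationary for the reduced unconstrained problem \eqref{unconstrained_WSR_problem_X}; third, invoke Proposition \ref{proposition_reduced_unconstrained} to lift this to a nontrivial stationary point of the original problem \eqref{wsr_total_problem}. For the first stage, I would observe that the objective in \eqref{transform_problem_X} is continuously differentiable in the joint variable $(\mathbf{U},\mathbf{W},\mathbf{X})$ and strictly convex in each individual block when the other two are held fixed. The per-block minimizers are unique and are delivered in closed form by \eqref{U_update}, \eqref{W_update}, and \eqref{Xk_update_low_dimensional}, which Algorithm \ref{alg_R-WMMSE} executes exactly; monotone non-increase of the objective is therefore automatic, and boundedness from below follows because, after eliminating $\mathbf{W}$ by means of Lemma \ref{lemma_WMMSE}, the reduced objective equals the negative WSR in \eqref{unconstrained_WSR_problem_X} up to an additive constant, and that WSR is bounded above by virtue of its scale invariance in $\mathbf{X}$. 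In this setting, classical BCD theory (e.g., Bertsekas' Proposition 2.7.1, or the convergence argument used in \cite{shi2011iteratively}) yields that any limit point $(\mathbf{U}^\star,\mathbf{W}^\star,\mathbf{X}^\star)$ satisfies the three block-optimality conditions simultaneously and is therefore a stationary point of \eqref{transform_problem_X}.

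For the second stage, I would substitute the closed-form optima $\mathbf{U}^\star$ and $\mathbf{W}^\star$ obtained from \eqref{U_update} and \eqref{W_update} back into the weighted sum-MSE objective. By the envelope identity underlying Lemma \ref{lemma_WMMSE}, the partial gradient with respect to $\mathbf{X}$ of the weighted sum-MSE objective at these optima coincides with the gradient of the WSR objective in \eqref{unconstrained_WSR_problem_X}, so the first-order condition in $\mathbf{X}$ transfers directly; this shows $\mathbf{X}^\star$ is a stationary point of \eqref{unconstrained_WSR_problem_X}. Applying Proposition \ref{proposition_reduced_unconstrained} then gives that the reconstructed precoder $\mathbf{P}^\star=\sqrt{\beta}\mathbf{H}^H\mathbf{X}^\star$, with $\beta$ as defined there, is a stationary point of the constrained problem \eqref{wsr_total_problem} that automatically satisfies the SPC with equality.

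The hard part will be two intertwined technical issues: ensuring that limit points actually exist, and that they are \emph{nontrivial} in the sense of Definition 1 so that Proposition \ref{proposition_reduced_unconstrained} is applicable. For existence, I would establish coercivity of the $\mathbf{X}$-subproblem by noting that the term $\frac{\sigma_{k}^{2}}{P_{\max}}\sum_{i}\operatorname{Tr}(\bar{\mathbf{H}}\mathbf{X}_{i}\mathbf{X}_{i}^{H})$ inside $\mathbf{N}_{k}^{\prime}$ contributes a strictly positive quadratic penalty in $\mathbf{X}$ through $\mathbf{U}_{k}^{H}\mathbf{N}_{k}^{\prime}\mathbf{U}_{k}$, and since $\bar{\mathbf{H}}\succ\mathbf{0}$ by the full-row-rank assumption on $\mathbf{H}$, the relevant sublevel sets are bounded and a convergent subsequence exists. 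For nontriviality, I would initialize with an $\mathbf{X}^{(0)}$ achieving strictly positive WSR (which holds for almost any generic choice); after the envelope substitution the monotone decrease of the weighted sum-MSE objective is equivalent to monotone \emph{increase} of the WSR, so the limiting WSR remains strictly positive, which rules out the trivial case $\mathbf{H}_{k}\mathbf{P}_{k}^{\star}=\mathbf{0}$ for all $k$ and validates the use of Proposition \ref{proposition_reduced_unconstrained}.
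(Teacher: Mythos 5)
Your proposal is correct and follows essentially the same route as the paper's proof: invoke classical BCD theory (Bertsekas) for convergence to a stationary point of problem \eqref{transform_problem_X}, use the equivalence of Theorem \ref{thm_transform} to transfer the first-order condition in $\mathbf{X}$ to problem \eqref{unconstrained_WSR_problem_X}, and then apply Proposition \ref{proposition_reduced_unconstrained} to obtain a nontrivial stationary point of problem \eqref{wsr_total_problem}. Your additional care about existence of limit points (via scale invariance and boundedness of the WSR) and about nontriviality (via monotone increase of the WSR from a nontrivial initialization) addresses points the paper leaves implicit, and both arguments are sound.
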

\begin{proof}
See Appendix \ref{appendix_theorem2}.
\end{proof}

% \begin{remark}
% \emph{(Further Benefits of the Proposed R-WMMSE):}
% On the one hand, since R-WMMSE iteration requires only information with an $N\times N$ \emph{Hermitian} matrix $\bar{\mathbf{H}}$ rather than $\mathbf{H}$ with size $N\times M$, the dimension of the algorithm input information can be significantly reduced. On the other hand, the simplified structure of $\mathbf{X}$ in \eqref{Xk_update_low_dimensional} also reduces the output dimension and explicitly illustrates the relation between $\mathbf{X}$ and $\bar{\mathbf{H}}$. Therefore, we point out that the proposed R-WMMSE algorithm will be helpful for the decentralized implementation of the WMMSE algorithm design in decentralized baseband (DBP) architecture \cite{li2017decentralized} and learning-based (such as black box \cite{lu2020learning,sun2018learning} or deep-unfolding \cite{hu2021iterative}) WMMSE algorithm design, which are left as future works.
% \end{remark}

 So far, we have proposed a linear-complexity R-WMMSE algorithm with convergence guaranteed for the WSR maximization problem under SPC. A natural question is how it can be extended to the PAPCs case.
\begin{remark}\emph{(Straightforward Extension to the PAPCs Case):}
The R-WMMSE algorithm can be used to generate a feasible solution for the WSR maximization problem with PAPCs. That is, given the R-WMMSE solution $\mathbf{P}_k^{\text{SPC}}$'s, we can obtain a feasible solution for the PAPCs case by simply normalizing $\mathbf{P}_k^{\text{SPC}}$'s to satisfy the PAPCs, i.e.,
\begin{equation}\label{eq_normlaized_WMMSE}
 \mathbf{P}^{\text{normalized}}_k=\mathbf{P}_k^{\text{SPC}}\times \min_{m \in \{1,\ldots,M\}} \frac{\sqrt{P_m}}{\|\mathbf{p}_{m}\|_2},\ \forall k,   
\end{equation} 
where $\mathbf{p}_{m}^{H}$ is the $m$-th row vector of $\mathbf{P}^{\text{SPC}}$. However, this feasible solution may be far from optimum for the WSR maximization problem with PAPCs.
\end{remark}

\section{The Proposed PAPC-WMMSE for the PAPCs Case}
In this section, we develop an efficient iterative solution to the WSR maximization problem with PAPCs by directly applying the WMMSE framework to problem \eqref{wsr_PAPC_problem}.

Specifically, by applying Lemma \ref{lemma_WMMSE}, we obtain the following equivalent problem of problem \eqref{wsr_PAPC_problem}
\begin{equation}\label{transform_problem_PAPC}
\begin{aligned}
\min _{\mathbf{W},\mathbf{U},\mathbf{P}}\  \ & \sum_{k=1}^{K} \alpha_{k} \left(\operatorname{Tr}\left(\mathbf{W}_{k}\mathbf{E}_{k}\right) - \log\det\left(\mathbf{W}_{k}\right) \right) \\
\text { s.t.~~}\  \ & \sum_{k=1}^{K} \left[\mathbf{P}_{k} \mathbf{P}_{k}^{H}\right]_{m,m} \leq P_{m},\ \forall m, 
\end{aligned}
\end{equation}
where $\mathbf{W}=\{\mathbf{W}_{k}\}_{k=1}^{K}$, $\mathbf{U}=\{\mathbf{U}_{k}\}_{k=1}^{K}$ are auxiliary variables, and  $\{\mathbf{E}_{k}\}_{k=1}^{K}$ are all exactly the same as defined in \eqref{MSE_matrix_WMMSE}. Similarly, we can directly apply BCD to \eqref{transform_problem_PAPC} by dividing the variables into three block variables, i.e, $\mathbf{U}$, $\mathbf{W}$, and $\mathbf{P}$. However, in this way, we need to solve a convex QCQP for updating $\mathbf{P}$ using some sophisticated optimization methods with high complexity (e.g., the interior-point method for this problem requires a computational complexity of $\mathcal{O}(M^{3.5}D^3)$\cite{wang2014outage}). Therefore, such a straightforward way of applying BCD to \eqref{transform_problem_PAPC} is not suitable for the PAPCs case. Our goal is to derive simple closed-form updates for all variables with linear complexity.  

% However, the naive BCD method will lead to prohibitively high complexity due to the IPM for solving the subproblem of the precoder $\mathbf{P}$. Therefore we will propose a low-complexity procedure for solving $\mathbf{P}$ in each iteration while keeping the updates for $\mathbf{W}$ and $\mathbf{U}$ in a similar manner as those in WMMSE.

Actually, in terms of the PAPCs, we could naturally view each row of $\mathbf{P}$ or each column of  $\mathbf{P}^H$ as a block variable. That is, we can apply BCD to \eqref{transform_problem_PAPC} with block variables  $\mathbf{U}$, $\mathbf{W}$, and $\mathbf{p}_m$, $m=1,2,\ldots,M$, where $\mathbf{p}_m$ denotes the $m$-th column of  $\mathbf{P}^H$. As a result, we can update $\mathbf{U}$ and $\mathbf{W}$ in closed-form as the WMMSE algorithm, while the main difficulty lies in the update of each column of $\mathbf{P}^H$, i.e., solving the following problem:
\begin{equation}\label{P_PAPC_BCD}
    \begin{aligned}
       \min_{\mathbf{p}_m}  \  \ &  
      \sum_{k=1}^{K}  -2\operatorname{Tr}\left(\Re e\left( \alpha_{k}\mathbf{W}_{k}\mathbf{U}_{k}^{H}\sum_{m=1}^{M}\left(\mathbf{h}_{k,m} \mathbf{p}_{k,m}^{H}\right)\right)\right)\\
       & + \sum_{k=1}^{K}\operatorname{Tr}\Bigg(\sum_{j=1}^{K}\alpha_{j} \sum_{m=1}^{M}\left(\mathbf{p}_{k,m}\mathbf{h}_{j,m}^{H}\right)\mathbf{U}_{j}\mathbf{W}_{j}\\
       &~~\times\mathbf{U}_{j}^{H}\sum_{m=1}^{M}\left(\mathbf{h}_{j,m} \mathbf{p}_{k,m}^{H}\right)\Bigg)      \\
        \text{s.t.~~}&          \sum_{k=1}^{K} \|\mathbf{p}_{k,m}\|_2^2\leq P_{m},
    \end{aligned}
\end{equation}
where $\mathbf{p}_{m}=\left(\mathbf{p}_{1,m}^{T},\ldots,\mathbf{p}_{K,m}^{T}\right)^{T}$ with $\mathbf{p}_{k,m}$ being the $m$-th column of $\mathbf{P}_{k}^H$. Note that we have used in the above the fact  $\mathbf{H}_j\mathbf{P}_k=\sum_{m=1}^{M}\mathbf{h}_{j,m} \mathbf{p}_{k,m}^{H}, \forall j,k$, with $\mathbf{h}_{k,m}$ being the $m$-th column of $\mathbf{H}_k$.

Problem \eqref{P_PAPC_BCD} is a convex quadratic optimization problem with a single constraint, which is generally solved via the Bisection method. However, by further exploring the problem structure, we find that the quadratic term with respect to $\mathbf{p}_{m}$ can be expressed as a scaled Euclidean norm. This benign {\em scaled-norm property} facilitates a closed-form optimal solution. Specifically, letting $\mathbf{A}\triangleq\text{blkdiag}\left(\alpha_{1}\mathbf{U}_{1}\mathbf{W}_{1}\mathbf{U}_{1}^{H},\ldots,\alpha_{K}\mathbf{U}_{K}\mathbf{W}_{K}\mathbf{U}_{K}^{H}\right)\in \mathbb{C}^{N \times N}$, $\mathbf{B}\triangleq\text{blkdiag}\left(\alpha_{1}\mathbf{W}_{1}\mathbf{U}_{1}^{H},\ldots,\alpha_{K}\mathbf{W}_{K}\mathbf{U}_{K}^{H}\right)\in \mathbb{C}^{D \times N}$, and  $\mathbf{h}_{m}\triangleq\left(\mathbf{h}_{1,m}^{T},\ldots,\mathbf{h}_{K,m}^{T}\right)^{T}\in \mathbb{C}^{N\times 1}$, problem \eqref{P_PAPC_BCD} can be recast as
\begin{equation}\label{P_PAPC_final}
    \begin{aligned}
        \min_{\mathbf{p}_{m}} ~~
        &a_{m} \|\mathbf{p}_{m}\|_2^{2}+2\Re e\left(\mathbf{b}_{m}^H\mathbf{p}_{m}
        \right)
        \\
        \text{ s.t. } ~~    &  \|\mathbf{p}_{m}\|_2^2\leq P_{m},
    \end{aligned}
\end{equation}
where 
\begin{equation}\label{eq_a_m}
a_{m}\triangleq\mathbf{h}_{m}^{H}\mathbf{A} \mathbf{h}_{m},\end{equation}
and
\begin{equation}\label{eq_b_km}
\mathbf{b}_{m}\triangleq -\mathbf{B}\mathbf{h}_{m}+\sum_{l \neq m}\mathbf{p}_{l}\mathbf{h}_{l}^{H}\mathbf{A} \mathbf{h}_{m}. \end{equation}
It is readily seen that \eqref{P_PAPC_final}  is essentially  the problem of projection of the vector $-\frac{1}{a_m} \mathbf{b}_m$ onto the Euclidean ball $\|\mathbf{p}_{m}\|_2\leq \sqrt{P_{m}}$, which admits a closed-form solution as follows.
\begin{equation}\label{P_PAPC_solution_brief}
\mathbf{p}_{m}=-\mathbf{b}_{m}\times \text{min}\left( \frac{1}{a_{m}},\frac{\sqrt{P_{m}}}{ \|\mathbf{b}_{m}\|_2} \right).
\end{equation}
% The detailed derivations for \eqref{P_PAPC_final} and \eqref{P_PAPC_solution_brief} are provided in Appendix \ref{appendix_P_PAPC_final} and Appendix \ref{appendix_P_PAPC_solution}, respectively.

Apparently, computing $\mathbf{b}_{m}$ has a complexity of $\mathcal{O}\left(M\right)$ due to the summation over $l=1,2,\ldots, m-1, m+1,\ldots, M$ in the second term of  \eqref{eq_b_km}. As a result, computing all $\mathbf{b}_{m}$, $m=1,2,\ldots,M$ in each iteration may require a complexity of $\mathcal{O}\left(M^2\right)$. However, observing the special structure of $\sum_{l\neq m} \mathbf{p}_{l}\mathbf{h}_l^H$ which can be expressed as $\sum_{l=1}^M \mathbf{p}_{l}\mathbf{h}_l^H{-}\mathbf{p}_{m}\mathbf{h}_m^H$, we can recursively update $\mathbf{C} \triangleq\sum_{l\neq m} \mathbf{p}_{l}\mathbf{h}_l^H$ as a whole while computing each $\mathbf{b}_{m}$ via \eqref{eq_b_km}. In such a way, we obtain a linear-complexity algorithm for the WSR maximization problem with PAPCs (see Algorithm \ref{alg_PAPC-WMMSE}), termed \emph{PAPC-WMMSE}. In the PAPC-WMMSE algorithm, we iteratively update $\mathbf{U}$, $\mathbf{W}$, and $\mathbf{p}_{m}$, $m=1,2,\ldots,M$ until convergence. By a similar proof as in \cite{shi2011iteratively},  we have the following  convergence result stated in Theorem \ref{th_PAPC_WMMSE_convergence}.

\begin{theorem}\label{th_PAPC_WMMSE_convergence}
\emph{(Convergence Results of the PAPC-WMMSE Algorithm):}
Any limit point $\left(\mathbf{U}^{\star},\mathbf{W}^{\star},\mathbf{P}^{\star}\right)$ of the iterative sequence generated by the PAPC-WMMSE algorithm is a stationary point of problem
\eqref{transform_problem_PAPC}, and the corresponding $\mathbf{P}^{\star}$ is a stationary point of problem \eqref{wsr_PAPC_problem}.
%Conversely, if $\mathbf{P}^{\star}$ is a stationary point of \eqref{wsr_PAPC_problem}, then the point $\left(\mathbf{U}^{\star},\mathbf{W}^{\star},\mathbf{P}^{\star}\right)$ with $\mathbf{U}^{\star}_{k}$ and $\mathbf{W}^{\star}_{k}$ defined by \eqref{U_update_WMMSE} and \eqref{W_update_WMMSE}, respectively, is a stationary point of \eqref{transform_problem_PAPC}.
\end{theorem}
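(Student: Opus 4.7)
The plan is to mirror the convergence analysis of the classical WMMSE in~\cite{shi2011iteratively}, adapted to the finer block partition $(\mathbf{U},\mathbf{W},\mathbf{p}_{1},\ldots,\mathbf{p}_{M})$ used by PAPC-WMMSE. Concretely, I would first invoke a standard convergence result for block coordinate descent on continuously differentiable cost functions with essentially cyclic updates (such as Proposition~2.7.1 in Bertsekas' \emph{Nonlinear Programming}) to conclude that any limit point of the PAPC-WMMSE iterates is a block-coordinatewise minimizer, and hence a stationary point, of the equivalent problem \eqref{transform_problem_PAPC}. I would then use Lemma~\ref{lemma_WMMSE} together with the optimality relations \eqref{eq_lemma_optimal_U}--\eqref{eq_lemma_optimal_W} to transfer this stationarity back to the original WSR problem \eqref{wsr_PAPC_problem}.

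The first substantive step is to verify the hypotheses of the BCD theorem. The objective of \eqref{transform_problem_PAPC} is continuously differentiable in $(\mathbf{U},\mathbf{W},\mathbf{P})$ on the effective domain where every $\mathbf{W}_{k}$ remains positive definite, a property that is maintained along the iterates through the closed form \eqref{W_update_WMMSE}. Each blockwise subproblem admits a unique minimizer: the $\mathbf{U}$-subproblem is a strictly convex quadratic thanks to $\sum_{j}\mathbf{H}_{k}\mathbf{P}_{j}\mathbf{P}_{j}^{H}\mathbf{H}_{k}^{H}+\sigma_{k}^{2}\mathbf{I}\succ\mathbf{0}$; the $\mathbf{W}$-subproblem is a strictly convex $\log\det$-regularized trace minimization with the unique closed form \eqref{W_update_WMMSE}; and each $\mathbf{p}_{m}$-subproblem is the ball-projection problem \eqref{P_PAPC_final} with strictly positive coefficient $a_{m}$, whose unique solution is \eqref{P_PAPC_solution_brief}. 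Boundedness of the iterates is immediate from the PAPCs, so a limit point always exists, and the objective is monotonically non-increasing between consecutive block updates.

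The main obstacle I anticipate is the potential degeneracy $a_{m}=\mathbf{h}_{m}^{H}\mathbf{A}\mathbf{h}_{m}=0$, where $\mathbf{A}\triangleq\text{blkdiag}(\alpha_{1}\mathbf{U}_{1}\mathbf{W}_{1}\mathbf{U}_{1}^{H},\ldots,\alpha_{K}\mathbf{U}_{K}\mathbf{W}_{K}\mathbf{U}_{K}^{H})$. When this happens, the $\mathbf{p}_{m}$-subproblem becomes linear rather than strictly convex, its minimizer is no longer unique, and the usual BCD uniqueness premise fails. To circumvent this, I would initialize with a strictly feasible $\mathbf{P}$ that yields a positive WSR, and then observe that after one full inner cycle of $\mathbf{U}$ and $\mathbf{W}$ updates one has $\mathbf{W}_{k}=(\mathbf{I}-\mathbf{U}_{k}^{H}\mathbf{H}_{k}\mathbf{P}_{k})^{-1}\succ\mathbf{0}$ and $\mathbf{U}_{k}\neq\mathbf{0}$. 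Combined with the standing full row rank of $\mathbf{H}$, this forces $\mathbf{A}$ to be positive definite on the span of the $\mathbf{h}_{m}$'s, so $a_{m}>0$ throughout the iterations and at every limit point of interest.

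Finally, I would transfer stationarity back to \eqref{wsr_PAPC_problem}. At any limit point $(\mathbf{U}^{\star},\mathbf{W}^{\star},\mathbf{P}^{\star})$, the first-order optimality conditions with respect to $\mathbf{U}$ and $\mathbf{W}$ coincide exactly with \eqref{eq_lemma_optimal_U}--\eqref{eq_lemma_optimal_W}. Substituting these expressions into the cost of \eqref{transform_problem_PAPC} collapses it, via Lemma~\ref{lemma_WMMSE}, to the negative weighted sum rate in \eqref{WSR_PAPC} plus a constant. Consequently, the KKT conditions of \eqref{transform_problem_PAPC} at $(\mathbf{U}^{\star},\mathbf{W}^{\star},\mathbf{P}^{\star})$ reduce, after this substitution, to the KKT conditions of \eqref{wsr_PAPC_problem} at $\mathbf{P}^{\star}$, so $\mathbf{P}^{\star}$ is a stationary point of the original WSR maximization problem under PAPCs.
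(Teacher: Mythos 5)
Your proposal is correct and follows essentially the same route the paper intends: the paper gives no explicit proof of this theorem, stating only that it follows ``by a similar proof as in \cite{shi2011iteratively},'' and your outline --- classical BCD convergence with unique blockwise minimizers over a Cartesian-product constraint set, followed by transferring stationarity back to the original WSR problem via Lemma~\ref{lemma_WMMSE} and the optimality relations for $\mathbf{U}$ and $\mathbf{W}$ --- is exactly that argument adapted to the finer $(\mathbf{U},\mathbf{W},\mathbf{p}_1,\ldots,\mathbf{p}_M)$ partition. Your attention to the degeneracy $a_m=0$ goes beyond what the paper addresses and the concern is legitimate; only note that your claimed fix (that $\mathbf{A}$ becomes positive definite on the span of the $\mathbf{h}_m$'s) is not fully justified, since $\mathbf{A}$ has rank at most $D\le N$ and some $\mathbf{h}_m$ could in principle lie in its null space, so a rigorous treatment would instead argue that the update \eqref{P_PAPC_solution_brief} still selects a minimizer in that case and invoke a BCD result that tolerates non-unique blockwise minima for convex subproblems.
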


% \begin{proof}
%  It is easy to verify that the objective function of \eqref{transform_problem_PAPC} is a continuously differentiable convex function in each of the variables $\mathbf{U},\mathbf{W}$, and $\{\mathbf{p}_{m}\}_{m=1}^{M}$, and the constraint set is separable in the variables. Therefore, by following the classic optimization theory of BCD method \cite{bertsekas1999nonlinear}, the proposed PAPC-WMMSE algorithm is guaranteed to converge to a stationary point of \eqref{transform_problem_PAPC}. It remains to verify that $\mathbf{P}^{\star}$ is a stationary point of \eqref{wsr_PAPC_problem} if and only if $\left(\mathbf{U}^{\star},\mathbf{W}^{\star},\mathbf{P}^{\star}\right)$ is a stationary point
% of \eqref{transform_problem_PAPC} for some $\mathbf{U}^{\star}$ and $\mathbf{W}^{\star}$. The following proof is similar to that in \cite{shi2011iteratively} and omitted here.
% \end{proof}

\begin{algorithm}[!tb]
    \renewcommand{\algorithmicrequire}{\textbf{Initialization:}}
    \renewcommand{\algorithmicensure}{\textbf{Output:}}
    \caption{Proposed PAPC-WMMSE Precoding}  \label{alg_PAPC-WMMSE}
    \begin{algorithmic}[1]
    \REQUIRE Initialize $\mathbf{P}$ satisfying PAPCs and $\mathbf{W}_k=\mathbf{I},\forall k$. Set the tolerance of accuracy $\epsilon$.
    \REPEAT
    \STATE $\mathbf{W}'_{k} = \mathbf{W}_{k},\; \forall\; k$\\
    \STATE $\mathbf{U}_{k} = \left(\sum_{j=1}^K \mathbf{H}_{k}\mathbf{P}_{j}\mathbf{P}_{j}^{H}\mathbf{H}_{k}^{H}+\sigma_{k}^{2} \mathbf{I} \right)^{-1}\mathbf{H}_{k}\mathbf{P}_{k},\; \forall\; k$;
    \STATE $\mathbf{W}_{k} = \left(\mathbf{I}-\mathbf{U}_k^H\mathbf{H}_k\mathbf{P}_k \right)^{-1},\; \forall\; k$;
    \STATE $\mathbf{C} = \sum_{l =1 }^{M}\mathbf{p}_{l}\mathbf{h}_{l}^{H}$;
    \STATE \textbf{for} $m = 1:M$ \textbf{do}\\
    \STATE\quad $a_{m}=\mathbf{h}_{m}^{H}\mathbf{A} \mathbf{h}_{m}$;\\
    \STATE\quad ${\mathbf{C}}=\mathbf{C}-\mathbf{p}_{m}\mathbf{h}_{m}^{H}$;\\
\STATE\quad 
$\mathbf{b}_{m}= -\mathbf{B}\mathbf{h}_{m}+\mathbf{C}\mathbf{A} \mathbf{h}_{m}$;
%update $
%\mathbf{b}_{k,m},\; \forall\; k$ by \eqref{eq_b_km};\\
    \STATE\quad
    $\mathbf{p}_{m}=-\mathbf{b}_{m}\times \text{min}\left( \frac{1}{a_{m}},\frac{\sqrt{P_{m}}}{ \|\mathbf{b}_{m}\|_2} \right)$;\\
    \STATE\quad
    $\mathbf{C}={\mathbf{C}}+\mathbf{p}_{m}\mathbf{h}_{m}^{H}$;\\
    \STATE\textbf{end for}\\
    \UNTIL $|\sum_k \alpha_k\log\det(\mathbf{W}_k)-\sum_k \alpha_k\log\det(\mathbf{W}'_k)|\leq \epsilon$.
        \ENSURE $\mathbf{P}=\left(\mathbf{p}_{1},\ldots,\mathbf{p}_{M}\right)^{H}$.
    \end{algorithmic}
\end{algorithm}

At last, let us analyze the complexity of the proposed PAPC-WMMSE. In  Algorithm \ref{alg_PAPC-WMMSE}, When $M\gg N\ge D\geq K$, it can be shown that the computational complexity of lines 3-5 is dominated by line 5, which has a complexity of $\mathcal{O}(MND)$. Moreover, because $\mathbf{A}$ is a block diagonal matrix, the computational complexity of lines 7-11 is dominated by line 9, which has a complexity of $\mathcal{O}(ND)$ due to the computation of $\mathbf{C}\mathbf{A}\mathbf{h}_m$. Therefore, we conclude that the computational complexity of each iteration of the proposed PAPC-WMMSE is $\mathcal{O}(MND)$ which is linear in the number of BS antennas and significantly less than the cubic complexity (i.e., $\mathcal{O}\left(M^3\right)$) of the existing algorithms  \cite{shi2008per,Mao2019rate,thomas2018hybrid}.

\section{Simulation Results}

\subsection{Simulation Setup}

% \begin{figure*}
%     \centering
%     \subfloat[$M=64$, $K=12$, $D_k=2$, $10$ dB.]{\includegraphics[width=0.45\textwidth]{}}
%     \subfloat[$M=128$, $K=16$, $D_k=4$, $20$ dB.]{\includegraphics[width=0.45\textwidth]{}}
%     \caption{Convergence of the WMMSE algorithm and the proposed R-WMMSE algorithm.}
%     \label{fig:th_convergence_results}
% \end{figure*}

 We consider a single-cell massive MU-MIMO system consisting of a BS with $M$ antennas and $K$ users each equipped with $N_k=4$ antennas to receive $D_k=2$ or $4$ data streams. The sum power budget of the BS for the SPC case is set to be $P_{\text{max}}=10$ [W], while the maximum per-antenna transmit power for PAPCs case is set to be $P_{\text{max}}/M$.
The channel matrix $\mathbf{H}$ is generated from the circularly-symmetric standard complex normal distribution with pathloss between the users and the BS. The pathloss is set to be $128.1+37.6\log_{10}\left(\omega\right)$[dB]\cite{dahrouj2010coordinated}, where $\omega$ is the distance between the user and the BS taking range in $0.1\sim0.3$ km. The noise power is set to be equal for all users and is given by $\sigma_{k}^{2}=10^{\frac{1}{K} \sum_{k} \log _{10} \frac{1}{N_{k}}  \|\mathbf{H}_{k}\|_{F}^{2} }\times 10^{-\frac{\mathrm{SNR}}{10}}$, where signal-to-noise ratio (SNR) is the
average receive SNR for all users when no precoding is used. The priority weights $\alpha_k,\forall k$ of the users are set to be equal. Our simulation results are averaged over 100 randomly generated channel realizations.

All computations below are performed using an AMD Ryzen 7 5800H with Radeon Graphics 3.20 GHz, 16 GB Memory (RAM), Windows 10 (64 b) operating system, and Matlab R2020a environment.

\subsection{R-WMMSE Performance Evaluation for the SPC Case}

This subsection provides simulation results evaluating the performance of the proposed R-WMMSE algorithm for the SPC case. We compare our method with other baselines, including the WMMSE algorithm in \cite{shi2011iteratively} and the ZF precoding method. The ZF precoder serves as the initial point of the other two methods.

\begin{figure}[htbp]
    \centering
    \subfloat[$M=64$, $K=12$, $D_k=2$, $10$ dB.]{\includegraphics[width=0.45\textwidth]{figures/iter_64M12K2Dk10dB.eps}}\\
    \subfloat[$M=128$, $K=16$, $D_k=4$, $0$ dB.]{\includegraphics[width=0.45\textwidth]{}}\\
    \caption{Convergence of the proposed R-WMMSE algorithm and the WMMSE algorithm.}
    \label{fig:th_convergence_results}
\end{figure}

First, we examine the convergence performance of the proposed R-WMMSE algorithm and the WMMSE algorithm in Fig. \ref{fig:th_convergence_results}(a) and (b). The weighted sum-rate is measured by bits per channel use (bpcu). These plots show that the proposed R-WMMSE algorithm and the WMMSE algorithm converge to the same WSR value, conforming to the previous theoretical finding. Furthermore, it is observed that starting from the same initial point, the R-WMMSE algorithm could often have better convergence performance (especially in the first several iterations) than the WMMSE algorithm because the R-WMMSE algorithm relaxes the SPC during iterations. %The R-WMMSE algorithm converges well about five iterations for both cases.

\begin{figure}[htbp]
	\centering
	\subfloat[$M=128$, $D_k=2$.]{\label{fig:time_results_vs_K}\includegraphics[width=0.45\textwidth]{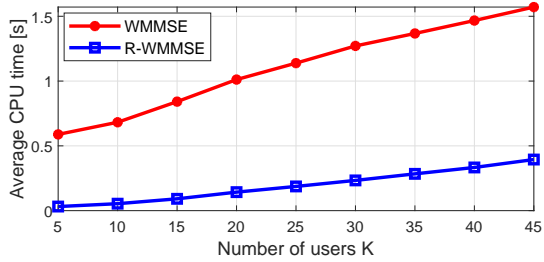}}\\
	\subfloat[ $K=16$, $D_k=4$.]{\label{fig:time_results_vs_M}\includegraphics[width=0.45\textwidth]{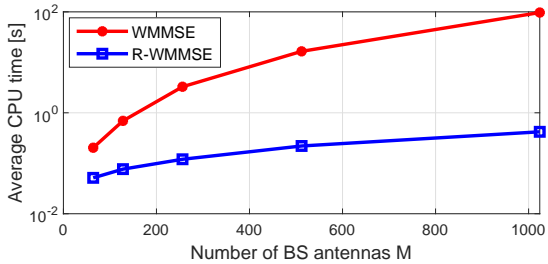}}\\	
	\caption{Average CPU time versus: (a) the number of users, and (b) the number
of BS antennas.}\label{fig:SPC_convergence}
\end{figure}

Second, we compare the proposed R-WMMSE algorithm with the classical WMMSE algorithm in terms of the average CPU execution time under different numbers of users $K$ and different numbers of BS antennas $M$. As can be seen in Fig. \ref{fig:SPC_convergence}(a), when $M=128$ is fixed, the average execution time of the R-WMMSE is about $20\%$ of that of the WMMSE algorithm. Moreover, the CPU time of both algorithms increases slowly with $K$ increasing. As shown in Fig. \ref{fig:SPC_convergence}(b), when fixing $K=16$, the CPU time\footnote{Note that we take the logarithm of the CPU time here since the CPU time gap between the two algorithms is too large when $M$ is large.} of the WMMSE algorithm increases sharply as $M$ increases. In contrast, the CPU time of the proposed R-WMMSE algorithm increases only slightly. In particular, when $M=1024$, the WMMSE algorithm will take $97$ seconds to converge, while the R-WMMSE algorithm only takes $0.4$ seconds. In other words, for the extremely large antenna array case with $M>1000$, the R-WMMSE algorithm performs 200+ times faster than the WMMSE algorithm. These phenomena are consistent with the previous complexity analysis that the R-WMMSE algorithm and the WMMSE algorithm have linear and cubic complexity in $M$, respectively.

\begin{figure}[htbp]
    \centering
    \includegraphics[width=0.5\textwidth]{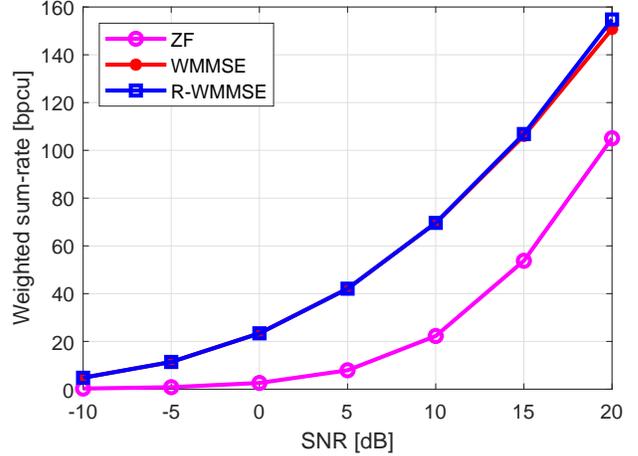}
    \caption{WSR performance versus SNR ($M=128,K=16,D_k=4$).}
    \label{fig:wsr_vs_snr} 
\end{figure}

Finally, we present the weighted sum-rate performance of the proposed R-WMMSE algorithm and other baselines versus SNR under the setting: $M=128,K=16$, and $D_k=4$. As shown in Fig. \ref{fig:wsr_vs_snr}, the proposed R-WMMSE algorithm yields almost the same performance as the WMMSE algorithm, but significantly outperforms the ZF algorithm under different SNRs. The main reason for the gap between the WMMSE/R-WMMSE and ZF is that the power control is not optimized in the latter.

\subsection{PAPC-WMMSE Performance Evaluation for the PAPCs case}

This subsection provides numerical results of the proposed PAPC-WMMSE algorithm for the PAPCs case. We compare our method with the baselines, including normalized WMMSE, normalized ZF, and HBF in \cite{thomas2018hybrid}. Here, the normalization is performed according to Eq. \eqref{eq_normlaized_WMMSE}, and the normalized ZF algorithm serves as the initial point of the iterative methods.
\begin{figure}[htbp]
	\centering
	\subfloat[$M=64$, $K=12$, $D_k=2$, $10$ dB.]{\label{fig:papc_convergence_a}\includegraphics[width=0.45\textwidth]{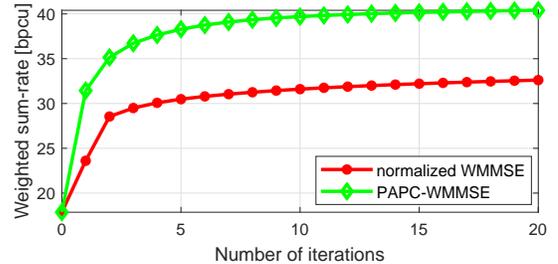}}\\
	\subfloat[$M=128$, $K=16$, $D_k=4$, $0$ dB.]{\label{fig:papc_convergence_b}\includegraphics[width=0.45\textwidth]{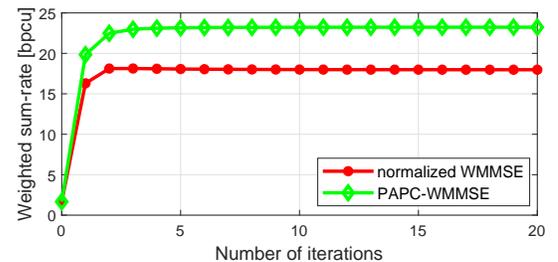}}\\	
	\caption{Convergence of the proposed PAPC-WMMSE algorithm.}\label{fig:papc_convergence}
\end{figure}

First, we examine the convergence performance of the proposed PAPC-WMMSE algorithm and the normalized WMMSE algorithm. As can be seen in Fig. \ref{fig:papc_convergence}(a) and (b), the proposed PAPC-WMMSE algorithm and the normalized WMMSE algorithm both converge within 20 iterations for case (a), while only 5 iterations for case (b). Furthermore, we find that the proposed PAPC-WMMSE algorithm has a much better WSR performance than the normalized WMMSE algorithm.

\begin{figure}[htbp]
	\centering
	\subfloat[$M=128$,  $D_k=2$.]{\label{fig:papc_CPU_time_K}\includegraphics[width=0.45\textwidth]{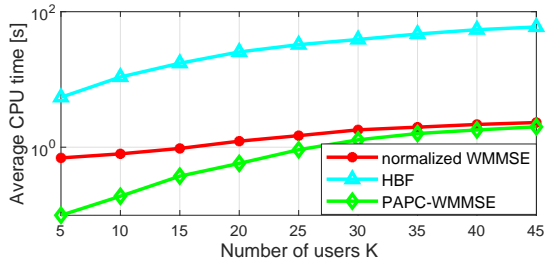}}\\
	\subfloat[$K=12$, $D_k=2$.]{\label{fig:papc_CPU_time_M}\includegraphics[width=0.45\textwidth]{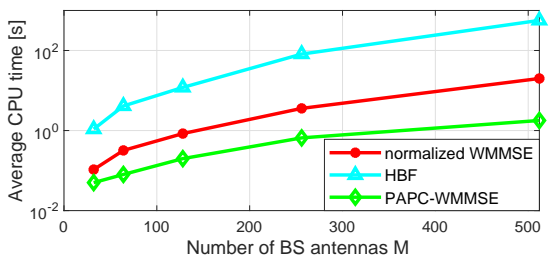}}\\	
	\caption{Average CPU time versus: (a) the number of users, and (b) the number
of BS antennas.}\label{fig:papc_CPU_time}
\end{figure}

Second, in terms of the average CPU execution time, we compare the proposed PAPC-WMMSE algorithm with the baselines under different numbers of users $K$ and different numbers of BS antennas $M$. As shown in Fig. \ref{fig:papc_CPU_time}(a), when $M=128$ is fixed, the average CPU time of the PAPC-WMMSE algorithm is less than the normalized WMMSE algorithm, especially when $K$ is small. As shown in Fig. \ref{fig:papc_CPU_time}(b), when fixing $K=12$, as $M$ increases, the HBF algorithm and the normalized WMMSE algorithm will take significantly more time than the proposed PAPC-WMMSE algorithm. In particular, when $M=512$, the HBF and the normalized WMMSE algorithm will take $565$ and $20$ seconds to converge, respectively. In contrast, the PAPC-WMMSE algorithm only takes $1.7$ seconds. In conclusion, the PAPC-WMMSE algorithm performs significantly faster than the HBF and the normalized WMMSE algorithm. These phenomena are consistent with the previous complexity analysis that the PAPC-WMMSE algorithm has linear complexity in $M$. %The CPU time consumed the HBF algorithm is excessively high due to the alternating bisection.

\begin{figure}[htbp]
    \centering
    \includegraphics[width=0.5\textwidth]{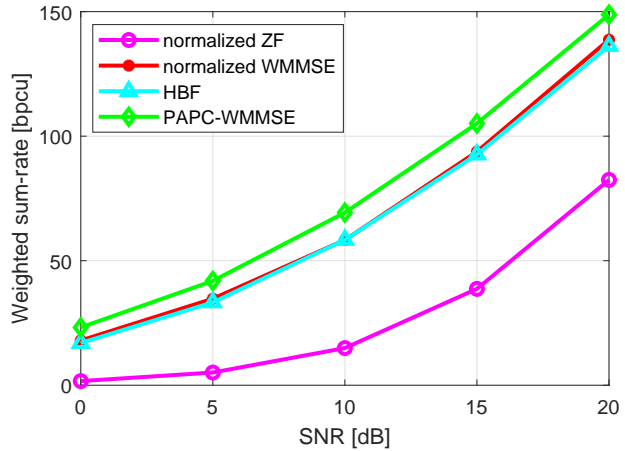}
    \caption{WSR performance versus SNR under PAPCs ($M=128,K=16,D_k=4$).}
    \label{fig:papc_wsr_vs_snr}
\end{figure}

Finally, we evaluate the WSR performance of the proposed PAPC-WMMSE algorithm and other baselines versus SNR under the setting: $M=128,K=16$, and $D_k=4$. As shown in Fig. \ref{fig:papc_wsr_vs_snr}, the proposed PAPC-WMMSE algorithm outperforms the HBF algorithm in \cite{thomas2018hybrid} and the normalized WMMSE algorithm, while the normalized ZF algorithm achieves the worst WSR performance since it is very heuristic without considering the objective of WSR maximization.

\section{Conclusions}
This paper has investigated the WSR maximization problems of massive MU-MIMO systems under SPC and PAPCs. The existing optimization-based algorithms for both problems suffer from the cubic complexity issue. This paper has proposed two linear-complexity algorithms (i.e., R-WMMSE and PAPC-WMMSE) for the WSR maximization problems with SPC and PAPCs, respectively, and established their convergence. Notably, the per-iteration complexity of R-WMMSE is independent of the number of BS antennas, making it attractive for extremely-large-antenna arrays. 

Finally, it is worth to mention that, in spite of focusing on the single-cell MU-MIMO system, the ideas behind the R-WMMSE algorithm and the PAPC-WMMSE algorithm can be easily generalized to the multi-cell MU-MIMO systems, since the \emph{low-dimensional subspace} property may be very useful for \emph{bandwidth-constrained} decentralized transceiver design in cell-free massive MIMO systems, decentralized baseband processing, multi-point transmission, deep unfolding based precoding, etc. These topics are left for our future research.  
\appendices
% \section{Proof of Lemma \ref{lemma_WMMSE}}\label{appendix_lemma_WMMSE}
% \begin{proof}
% We have the following:
% \begin{equation}
% \begin{aligned}
% &\log\det\left(\mathbf{I}+\mathbf{H}\mathbf{V}\mathbf{V}^H\mathbf{H}^H\mathbf{N}^{-1}\right)\\
% =&\log\det\left(\mathbf{I}+\mathbf{V}^H\mathbf{H}^H\mathbf{N}^{-1}\mathbf{H}\mathbf{V}\right)\\
% =&-\log\det\left(\mathbf{E}\left(\hat{\mathbf{U}},\mathbf{V}\right)\right)\\
% =&\max_{\mathbf{W}\succ \mathbf{0}}\ \log\det\left(\mathbf{E}\right)-\tr\left(\mathbf{W}\mathbf{E}\left(\hat{\mathbf{U}},\mathbf{V}\right)\right)+m\\
% =&\max_{\mathbf{W}\succ \mathbf{0},\mathbf{U}} \log\det\left(\mathbf{W}\right) -\tr\left(\mathbf{W}\mathbf{E}\left(\mathbf{U},\mathbf{V}\right)\right)+m,
% \end{aligned}
% \end{equation}
% where the first equality can be derived by the matrix identity $\log\det\left(\mathbf{I}+\mathbf{AB}\right)=\log\det\left(\mathbf{I}+\mathbf{BA}\right)$, the second equality follows from the Woodbury matrix identity and the definition of $\mathbf{E}\left(\hat{\mathbf{U}},\mathbf{P}\right)$. The last two equality and the corresponding optimal solutions in \eqref{eq_lemma_optimal_W} and \eqref{eq_lemma_optimal_U} can be obtained by simply using the first-order optimality condition. We refer readers to \cite{shi2011iteratively,shi2015secure} for more detailed proof.

% \end{proof}

%%%%%%%%%%%%%%%%%% Proof of Proposition column space 
\section{Proof of Proposition \ref{proposition_column_space} }\label{appendix_proposition_column_space}
We prove Proposition \ref{proposition_column_space} by performing a thorough analysis on the Karush-Kuhn-Tucker (KKT) conditions of problem \eqref{wsr_total_problem}. Before giving the formal proof, we first prove a basic fact stated in the following lemma.

% First, let us define the Lagrange function for
% problem \eqref{wsr_total_problem} as
% \begin{equation}\label{Lagrange}
% \begin{aligned}
% \mathcal{L}\left(\mathbf{P},\lambda\right) \triangleq 
% \sum_{k=1}^{K} \alpha_k R_k - \lambda\left( \sum_{k=1}^{K}\operatorname{Tr}\left(\mathbf{P}_{k}\mathbf{P}_{k}^{H}\right)-P_{\max}\right),
% \end{aligned}
% \end{equation}
% where $\lambda$ is the Lagrange multiplier corresponding to the sum power constraint \eqref{wsr_total_constraints}.
% Before proving Proposition \ref{proposition_column_space}, we first prove a basic fact in the following Lemma.
\begin{lemma}\label{lemma_multiplier}\!\!\emph{:}
For any nontrivial stationary point $\mathbf{P}^{\star}$ of problem \eqref{wsr_total_problem}, the corresponding Lagrange multiplier $\lambda^{\star}$ associated with the SPC must be positive, i.e., $\lambda^{\star}>0$, and vice versa. % together with $\mathbf{P}^{\star}$, satisfying the Karush-Kuhn-Tucker (KKT) conditions\cite{bertsekas1999nonlinear} of \eqref{wsr_total_problem}. Furthermore, we state that the $\lambda^{\star}$ is always positive, namely $\lambda^{\star}>0$.
\end{lemma}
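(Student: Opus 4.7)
The plan is to analyze the KKT conditions for problem \eqref{wsr_total_problem} and combine them with a scaling argument on $R(\mathbf{P})$. Writing the Lagrangian as $\mathcal{L}(\mathbf{P},\lambda)=\sum_k\alpha_k R_k(\mathbf{P})-\lambda\bigl(\sum_k\tr(\mathbf{P}_k\mathbf{P}_k^H)-P_{\max}\bigr)$, the Wirtinger stationarity condition reads $\partial R/\partial \mathbf{P}_k^{\ast}=\lambda \mathbf{P}_k$ for every $k$, together with $\lambda\ge 0$, primal feasibility, and complementary slackness
\begin{equation}\label{complementary_slackness}
\lambda\Bigl(\sum_{k=1}^K\tr(\mathbf{P}_k\mathbf{P}_k^H)-P_{\max}\Bigr)=0.
\end{equation}
Using the representation $R_k=\log\det\mathbf{T}_k-\log\det\mathbf{J}_k$ with $\mathbf{T}_k\triangleq\sum_j\mathbf{H}_k\mathbf{P}_j\mathbf{P}_j^H\mathbf{H}_k^H+\sigma_k^2\mathbf{I}$ and $\mathbf{J}_k\triangleq\sum_{j\neq k}\mathbf{H}_k\mathbf{P}_j\mathbf{P}_j^H\mathbf{H}_k^H+\sigma_k^2\mathbf{I}$, one obtains
$\partial R/\partial\mathbf{P}_k^{\ast}=\alpha_k\mathbf{H}_k^H\mathbf{T}_k^{-1}\mathbf{H}_k\mathbf{P}_k+\sum_{j\neq k}\alpha_j\mathbf{H}_j^H(\mathbf{T}_j^{-1}-\mathbf{J}_j^{-1})\mathbf{H}_j\mathbf{P}_k$, which will be the concrete handle for both directions.

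For the forward direction (nontrivial $\Rightarrow \lambda^{\star}>0$), I would take the Frobenius inner product of the stationarity condition with $\mathbf{P}_k^{\star}$, sum over $k$, and take real parts. The right-hand side is $\lambda^{\star}\sum_k\|\mathbf{P}_k^{\star}\|_F^2>0$ (a nontrivial point has some $\mathbf{P}_k^{\star}\neq\mathbf{0}$). The left-hand side equals $\tfrac12\frac{d}{dt}R(t\mathbf{P}^{\star})\big|_{t=1}$ by the chain rule. Writing $R_k(t)=\log\det(t^2\widetilde{\mathbf{T}}_k+\sigma_k^2\mathbf{I})-\log\det(t^2\widetilde{\mathbf{J}}_k+\sigma_k^2\mathbf{I})$ with $\widetilde{\mathbf{T}}_k,\widetilde{\mathbf{J}}_k$ the noise-free covariances evaluated at $\mathbf{P}^{\star}$, a short matrix-calculus computation yields
\begin{equation}\label{eq:scaling-derivative}
\frac{d R_k}{dt}\Big|_{t=1}=2\sigma_k^2\bigl(\tr(\mathbf{J}_k^{-1})-\tr(\mathbf{T}_k^{-1})\bigr).
\end{equation}
Since $\mathbf{T}_k\succeq\mathbf{J}_k$ with equality iff $\mathbf{H}_k\mathbf{P}_k^{\star}=\mathbf{0}$, the Löwner-monotonicity of inversion gives \eqref{eq:scaling-derivative} $\ge 0$, with strict inequality precisely when $\mathbf{H}_k\mathbf{P}_k^{\star}\neq\mathbf{0}$. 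Nontriviality guarantees this happens for some $k$, so the left-hand side is strictly positive, forcing $\lambda^{\star}>0$.

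For the reverse direction I would argue the contrapositive: if the stationary point is trivial, then $\lambda^{\star}=0$. At a trivial point, $\mathbf{H}_k\mathbf{P}_k=\mathbf{0}$ for all $k$ implies $\mathbf{T}_k=\mathbf{J}_k=\sigma_k^2\mathbf{I}$, so the cross terms $\mathbf{T}_j^{-1}-\mathbf{J}_j^{-1}$ vanish and $\mathbf{H}_k^H\mathbf{T}_k^{-1}\mathbf{H}_k\mathbf{P}_k=\sigma_k^{-2}\mathbf{H}_k^H(\mathbf{H}_k\mathbf{P}_k)=\mathbf{0}$; hence $\partial R/\partial\mathbf{P}_k^{\ast}=\mathbf{0}$ for every $k$. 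Stationarity therefore reads $\lambda^{\star}\mathbf{P}_k=\mathbf{0}$. If any $\mathbf{P}_k\neq\mathbf{0}$ this directly gives $\lambda^{\star}=0$; otherwise $\mathbf{P}=\mathbf{0}$, the SPC is strictly inactive, and \eqref{complementary_slackness} forces $\lambda^{\star}=0$. The main obstacle I expect is the careful derivation of \eqref{eq:scaling-derivative} and rigorously tying the strict inequality $\tr(\mathbf{J}_k^{-1})>\tr(\mathbf{T}_k^{-1})$ to the nontriviality hypothesis; everything else is bookkeeping around the KKT system.
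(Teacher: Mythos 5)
Your proof is correct and follows essentially the same route as the paper's Appendix~\ref{appendix_proposition_column_space}: in the forward direction both arguments contract the stationarity condition \eqref{reduced_kkt} with $\mathbf{P}_k^{\star}$ and reduce it to the identity $\sum_k\alpha_k\sigma_k^2\bigl(\tr(\mathbf{J}_k^{-1})-\tr(\mathbf{T}_k^{-1})\bigr)=\lambda^{\star}\sum_k\|\mathbf{P}_k^{\star}\|_F^2$ --- you obtain the left-hand side as the derivative of $R$ along $t\mapsto t\mathbf{P}^{\star}$, the paper via the trace identity $\tr\bigl(\mathbf{A}(\mathbf{A}+\mathbf{I})^{-1}\bigr)=\tr(\mathbf{I})-\tr\bigl((\mathbf{A}+\mathbf{I})^{-1}\bigr)$ --- and then conclude from $\mathbf{T}_k\succeq\mathbf{J}_k$ with equality iff $\mathbf{H}_k\mathbf{P}_k^{\star}=\mathbf{0}$. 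Your reverse direction (gradient vanishes at a trivial point, so stationarity forces $\lambda^{\star}\mathbf{P}_k=\mathbf{0}$, and complementary slackness finishes the job) is exactly the paper's contradiction argument stated as a contrapositive.
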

\begin{proof}
Let us first show the necessity. Note that the linear independence constraint qualification (LICQ) holds for all the feasible solutions since there is only a single constraint and the gradient of the constraint is not $\mathbf{0}$ when the inequality constraint is active. Therefore, for any nontrivial stationary point $\mathbf{P}^{\star}$, there exists a Lagrange multiplier $\lambda^{\star}$, together with $\mathbf{P}^{\star}$, satisfying the the KKT condition of problem \eqref{wsr_total_problem} as follows.
\begin{subequations}
\begin{align}
&\alpha_{k} \nabla_{\mathbf{P}_{k}} R_{k}+\sum_{i \neq k}^{K} \alpha_{i} \nabla_{\mathbf{P}_{k}} R_{i}-\lambda^{\star} \mathbf{P}_{k}^{\star}=\mathbf{0}, \forall k,\label{reduced_kkt}
\\
&\left(\sum_{k=1}^{K} \operatorname{Tr}\left(\mathbf{P}^{\star}_{k} (\mathbf{P}^{\star}_{k})^{H}\right) - P_{\max}\right)\cdot \lambda^{\star} =0.\label{complementary_slackness}\\
&\sum_{k=1}^{K} \operatorname{Tr}\left(\mathbf{P}^{\star}_{k} (\mathbf{P}^{\star}_{k})^{H}\right) \leq P_{\max},  \label{wsr_total_constraints_redefine}\\
&\lambda^{\star} \geq 0,\label{dual_feasibility}
\end{align}
\end{subequations}
where \eqref{reduced_kkt} is the first-order optimality conditions with respect to the precoders. \eqref{complementary_slackness} is the complementary slackness condition; \eqref{wsr_total_constraints_redefine} and \eqref{dual_feasibility} are the primal and dual feasibility conditions, respectively.

Next, we prove $\lambda^{\star} > 0$ by contradiction. Assume the contrary that $\lambda^{\star} =0$.
For the convenience of gradient derivation, we rewrite the achievable rate of user $k$ in \eqref{single_user_rate} as
\begin{equation}\label{Rk_unfold}
\begin{aligned}
R_{k}
=&\log \operatorname{det}\left(\sum_{j=1}^{K} \mathbf{H}_{k} \mathbf{P}_{j} \mathbf{P}_{j}^{H} \mathbf{H}_{k}^{H} + \sigma_{k}^{2}\mathbf{I}\right)\\
&-\log \operatorname{det}\left(\sum_{j \neq k}^{K} \mathbf{H}_{k} \mathbf{P}_{j} \mathbf{P}_{j}^{H} \mathbf{H}_{k}^{H}+\sigma_{k}^{2}\mathbf{I}\right).%~ \forall k.
\end{aligned}
\end{equation}
Define $\nabla_{\mathbf{P}_{k} }\mathcal{L}=\frac{ \partial \mathcal{L} }{ \partial \mathbf{P}_{k}^{\ast} }$
as the complex gradient operator. Then based on the result in \cite{petersen2008matrix}, i.e., $\nabla\log\det \mathbf{X}=\tr\left(\mathbf{X}^{-1}\nabla\mathbf{X}\right)$, we calculate the gradient of $R_{k}$ with respect to $\mathbf{P}_k$, yielding
\begin{equation}\label{gradient_P_k_R_k}
\nabla_{\mathbf{P}_{k} }R_{k}=\mathbf{H}_{k}^{H}\mathbf{Z}_{kk}, 
\end{equation}
where
% \begin{equation}\label{Rk_unfold_gradient1}
% \nabla_{\mathbf{P}_{k} }R_{k}=\mathbf{H}_{k}^{H}\left(\mathbf{I}+\sum_{j=1}^{K} \mathbf{H}_{k} \mathbf{P}_{j} \mathbf{P}_{j}^{H} \mathbf{H}_{k}^{H}\right)^{-1} \mathbf{H}_{k} \mathbf{P}_{k},
% \end{equation}
% \begin{equation}\label{Rk_unfold_gradient2}
% \begin{aligned}
% \nabla_{\mathbf{P}_{k} }R_{i}=&\mathbf{H}_{i}^{H}\left(\mathbf{I}+\sum_{j=1}^{K} \mathbf{H}_{i} \mathbf{P}_{j} \mathbf{P}_{j}^{H} \mathbf{H}_{i}^{H}\right)^{-1} \mathbf{H}_{i} \mathbf{P}_{k}\\
% &-\mathbf{H}_{i}^{H}\left(\mathbf{I}+\sum_{j\neq i}^{K} \mathbf{H}_{i} \mathbf{P}_{j} \mathbf{P}_{j}^{H} \mathbf{H}_{i}^{H}\right)^{-1} \mathbf{H}_{i} \mathbf{P}_{k}.
% \end{aligned}
% \end{equation}
\begin{equation}\label{define_Zkk}
\mathbf{Z}_{kk}\triangleq \left(\sum_{j=1}^{K} \mathbf{H}_{k} \mathbf{P}_{j} \mathbf{P}_{j}^{H} \mathbf{H}_{k}^{H}+\sigma_{k}^{2}\mathbf{I}\right)^{-1} \mathbf{H}_{k} \mathbf{P}_{k}.
\end{equation}
Further, by taking the gradient of $R_{i}, i \neq k$, with respect to $\mathbf{P}_k$, we have
\begin{equation}\label{gradient_P_k_R_i}
\nabla_{\mathbf{P}_{k} }R_{i}=\mathbf{H}_{i}^{H}\mathbf{Z}_{ik}, 
\end{equation}
where
\begin{equation}\label{define_Zik}
\begin{aligned}
\mathbf{Z}_{ik}\triangleq & \left(\sum_{j=1}^{K} \mathbf{H}_{i} \mathbf{P}_{j} \mathbf{P}_{j}^{H} \mathbf{H}_{i}^{H}+\sigma_{i}^{2}\mathbf{I}\right)^{-1} \mathbf{H}_{i} \mathbf{P}_{k}\\
&-\left(\sum_{j\neq i}^{K} \mathbf{H}_{i} \mathbf{P}_{j} \mathbf{P}_{j}^{H} \mathbf{H}_{i}^{H}+\sigma_{i}^{2}\mathbf{I}\right)^{-1} \mathbf{H}_{i} \mathbf{P}_{k}.
\end{aligned}
\end{equation}

Left-multiplying \eqref{reduced_kkt} by $(\mathbf{P}_k^{\star})^H$ yields
\begin{equation}\label{eq:left42}
\alpha_{k} (\mathbf{P}_k^{\star})^H\nabla_{\mathbf{P}_{k}} R_{k}+\sum_{i \neq k}^{K} (\mathbf{P}_k^{\star})^H\alpha_{i} \nabla_{\mathbf{P}_{k}} R_{i}=\mathbf{0},~\forall k,
\end{equation}
where we have used the assumption $\lambda^{\star}=0$. Summing \eqref{eq:left42} over $k=1,2,\ldots,K$ and rearranging the terms, we further obtain \eqref{eq_ap_proof_summing} in the top of the next page.
 
 \begin{figure*}[ht] %hb代表放在文章底部，%ht为放在文章顶部 
 	\centering
 	\begin{equation}\label{eq_ap_proof_summing}
 	\begin{aligned}
 	\sum_{k=1}^{K}\sum_{i=1}^{K}\alpha_k(\mathbf{P}_i^{\star})^H\mathbf{H}_{k}^{H} \left(\sum_{j=1}^{K} \mathbf{H}_{k} \mathbf{P}_{j}^{\star} (\mathbf{P}_j^{\star})^H \mathbf{H}_{k}^{H}+\sigma_{k}^{2}\mathbf{I}\right)^{-1}& \mathbf{H}_{k} \mathbf{P}_{i}^{\star}\\
=\sum_{k=1}^{K}&\sum_{i\neq k}^{K}\alpha_k(\mathbf{P}_{i}^{\star})^{H}\mathbf{H}_{k}^{H} \left(\sum_{j\neq k}^{K} \mathbf{H}_{k} \mathbf{P}_{j}^{\star} (\mathbf{P}_{j}^{\star})^{H} \mathbf{H}_{k}^{H}+\sigma_{k}^{2}\mathbf{I}\right)^{-1} \mathbf{H}_{k} \mathbf{P}_{i}^{\star}.
% \vspace{-0.2cm}
\end{aligned}
 	\end{equation}%\vspace{-1cm}
%  	\vspace{-0.3cm}
 	\hrulefill
 \end{figure*}

Taking trace on both sides of \eqref{eq_ap_proof_summing} and moreover utilizing the identities $\tr(\mathbf{A}\mathbf{B})=\tr(\mathbf{B}\mathbf{A})$ and $\tr(\mathbf{A}(\mathbf{A}+\mathbf{I})^{-1})=\tr(\mathbf{I})-\tr(\mathbf{A}+\mathbf{I})^{-1}$, we have
\begin{equation}\label{eq_derevation_lemma2}
\begin{aligned}
&\sum_{k=1}^{K}\alpha_k \sigma_{k}^{2}\tr\left(\sum_{j=1}^{K} \mathbf{H}_{k} \mathbf{P}_{j}^{\star} (\mathbf{P}_{j}^{\star})^{H} \mathbf{H}_{k}^{H}+\sigma_{k}^{2}\mathbf{I}\right)^{-1} \\
=&\sum_{k=1}^{K}\alpha_k \sigma_{k}^{2}\tr\left(\sum_{j\neq k}^{K} \mathbf{H}_{k} \mathbf{P}_{j}^{\star} (\mathbf{P}_{j}^{\star})^{H} \mathbf{H}_{k}^{H}+\sigma_{k}^{2}\mathbf{I}\right)^{-1}.
\end{aligned}
\end{equation}
Note that the above equation \eqref{eq_derevation_lemma2} holds if and only if $\mathbf{H}_k\mathbf{P}_k^{\star}=\mathbf{0},\forall k$, which contradicts the fact that $\mathbf{P}^{\star}$ is a nontrivial stationary point. Therefore, we conclude that, for any nontrivial stationary point $\mathbf{P}^{\star}$, the corresponding $\lambda^{\star}$ must be positive.

Conversely, we will prove that if the Lagrange multiplier $\lambda^{\star}>0$, then the corresponding $\mathbf{P}^{\star}$ must be a nontrivial stationary point. We can prove this by contraction. Assume for the contrary that $\mathbf{P}^{\star}$ is a trivial stationary point, i.e., such that $\mathbf{H}_k\mathbf{P}_k^{\star}=\mathbf{0},\forall k$. It follows that $\mathbf{Z}_{kk}=0$ and $\mathbf{Z}_{ik}=0, \forall i\neq k$, with which, \eqref{reduced_kkt} can be reduced to $\lambda^{\star} \mathbf{P}_{k}^{\star}=\mathbf{0}, \forall k$. This, together with $\lambda^{\star}>0$, implies $\mathbf{P}^{\star}=\mathbf{0}$, which contradicts with  \eqref{complementary_slackness}. Thus the proof of Lemma  \ref{lemma_multiplier} is finished.
\end{proof}

Based on Lemma \ref{lemma_multiplier}, we prove Proposition \ref{proposition_column_space} as follows. By substituting $\nabla_{\mathbf{P}_{k} }R_{k}=\mathbf{H}_{k}^{H}\mathbf{Z}_{kk}$ and $\nabla_{\mathbf{P}_{k} }R_{i}=\mathbf{H}_{i}^{H}\mathbf{Z}_{ik}$ into \eqref{reduced_kkt} and noting $\lambda^{\star}>0$, we infer from \eqref{reduced_kkt} 
\begin{equation}
\mathbf{P}_{k}^{\star}=\frac{1}{\lambda^{\star}}\left(\alpha_{k} \mathbf{H}_{k}^{H} \mathbf{Z}_{k k}+\sum_{i \neq k}^{K} \alpha_{i} \mathbf{H}_{i}^{H} \mathbf{Z}_{i k}\right),~ \forall k,
\end{equation}
which implies that any nontrivial stationary point $\mathbf{P}_{k}^{\star}$ must lie in the range space of $\mathbf{H}^{H}$. Thus the proof of Proposition \ref{proposition_column_space} is completed.

%%%%%%%%%%%%%%%%%%
\section{EZF Precoding}\label{appendix_ezf}
EZF precoding \cite{sun2010eigen} can effectively cancel multi-user interference and is generally more powerful than ZF. It can be obtained by performing ZF on an equivalent channel based on singular value decomposition (SVD), detailed below. 

To do EZF, we first perform thin SVD on $\mathbf{H}_{k}$ for all user $k$, i.e., $\mathbf{H}_{k}= \check{\mathbf{U}}_{k} \mathbf{\Sigma}_{k} \mathbf{V}_{k}^{H},\ \forall k$, where $\mathbf{\Sigma}_{k} \in \mathbb{C}^{N_k\times N_k}$ is a diagonal matrix with positive singular values sorted in a descending order, $ \check{\mathbf{U}}_{k} \in \mathbb{C}^{N_k\times N_k}$ is a unitary matrix consisting of left singular vectors, and 
$\mathbf{V}_{k} \in \mathbb{C}^{M\times N_k}$ consists of right singular vectors stored in each column. Then from $\mathbf{V}_{k}$ we choose $D_k$ singular vectors corresponding to the first $D_k$ largest singular values of $\mathbf{H}_{k}$, which is denoted by $\tilde{\mathbf{V}}_k \in \mathbb{C}^{M \times D_k }$. Define $\tilde{\mathbf{V}}\triangleq\left[\tilde{\mathbf{V}}_{1}, \tilde{\mathbf{V}}_{2}, \ldots, \tilde{\mathbf{V}}_{K}\right]\in \mathbb{C}^{M \times D }$, then $\tilde{\mathbf{V}}^H$ can be viewed as an equivalent channel. Finally, by performing ZF precoding on the equivalent channel $\tilde{\mathbf{V}}^H$, we obtain the EZF precoding: $\mathbf{P}_{\text{EZF}} =\tilde{\mathbf{V}}\left(\tilde{\mathbf{V}}^{H} \tilde{\mathbf{V}}\right)^{-1}$.
% \begin{equation}\label{EZF_matrix}
%         \mathbf{P}_{\text{EZF}} =\tilde{\mathbf{V}}\left(\tilde{\mathbf{V}}^{H} \tilde{\mathbf{V}}\right)^{-1}.
%         \end{equation}

Since $\mathbf{V}_{k}$ and $\mathbf{H}_{k}^{H}$ have the same range space, thus $R(\tilde{\mathbf{V}}_k)  \subset R(\mathbf{H}_{k}^{H})$ must holds for all $k$. Therefore, we have $R(\tilde{\mathbf{V}}) \subset R(\mathbf{H}^{H})$, which means EZF precoder also follows the low-dimensional subspace property.% in Proposition \ref{proposition_column_space}.

\section{Proof of Proposition \ref{proposition_unconstrained}}\label{appendix_proposition_unconstrained}
We first show the sufficiency. Let $\mathbf{P}^{\ddagger}=[\mathbf{P}_{1}^{\ddagger},\mathbf{P}_{2}^{\ddagger},\ldots,\mathbf{P}_{K}^{\ddagger}]$
be any stationary point of problem \eqref{unconstrained_WSR_problem} and define $\mathbf{P}_{k}^{\star}\triangleq \sqrt{\omega}\mathbf{P}_{k}^{\ddagger},\forall k$ with $\omega=\frac{P_{\max}}{\sum_{k=1}^{K}\operatorname{Tr}\left(\mathbf{P}_{k}^{\ddagger} (\mathbf{P}_{k}^{\ddagger})^{H}\right) }$. Then the sufficiency is to show that the point $\mathbf{P}^{\star}=[\mathbf{P}_{1}^{\star},\mathbf{P}_{2}^{\star},\ldots,\mathbf{P}_{K}^{\star}]$ is a nontrivial stationary point of problem \eqref{wsr_total_problem}. It is proven below by comparing the KKT conditions of the two problems.

% \begin{equation}\label{phi_1}
% \begin{aligned}
%  \phi_{1}\left(\mathbf{P}\right) \triangleq \sum_{k=1}^{K} \alpha_{k}& \log \operatorname{det}\Bigg(\mathbf{I}+\mathbf{H}_{k} \mathbf{P}_{k} \mathbf{P}_{k}^{H} \mathbf{H}_{k}^{H}  \\
% &\times\Big(\sum_{j \neq k} \mathbf{H}_{k} \mathbf{P}_{j} \mathbf{P}_{j}^{H} \mathbf{H}_{k}^{H}+\sigma_{k}^{2} \mathbf{I}\Big)^{-1}\Bigg).
% \end{aligned}
% \end{equation}

% \begin{equation}\label{phi_1}
% \begin{aligned}
%       &\phi_{1}\left(\mathbf{P}\right) \triangleq  \sum_{k=1}^{K} \alpha_{k}\log \operatorname{det}\Bigg(\mathbf{I}+ \mathbf{H}_{k} \mathbf{P}_{k} \mathbf{P}_{k}^{H} \mathbf{H}_{k}^{H}\\
%       &\ \ \Big( \sum_{j\neq k} \mathbf{H}_{k} \mathbf{P}_{j} \mathbf{P}_{j}^{H} \mathbf{H}_{k}^{H}+\frac{\sigma_{k}^{2}}{P_{\max}}\sum_{i=1}^{K} \operatorname{Tr}( \mathbf{P}_{i} \mathbf{P}_{i}^{H}) \mathbf{I}\Big)^{-1}\Bigg).
% \end{aligned}
% \end{equation}

First, since $\mathbf{P}^{\ddagger}$ is a stationary point of the unconstrained problem \eqref{unconstrained_WSR_problem}, we have
\begin{equation}\label{phi_1_P_grad}
\begin{aligned}
      \alpha_{k}\nabla_{\mathbf{P}_{k} }\tilde{R}_k(\mathbf{P}^{\ddagger}) + \sum_{i\neq k}\alpha_{i}\nabla_{\mathbf{P}_{k} }\tilde{R}_i(\mathbf{P}^{\ddagger})=\mathbf{0}, \forall k,
\end{aligned}
\end{equation}
where
\begin{equation}\label{R_k_tilde}
\begin{aligned}
      &\tilde{R}_k(\mathbf{P})  \triangleq   \log \operatorname{det}\Bigg(\mathbf{I}+ \mathbf{H}_{k} \mathbf{P}_{k} \mathbf{P}_{k}^{H} \mathbf{H}_{k}^{H}\\
       &\ \ \Big( \sum_{j\neq k} \mathbf{H}_{k} \mathbf{P}_{j} \mathbf{P}_{j}^{H} \mathbf{H}_{k}^{H}+\frac{\sigma_{k}^{2}}{P_{\max}}\sum_{i=1}^{K} \operatorname{Tr}( \mathbf{P}_{i} \mathbf{P}_{i}^{H}) \mathbf{I}\Big)^{-1}\Bigg), \forall k.
\end{aligned}
\end{equation}

Next, let us figure out the gradients above. Note that $\tilde{R}_k(\mathbf{P})  =  \log \operatorname{det}\left(\mathbf{D}_k\right){-}\log \operatorname{det}\left(\mathbf{F}_k\right)$ with\footnote{$\mathbf{D}_k$ is a matrix function of $\mathbf{P}$ but we drop the argument for the convenience of notation. Similarly for $\mathbf{F}_k$.}
\begin{equation}
\mathbf{D}_k \triangleq \Big( \sum_{j= 1}^{K} \mathbf{H}_{k} \mathbf{P}_{j} \mathbf{P}_{j}^{H} \mathbf{H}_{k}^{H}+\frac{\sigma_{k}^{2}}{P_{\max}}\sum_{i=1}^{K} \operatorname{Tr}(\mathbf{P}_{i} \mathbf{P}_{i}^{H}) \mathbf{I}\Big),
\end{equation}
\begin{equation}
\mathbf{F}_k \triangleq \Big( \sum_{j\neq k} \mathbf{H}_{k} \mathbf{P}_{j} \mathbf{P}_{j}^{H} \mathbf{H}_{k}^{H}+\frac{\sigma_{k}^{2}}{P_{\max}}\sum_{i=1}^{K} \operatorname{Tr}( \mathbf{P}_{i} \mathbf{P}_{i}^{H}) \mathbf{I}\Big).
\end{equation}
Taking the gradient of $\tilde{R}_k$ with respect to $\mathbf{P}_k$ yields
\begin{equation}\label{R_k_tilde_gradient1}
\begin{aligned}
\nabla_{\mathbf{P}_{k} }\tilde{R}_k(\mathbf{P}) =&\mathbf{H}_{k}^{H}\mathbf{D}_{k}^{-1}\mathbf{H}_{k}\mathbf{P}_{k} + \frac{\sigma_{k}^{2}}{P_{\max}}  \operatorname{Tr}(\mathbf{D}_{k}^{-1})\mathbf{P}_{k}\\
&-\frac{\sigma_{k}^{2}}{P_{\max}}  \operatorname{Tr}(\mathbf{F}_{k}^{-1})\mathbf{P}_{k}.
\end{aligned}
\end{equation}
Further, by taking the gradient of $\tilde{R}_{i}, i \neq k$ with respect to $\mathbf{P}_k$, we have
\begin{equation}\label{R_k_tilde_gradient2}
\begin{aligned}
\nabla_{\mathbf{P}_{k} }\tilde{R}_i(\mathbf{P}) =&\left(\mathbf{H}_{i}^{H}\mathbf{D}_{i}^{-1}\mathbf{H}_{i}\mathbf{P}_{k} + \frac{\sigma_{i}^{2}}{P_{\max}}  \operatorname{Tr}(\mathbf{D}_{i}^{-1})\mathbf{P}_{k}\right)\\
 &-\left(\mathbf{H}_{i}^{H}\mathbf{F}_{i}^{-1}\mathbf{H}_{i}\mathbf{P}_{k} + \frac{\sigma_{i}^{2}}{P_{\max}}  \operatorname{Tr}(\mathbf{F}_{i}^{-1})\mathbf{P}_{k}\right).
\end{aligned}
\end{equation}

Note that for any $t>0$ we have $ \nabla_{\mathbf{P}_{k} }\tilde{R}_i(t\mathbf{P})=\frac{1}{t}\nabla_{\mathbf{P}_{k} }\tilde{R}_i(\mathbf{P}), i \neq k$ and $\nabla_{\mathbf{P}_{k} }\tilde{R}_k(t\mathbf{P})=\frac{1}{t}\nabla_{\mathbf{P}_{k} }\tilde{R}_k(\mathbf{P})$. By using this fact, \eqref{phi_1_P_grad} is equivalent to
\begin{equation}\label{phi_1_P_grad_recast}
%\begin{aligned}
      \alpha_{k}\nabla_{\mathbf{P}_{k} }\tilde{R}_k(\sqrt{\omega}\mathbf{P}^{\ddagger}) {+} \sum_{i\neq k}\alpha_{i}\nabla_{\mathbf{P}_{k} }\tilde{R}_i(\sqrt{\omega}\mathbf{P}^{\ddagger}){=}\mathbf{0}, \forall k.
%\end{aligned}
\end{equation}
Further, by noting  $\sum_{k=1}^{K}\operatorname{Tr}\left(\sqrt{\omega}\mathbf{P}_{k}^{\ddagger} (\sqrt{\omega}\mathbf{P}_{k}^{\ddagger})^{H}\right) =P_{\max}$ and $\mathbf{P}_{k}^{\star}=\sqrt{\omega}\mathbf{P}_{k}^{\ddagger},\forall k$, 
we can recast \eqref{phi_1_P_grad_recast} as  
\begin{equation}\label{X_k_expression}
\alpha_{k} \mathbf{H}_{k}^{H}\mathbf{Z}_{kk} +\sum_{i\neq k}^{K}\alpha_{i} \mathbf{H}_{i}^{H}\mathbf{Z}_{ik} - \gamma \mathbf{P}_k^{\star} = \mathbf{0}, \forall k,
\end{equation}
where $\mathbf{Z}_{kk}$ and $\mathbf{Z}_{ik}$ are respectively given in \eqref{define_Zkk} and \eqref{define_Zik} with thereof $\mathbf{P}$ replaced by $\mathbf{P}^{\star}$, and $\gamma = \sum_{i=1}^{K}\frac{\alpha_{i}\sigma_{i}^{2}}{P_{\max}}\left(\operatorname{Tr}(\mathbf{F}_{i}^{-1})-\operatorname{Tr}(\mathbf{D}_{i}^{-1})\right)>0$. Note that \eqref{X_k_expression} implies that $\mathbf{P}^{\star}$ satisfies \eqref{reduced_kkt} with $\gamma>0$ being the Lagrange multiplier. Moreover, complementary slackness condition \eqref{complementary_slackness} follows because $\mathbf{P}^{\star}$ meets the power constraints with equality. Therefore, $\mathbf{P}^{\star}$ satisfies the KKT conditions of the original WSR maximization problem \eqref{wsr_total_problem} and moreover it is a nontrivial stationary point of \eqref{wsr_total_problem}. This completes the proof of sufficiency. The necessity can be proven by reversing the steps of the sufficiency proof.

%%%%%%%%%%%%%%%%%%
\section{Proof of Theorem \ref{th_convergence_result} }\label{appendix_theorem2}
Since the objective function of the unconstrained problem \eqref{transform_problem_X} is continuously differentiable and convex in each block variable. Following the classic optimization theory of BCD method \cite{bertsekas1999nonlinear}, the proposed R-WMMSE algorithm is guaranteed to converge to a stationary point of \eqref{transform_problem_X}. Let $\left(\mathbf{U}^{\star},\mathbf{W}^{\star},\mathbf{X}^{\star}\right)$ be any stationary point of \eqref{transform_problem_X}. Then it remains to show that $\mathbf{P}^{\star}=\sqrt{\beta}\mathbf{H}^{H}\mathbf{X}^{\star}$, where $\beta=\frac{P_{\max}}{\sum_{k=1}^{K}\operatorname{Tr}\left(\bar{\mathbf{H}}\mathbf{X}_{k}^{\star} (\mathbf{X}_{k}^{\star})^{H}\right)}$, is a nontrivial stationary point of \eqref{wsr_total_problem}.

% \begin{equation}\label{phi_3}
% \begin{aligned}
%       &\phi_{2}\left(\mathbf{X}\right) \triangleq  \sum_{k=1}^{K} \alpha_{k}\log \operatorname{det}\Bigg(\mathbf{I}+ \bar{\mathbf{H}}_{k} \mathbf{X}_{k} \mathbf{X}_{k}^{H} \bar{\mathbf{H}}_{k}^{H}
%       \\ & \ \Big( \sum_{j\neq k} \bar{\mathbf{H}}_{k} \mathbf{X}_{j} \mathbf{X}_{j}^{H} \bar{\mathbf{H}}_{k}^{H}+\frac{\sigma_{k}^{2}}{P_{\max}}\sum_{k=1}^{K} \operatorname{Tr}( \bar{\mathbf{H}}\mathbf{X}_{k} \mathbf{X}_{k}^{H}) \mathbf{I}\Big)^{-1}\Bigg).
% \end{aligned}
% \end{equation}
% \begin{equation}\label{phi_4}
%     \phi_{3}\left(\mathbf{U},\mathbf{W},\mathbf{X}\right) \triangleq \sum_{k=1}^{K}\alpha_{k}\left(\operatorname{Tr}\left(\mathbf{W}_{k} \mathbf{E}^{\prime}_{k}\right)-\log \operatorname{det}\left(\mathbf{W}_{k}\right)\right),
% \end{equation}

Let $\phi_{1}\left(\mathbf{X}\right)$ and $\phi_{2}\left(\mathbf{U},\mathbf{W},\mathbf{X}\right)$ denote the objective functions of problems \eqref{unconstrained_WSR_problem_X} and \eqref{transform_problem_X}, respectively. Since $\left(\mathbf{U}^{\star},\mathbf{W}^{\star},\mathbf{X}^{\star}\right)$ is a stationary point of  \eqref{transform_problem_X}, we have
\begin{subequations}
\begin{align}
\label{Uk_stationary_condition}
&\nabla_{\mathbf{U}_{k}} \phi_{2}\left( \mathbf{U}^{\star},\mathbf{W}^{\star}, \mathbf{X}^{\star}\right) = \mathbf{0}, ~ \forall k,\\
\label{Wk_stationary_condition}
&\nabla_{\mathbf{W}_{k}} \phi_{2}\left( \mathbf{U}^{\star},\mathbf{W}^{\star}, \mathbf{X}^{\star}\right) = \mathbf{0},~  \forall k,\\
\label{Xk_stationary_condition}
&\nabla_{\mathbf{X}_{k}} \phi_{2}\left( \mathbf{U}^{\star},\mathbf{W}^{\star}, \mathbf{X}^{\star}\right) = \mathbf{0}, ~ \forall k.
\end{align}
\end{subequations}
\eqref{Uk_stationary_condition} and \eqref{Wk_stationary_condition} immediately implies that ${\mathbf{U}_{k}^{\star}}$ and ${\mathbf{W}_{k}^{\star}}$ must satisfy \eqref{U_update} and \eqref{W_update}, respectively. Further combining \eqref{Xk_stationary_condition} with Lemma \ref{thm_transform} (which establishes the relation between $\phi_{1}$ and $\phi_{2}$) yields
\begin{equation}\label{phi_3_X_grad}
\nabla_{\mathbf{X}_{k}} \phi_{1}\left( \mathbf{X}^{\star}\right) = \nabla_{\mathbf{X}_{k}} \phi_{2}\left( \mathbf{U}^{\star},\mathbf{W}^{\star}, \mathbf{X}^{\star}\right)=\mathbf{0},~  \forall k.
\end{equation}
Furthermore, by noting $\mathbf{P}^{\star}=\sqrt{\beta}\mathbf{H}^{H}\mathbf{X}^{\star}$ with $\beta=\frac{P_{\max}}{\sum_{k=1}^{K}\operatorname{Tr}\left(\bar{\mathbf{H}}\mathbf{X}_{k}^{\star} (\mathbf{X}_{k}^{\star})^{H}\right)}$ and  using similar idea for the proof of Proposition \ref{proposition_reduced_unconstrained}, we can recast $\eqref{phi_3_X_grad}$ as \eqref{reduced_kkt}. This completes the proof.

\bibliographystyle{IEEEtran}
\bibliography{to_bibitem}

% Generated by IEEEtran.bst, version: 1.14 (2015/08/26)
\begin{thebibliography}{10}
\providecommand{\url}[1]{#1}
\csname url@samestyle\endcsname
\providecommand{\newblock}{\relax}
\providecommand{\bibinfo}[2]{#2}
\providecommand{\BIBentrySTDinterwordspacing}{\spaceskip=0pt\relax}
\providecommand{\BIBentryALTinterwordstretchfactor}{4}
\providecommand{\BIBentryALTinterwordspacing}{\spaceskip=\fontdimen2\font plus
\BIBentryALTinterwordstretchfactor\fontdimen3\font minus
  \fontdimen4\font\relax}
\providecommand{\BIBforeignlanguage}[2]{{%
\expandafter\ifx\csname l@#1\endcsname\relax
\typeout{** WARNING: IEEEtran.bst: No hyphenation pattern has been}%
\typeout{** loaded for the language `#1'. Using the pattern for}%
\typeout{** the default language instead.}%
\else
\language=\csname l@#1\endcsname
\fi
#2}}
\providecommand{\BIBdecl}{\relax}
\BIBdecl

\bibitem{zhang2020prospective}
J.~Zhang, E.~Bj{\"o}rnson, M.~Matthaiou, D.~W.~K. Ng, H.~Yang, and D.~J. Love,
  ``Prospective multiple antenna technologies for beyond 5{G},'' \emph{IEEE J.
  Sel. Areas Commun.}, vol.~38, no.~8, pp. 1637--1660, Aug. 2020.

\bibitem{marzetta2016fundamentals}
T.~L. Marzetta and H.~Q. Ngo, \emph{Fundamentals of massive MIMO}.\hskip 1em
  plus 0.5em minus 0.4em\relax Cambridge, U.K.: Cambridge Univ. Press, 2016.

\bibitem{wang2019overview}
M.~Wang, F.~Gao, S.~Jin, and H.~Lin, ``An overview of enhanced massive {MIMO}
  with array signal processing techniques,'' \emph{IEEE J. Sel. Topics Signal
  Process.}, vol.~13, no.~5, pp. 886--901, Sep. 2019.

\bibitem{de2022overview}
F.~A.~P. de~Figueiredo, ``An overview of massive {MIMO} for 5{G} and 6{G},''
  \emph{IEEE Latin America Trans.}, vol.~20, no.~6, pp. 931--940, Apr. 2022.

\bibitem{luo2008dynamic}
Z.-Q. Luo and S.~Zhang, ``Dynamic spectrum management: Complexity and
  duality,'' \emph{IEEE J. Sel. Topics Signal Process.}, vol.~2, no.~1, pp.
  57--73, Feb. 2008.

\bibitem{liu2011coordinated}
Y.-F. Liu, Y.-H. Dai, and Z.-Q. Luo, ``Coordinated beamforming for {MISO}
  interference channel: Complexity analysis and efficient algorithms,''
  \emph{IEEE Trans. Signal Process.}, vol.~59, no.~3, pp. 1142--1157, Mar.
  2011.

\bibitem{joshi2012weighted}
S.~K. Joshi, P.~C. Weeraddana, M.~Codreanu, and M.~Latva-Aho, ``Weighted
  sum-rate maximization for {MISO} downlink cellular networks via branch and
  bound,'' \emph{IEEE Trans. Signal Process.}, vol.~60, no.~4, pp. 2090--2095,
  Apr. 2012.

\bibitem{liu2012achieving}
L.~Liu, R.~Zhang, and K.-C. Chua, ``Achieving global optimality for weighted
  sum-rate maximization in the {K}-user gaussian interference channel with
  multiple antennas,'' \emph{IEEE Trans. on Wireless Commun.}, vol.~11, no.~5,
  pp. 1933--1945, May 2012.

\bibitem{kammoun2014linear}
A.~Kammoun, A.~M{\"u}ller, E.~Bj{\"o}rnson, and M.~Debbah, ``Linear precoding
  based on polynomial expansion: Large-scale multi-cell {MIMO} systems,''
  \emph{IEEE J. Sel. Topics Signal Process.}, vol.~8, no.~5, pp. 861--875, Oct.
  2014.

\bibitem{gao2011linear}
X.~Gao, O.~Edfors, F.~Rusek, and F.~Tufvesson, ``Linear pre-coding performance
  in measured very-large {MIMO} channels,'' in \emph{Proc. IEEE Veh. Technol.
  Conf.}, 2011, pp. 1--5.

\bibitem{nguyen2019multi}
L.~D. Nguyen, H.~D. Tuan, T.~Q. Duong, and H.~V. Poor, ``Multi-user regularized
  zero-forcing beamforming,'' \emph{IEEE Trans. Signal Process.}, vol.~67,
  no.~11, pp. 2839--2853, Jun. 2019.

\bibitem{shi2009monotonic}
C.~Shi, R.~A. Berry, and M.~L. Honig, ``Monotonic convergence of distributed
  interference pricing in wireless networks,'' in \emph{Proc. 2009 IEEE Int.
  Symp. Inf. Theory}, 2009, pp. 1619--1623.

\bibitem{kim2011optimal}
S.-J. Kim and G.~B. Giannakis, ``Optimal resource allocation for {MIMO} ad hoc
  cognitive radio networks,'' \emph{IEEE Trans. Inf. Theory}, vol.~57, no.~5,
  pp. 3117--3131, May 2011.

\bibitem{ng2010linear}
C.~T. Ng and H.~Huang, ``Linear precoding in cooperative {MIMO} cellular
  networks with limited coordination clusters,'' \emph{IEEE J. Sel. Areas
  Commun.}, vol.~28, no.~9, pp. 1446--1454, Dec. 2010.

\bibitem{tran2012fast}
L.-N. Tran, M.~F. Hanif, A.~Tolli, and M.~Juntti, ``Fast converging algorithm
  for weighted sum rate maximization in multicell {MISO} downlink,'' \emph{IEEE
  Signal Process. Lett.}, vol.~19, no.~12, pp. 872--875, Dec. 2012.

\bibitem{nguyen2014sum}
D.~H. Nguyen and T.~Le-Ngoc, ``Sum-rate maximization in the multicell {MIMO}
  multiple-access channel with interference coordination,'' \emph{IEEE Trans.
  Wireless Commun.}, vol.~13, no.~1, pp. 36--48, Jan. 2014.

\bibitem{christensen2008TWC}
S.~S. Christensen, R.~Agarwal, E.~De~Carvalho, and J.~M. Cioffi, ``Weighted
  sum-rate maximization using weighted {MMSE} for {MIMO-BC} beamforming
  design,'' \emph{IEEE Trans. Wireless Commun.}, vol.~7, no.~12, pp.
  4792--4799, Dec. 2008.

\bibitem{shi2011iteratively}
Q.~Shi, M.~Razaviyayn, Z.-Q. Luo, and C.~He, ``An iteratively weighted {MMSE}
  approach to distributed sum-utility maximization for a {MIMO} interfering
  broadcast channel,'' \emph{IEEE Trans. Signal Process.}, vol.~59, no.~9, pp.
  4331--4340, Sep. 2011.

\bibitem{sun2018learning}
H.~Sun, X.~Chen, Q.~Shi, M.~Hong, X.~Fu, and N.~D. Sidiropoulos, ``Learning to
  optimize: Training deep neural networks for interference management,''
  \emph{IEEE Trans. Signal Process.}, vol.~66, no.~20, pp. 5438--5453, Oct.
  2018.

\bibitem{zhao2021exploiting}
M.-M. Zhao, Q.~Wu, M.-J. Zhao, and R.~Zhang, ``Exploiting amplitude control in
  intelligent reflecting surface aided wireless communication with imperfect
  {CSI},'' \emph{IEEE Trans. Commun.}, vol.~69, no.~6, pp. 4216--4231, Jun.
  2021.

\bibitem{ghauch2017sum}
H.~Ghauch, T.~Kim, M.~Bengtsson, and M.~Skoglund, ``Sum-rate maximization in
  sub-28-{GH}z millimeter-wave {MIMO} interfering networks,'' \emph{IEEE J.
  Sel. Areas Commun.}, vol.~35, no.~7, pp. 1649--1662, Jul. 2017.

\bibitem{yu2007transmitter}
W.~Yu and T.~Lan, ``Transmitter optimization for the multi-antenna downlink
  with per-antenna power constraints,'' \emph{IEEE Trans. Signal Process.},
  vol.~55, no.~6, pp. 2646--2660, Jun. 2007.

\bibitem{lee2013zero}
S.-R. Lee, J.-S. Kim, S.-H. Moon, H.-B. Kong, and I.~Lee, ``Zero-forcing
  beamforming in multiuser {MISO} downlink systems under per-antenna power
  constraint and equal-rate metric,'' \emph{IEEE Trans. Wireless Commun.},
  vol.~12, no.~1, pp. 228--236, Jan. 2013.

\bibitem{li2015parallel}
B.~Li, C.~Z. Wu, H.~H. Dam, A.~Cantoni, and K.~L. Teo, ``A parallel low
  complexity zero-forcing beamformer design for multiuser {MIMO} systems via a
  regularized dual decomposition method,'' \emph{IEEE Trans. Signal Process},
  vol.~63, no.~16, pp. 4179--4190, Aug. 2015.

\bibitem{jang2015low}
J.~Jang, S.-W. Jeon, H.~Chae, H.-S. Cha, and D.~K. Kim, ``Low complexity zero
  forcing precoder design under per-antenna power constraints,'' \emph{IEEE
  Commun. Lett.}, vol.~19, no.~9, pp. 1556--1559, Sep. 2015.

\bibitem{pham2017efficient}
T.~M. Pham, R.~Farrell, J.~Dooley, E.~Dutkiewicz, D.~N. Nguyen, and L.-N. Tran,
  ``Efficient zero-forcing precoder design for weighted sum-rate maximization
  with per-antenna power constraint,'' \emph{IEEE Trans. Veh. Technol.},
  vol.~67, no.~4, pp. 3640--3645, Apr. 2017.

\bibitem{Mao2019rate}
Y.~Mao, B.~Clerckx, and V.~O. Li, ``Rate-splitting multiple access for
  coordinated multi-point joint transmission,'' in \emph{Proc. IEEE Int. Conf.
  Commun. Workshops}, 2019, pp. 1--6.

\bibitem{shi2008per}
S.~Shi, M.~Schubert, and H.~Boche, ``Per-antenna power constrained rate
  optimization for multiuser {MIMO} systems,'' in \emph{Proc. ITG Workshop
  Smart Antennas}, 2008, pp. 270--277.

\bibitem{thomas2018hybrid}
C.~K. Thomas and D.~Slock, ``Hybrid beamforming design in multi-cell {MU-MIMO}
  systems with per-{RF} or per-antenna power constraints,'' in \emph{Proc. 88th
  IEEE Veh. Technol. Conf.}, 2018, pp. 1--6.

\bibitem{shi2015secure}
Q.~Shi, W.~Xu, J.~Wu, E.~Song, and Y.~Wang, ``Secure beamforming for {MIMO}
  broadcasting with wireless information and power transfer,'' \emph{IEEE
  Trans. Wireless Commun.}, vol.~14, no.~5, pp. 2841--2853, May 2015.

\bibitem{parfait2014performance}
T.~Parfait, Y.~Kuang, and K.~Jerry, ``Performance analysis and comparison of
  {ZF} and {MRT} based downlink massive {MIMO} systems,'' in \emph{Proc. 6th
  Int. Conf. Ubiquitous Future Netw.}, Jul. 2014, pp. 383--388.

\bibitem{peel2005vector}
C.~B. Peel, B.~M. Hochwald, and A.~L. Swindlehurst, ``A vector-perturbation
  technique for near-capacity multiantenna multiuser communication-part {I}:
  channel inversion and regularization,'' \emph{IEEE Trans. Commun.}, vol.~53,
  no.~1, pp. 195--202, Jan. 2005.

\bibitem{sun2010eigen}
L.~Sun and M.~R. McKay, ``Eigen-based transceivers for the {MIMO} broadcast
  channel with semi-orthogonal user selection,'' \emph{IEEE Trans. Signal
  Process.}, vol.~58, no.~10, pp. 5246--5261, Oct. 2010.

\bibitem{petersen2008matrix}
K.~B. Petersen and M.~S. Pedersen, \emph{The Matrix Cookbook}.\hskip 1em plus
  0.5em minus 0.4em\relax Lyngby, Denmark: Technical University of Denmark,
  2008.

\bibitem{wang2014outage}
K.-Y. Wang, A.~M.-C. So, T.-H. Chang, W.-K. Ma, and C.-Y. Chi, ``Outage
  constrained robust transmit optimization for multiuser {MISO} downlinks:
  Tractable approximations by conic optimization,'' \emph{IEEE Trans. Signal
  Process.}, vol.~62, no.~21, pp. 5690--5705, Sep. 2014.

\bibitem{dahrouj2010coordinated}
H.~Dahrouj and W.~Yu, ``Coordinated beamforming for the multicell multi-antenna
  wireless system,'' \emph{IEEE Trans. Wireless Commun.}, vol.~9, no.~5, pp.
  1748--1759, May 2010.

\bibitem{bertsekas1999nonlinear}
D.~P. Bertsekas, \emph{Nonlinear programming}.\hskip 1em plus 0.5em minus
  0.4em\relax Cambridge, MA, USA: MIT Press, 1999.

\end{thebibliography}


\begin{thebibliography}{10}

\bibitem{mimo1}
T.~L. Marzetta and H.~Q. Ngo, {\em Fundamentals of massive MIMO}.
\newblock Cambridge University Press, 2016.

\bibitem{mimo2}
M.~Wang, F.~Gao, S.~Jin, and H.~Lin, ``An overview of enhanced massive mimo
  with array signal processing techniques,'' {\em IEEE Journal of Selected
  Topics in Signal Processing}, vol.~13, pp.~886--901, Aug 2019.

\bibitem{mimo3}
E.~Bjornson, J.~Hoydis, M.~Kountouris, and M.~Debbah, ``Massive mimo systems
  with non-ideal hardware: Energy efficiency, estimation, and capacity
  limits,'' {\em IEEE Transactions on Information Theory}, vol.~60,
  p.~7112–7139, Nov 2014.

\bibitem{li2017baseband}
K.~Li, R.~R. Sharan, Y.~Chen, T.~Goldstein, J.~R. Cavallaro, and C.~Studer,
  ``Decentralized baseband processing for massive mu-mimo systems,'' {\em IEEE
  Journal on Emerging and Selected Topics in Circuits and Systems}, vol.~7,
  pp.~491--507, Dec 2017.

\bibitem{Rodriguez2020}
J.~R. S{\'a}nchez, F.~Rusek, O.~Edfors, M.~Sarajli{\'c}, and L.~Liu,
  ``Decentralized massive mimo processing exploring daisy-chain architecture
  and recursive algorithms,'' {\em IEEE Transactions on Signal Processing},
  vol.~68, pp.~687--700, Jan 2020.

\bibitem{li2019cd}
K.~{Li}, O.~{Castañeda}, C.~{Jeon}, J.~R. {Cavallaro}, and C.~{Studer},
  ``Decentralized coordinate-descent data detection and precoding for massive
  mu-mimo,'' in {\em 2019 IEEE International Symposium on Circuits and Systems
  (ISCAS)}, pp.~1--5, 2019.

\bibitem{Jeon2019feedforward}
C.~Jeon, K.~Li, J.~R. Cavallaro, and C.~Studer, ``Decentralized equalization
  with feedforward architectures for massive mu-mimo,'' {\em IEEE Transactions
  on Signal Processing}, vol.~67, pp.~4418--4432, Jul 2019.

\bibitem{Sanchez2019discuss}
J.~R. {Sánchez}, J.~{Vidal Alegría}, and F.~{Rusek}, ``Decentralized massive
  mimo systems: Is there anything to be discussed?,'' in {\em 2019 IEEE
  International Symposium on Information Theory (ISIT)}, pp.~787--791, 2019.

\bibitem{li2019design}
K.~{Li}, J.~{McNaney}, C.~{Tarver}, O.~{Castañeda}, C.~{Jeon}, J.~R.
  {Cavallaro}, and C.~{Studer}, ``Design trade-offs for decentralized baseband
  processing in massive mu-mimo systems,'' in {\em 2019 53rd Asilomar
  Conference on Signals, Systems, and Computers}, pp.~906--912, 2019.

\bibitem{2016Bertsekas}
M.~S. Bazaraa, {\em Nonlinear Programming: Theory and Algorithms, 3rd Edition
  Set}.
\newblock John Wiley Sons, 2014.

\bibitem{quadriga}
S.~{Jaeckel}, L.~{Raschkowski}, K.~{Börner}, and L.~{Thiele}, ``Quadriga: A
  3-d multi-cell channel model with time evolution for enabling virtual field
  trials,'' {\em IEEE Transactions on Antennas and Propagation}, vol.~62,
  pp.~3242--3256, Mar 2014.

\bibitem{sun2016efficient}
D.~Sun, K.-C. Toh, and L.~Yang, ``An efficient inexact abcd method for least
  squares semidefinite programming,'' {\em SIAM Journal on Optimization},
  vol.~26, p.~1072–1100, May 2016.

\bibitem{Bertsekas1997Distributed}
D.~P. Bertsekas and J.~N. Tsitsiklis, {\em Parallel and distributed
  computation: numerical methods}.
\newblock Athena Scientific, 2014.

\bibitem{Hong2017BCD}
M.~Hong, X.~Wang, M.~Razaviyayn, and Z.-Q. Luo, ``Iteration complexity analysis
  of block coordinate descent methods,'' {\em Mathematical Programming},
  vol.~163, pp.~85--114, May 2017.

\end{thebibliography}

% \newpage

% \section{Biography Section}
% If you have an EPS/PDF photo (graphicx package needed), extra braces are
%  needed around the contents of the optional argument to biography to prevent
%  the LaTeX parser from getting confused when it sees the complicated
%  $\backslash${\tt{includegraphics}} command within an optional argument. (You can create
%  your own custom macro containing the $\backslash${\tt{includegraphics}} command to make things
%  simpler here.)
 
% \vspace{11pt}

% % \bf{If you include a photo:}\vspace{-33pt}
% % \begin{IEEEbiography}[{\includegraphics[width=1in,height=1.25in,clip,keepaspectratio]{fig1}}]{Michael Shell}
% % Use $\backslash${\tt{begin\{IEEEbiography\}}} and then for the 1st argument use $\backslash${\tt{includegraphics}} to declare and link the author photo.
% % Use the author name as the 3rd argument followed by the biography text.
% % \end{IEEEbiography}

% \vspace{11pt}

% \bf{If you will not include a photo:}\vspace{-33pt}
% \begin{IEEEbiographynophoto}{John Doe}
% Use $\backslash${\tt{begin\{IEEEbiographynophoto\}}} and the author name as the argument followed by the biography text.
% \end{IEEEbiographynophoto}

\vfill

\end{document}